\documentclass[a4paper,11pt]{article}
\usepackage{amsmath,amsthm,amssymb}
\usepackage{enumerate}
\usepackage{graphicx}
\usepackage{epsfig}
\usepackage{todonotes}
\usepackage{color,tikz}
\usetikzlibrary{patterns, decorations.markings,arrows, calc}
\usepackage[colorlinks=true, pdfstartview=FitV, linkcolor=blue, citecolor=blue, urlcolor=blue]{hyperref}
%\usepackage{showkeys}
%-------------------------------------------------------------%
\textheight 240mm \textwidth 165mm \topmargin -2.0cm
\oddsidemargin 0.0cm

%-------------------------------------------------------------%
\theoremstyle{plain}
\newtheorem{thm}{Theorem}

\newtheorem*{concs*}{Conclusions}
\newtheorem*{prop*}{Proposition}
\newtheorem{cor}{Corollary}
\newtheorem*{stats*}{Statements}
\newtheorem*{summ*}{Summary}
\newtheorem*{DefRHxt*}{Basic RH problem}
\newtheorem*{EqDefRHxt*}{Transformed RH problem}
%-------------------------------------------------------------%
\theoremstyle{definition}
\newtheorem*{Def*}{Definition}
\newtheorem*{exs*}{Examples}
\newtheorem*{not*}{Notation}
\newtheorem*{nots*}{Notations}
\newtheorem*{rem*}{Remark}
\newtheorem*{rems*}{Remarks}
%-------------------------------------------------------------%

%\renewcommand{\theequation}{\thesection.\arabic{equation}}
%\numberwithin{equation}{section}
%-------------------------------------------------------------%
\providecommand{\B}{\mathbf}

 \providecommand{\D}{\mathbb}
\providecommand{\R}{\mathrm}
\providecommand{\DS}{\displaystyle}
%-------------------------------------------------------------%
\newcommand{\dd}{\mathrm{d}}
\newcommand{\ee}{\mathrm{e}}
\newcommand{\e}{\mathrm{e}}
\newcommand{\ii}{\mathrm{i}}
\renewcommand{\i}{\mathrm{i}}
\renewcommand{\(}{\left(}
\renewcommand{\)}{\right)}
\renewcommand{\d}{\mathrm{d}}
\newcommand{\A}{\mathcal{A}}

%-------------------------------------------------------------%
\renewcommand{\Im}{\operatorname{Im}}
\newcommand{\ord}{\mathrm{O}}
\newcommand{\decay}{\mathrm{o}}
\renewcommand{\Re}{\operatorname{Re}}

\usepackage{color}

\newcommand{\clu}{\textcolor[rgb]{0.00,0.00,0.00}}

\providecommand{\DS}{\displaystyle}

\renewcommand{\Re}{\operatorname{Re}}
\newcommand{\ol}{\overline}

%-------------------------------------------------------------%
\theoremstyle{plain}

%--------------------------------------------------------------%
\theoremstyle{definition}
\newtheorem{Def}[thm]{Definition}
\newtheorem*{assu*}{Assumption}
\newtheorem*{assus*}{Assumptions}
%-----------------------------------------------------------------%
%\theoremstyle{plain}
%\newtheorem{thm}{Theorem}[section]
%\newtheorem{prop}[thm]{Proposition}
%\newtheorem*{propab*}{Properties of $\BS{a(k), b(k)}$}
%\newtheorem*{propAB*}{Properties of $\BS{A(k), B(k)}$}
%----------------------------------------------------------------%

%-----------------------------------------------------------------%

%-----------------------------------------------------------------%

\begin{document}
%-------------------------------------------------------------%
%-----------------------------------------------------------------%
\title{Dispersive Shock Wave, Generalized Laguerre Polynomials and Asymptotic Solitons of
the Focusing Nonlinear Schr\"odinger Equation}
%------------------------------------------------------------------%
\author{{\normalsize Vladimir \textsc{Kotlyarov}$^\dagger$ and \normalsize  Alexander  \textsc{Minakov}$^\ddagger$}\\[1mm]
{\scriptsize $^{\dagger}$ B.Verkin Institute for Low Temperature Physics and Engineering, }\\
{\scriptsize 47 Nauky Ave., 61103 Kharkiv, Ukraine}\\
{\scriptsize $^{\ddagger}$Institut de Recherche en Math\'{e}matique et Physique (IRMP), Universit\'{e} catholique de Louvain
(UCL)}, \\{\scriptsize Chemin du Cyclotron 2, Louvain-la-Neuve, Belgium}}
\date{\today}
\maketitle

%----------------------------------------------------------------
\begin{abstract}
We consider dispersive shock wave to the focusing nonlinear Schr\"odinger equation generated by a discontinuous initial condition which is periodic or quasi-periodic on the left semi-axis and zero on the right semi-axis.  As an initial function we use a finite-gap potential of the Dirac operator given in an explicit form through hyper-elliptic theta-functions. The paper aim is to study the long-time asymptotics of the solution of this problem in a vicinity of the leading edge, where  a train of asymptotic solitons are generated. Such a problem was studied in \cite{KK86} and \cite{K91} using Marchenko's inverse scattering technics. We investigate this problem exceptionally using the Riemann-Hilbert problems technics that allow us to obtain explicit formulas for the asymptotic solitons themselves that in contrast with the cited papers where asymptotic formulas are obtained only for the square of absolute value of solution. Using transformations of the main RH problems we arrive to a model problem corresponding to the parametrix at the end points of continuous spectrum of the Zakharov-Shabat spectral problem. The  parametrix problem is effectively solved in terms of the generalized Laguerre polynomials which are naturally appeared after appropriate scaling of the Riemann-Hilbert problem in a small neighborhoods of the end points of continuous spectrum. Further  asymptotic analysis give an explicit formula for  solitons at the edge of dispersive wave. Thus, we give the complete description of the train of asymptotic solitons: not only bearing envelope of each asymptotic soliton, but  its oscillating structure  are  found explicitly. Besides the second term of asymptotics describing an interaction between these solitons and oscillating background is also found. This gives the fine structure of the edge of dispersive shock wave.
\end{abstract}
%------------------------------------------------------------------%

\tableofcontents

\section{Introduction}

Despite many years of an intensive study the range of associated with  the nonlinear Schr\"odinger equation problems  continues to expand steadily. This is due to the important theoretical and applied meaning of the nonlinear Schr\"odinger equation in modern mathematical physics. In recent years there are many interesting problems related to the theory of modulation instability and rogue waves on a deep water, as well as  communication systems in nonlinear fibre channels (sf. \cite{Bik97}, \cite{BIOMAT},
\cite{DM11}, \cite{DM13}, \cite{EH16}, \cite{MS16}, \cite{MS18}, \cite{VPSC}, \cite{KVSPT}).   In particular, there arise problems of localized perturbations of the periodic  or quasi-periodic background. In the present paper we are considering the Cauchy problem for the focusing nonlinear Schr\"odinger equation with nonlocal perturbations of periodic (quasi-periodic) initial function. Such a problem is  well-known in the theory  of  dispersive  shock waves, which have also stable interest of researchers (sf. \cite{BM18} -- \cite{BV}, \cite{I81} -- \cite{K91}, \cite{N05}). A good review \cite{EH16} on this theme represents results of the last 50 years research using the framework of nonlinear modulation theory.

More precisely, we consider the dispersive shock waves of the focusing nonlinear Schr\"odinger equation. They are generated by a discontinuous initial conditions which are periodic or quasi-periodic on the left semi-axis and zero on the right semi-axis. The paper aim is to study the long-time asymptotics of the solution of this problem in a vicinity of the leading edge, where  a train of asymptotic solitons are generated. Such a problem was studied in \cite{KK86} using Marchenko's inverse scattering techniques under reflectionless condition. This condition was eliminated in \cite{K91} where asymptotics was obtained with nonzero reflection coefficient. In this paper we investigate this problem exceptionally using the Riemann-Hilbert problems techniques that allow us to obtain the explicit formulas for  asymptotic solitons themselves in contrast with the cited papers where asymptotic formulas are obtained only for the square of absolute value of solution. Thus, we give the complete description of the train of asymptotic solitons: not only bearing envelope of each asymptotic soliton, but  its oscillating structure are  found explicitly. Besides, the second term of the asymptotics describing an interaction between these solitons and oscillating background is also found.  In other words, the paper describes the fine structure of the edge of  dispersive shock wave. A sector of $xt$-plane of the leading edge of the wave has essential meaning not only  in a study of the structure of dispersive shock waves, but it is very important in itself. It is due to the end-point  parametrix problem of the Deift-Zhou method of steepest descent (\cite{DZ92} -- \cite{Deift99}).  This  parametrix problem is effectively solved in terms of the generalized Laguerre polynomials which are naturally appeared after appropriate scaling of the Riemann-Hilbert problem in a small neighborhood of the end points of the continuous spectrum  of the Zakharov-Shabat spectral problem.  Further asymptotic analysis give a description of a mutual influence between the two parametrices located at the stationary point and the end-point of continuous spectrum. The results may be also interesting for the theory of modulation instability. The first result in this direction for the mKdV equation with constant step-like initial function was obtained in \cite{BM18} by M.Bertola and one of the author of the present paper.

Here  we  consider a pure step-like initial value problem for the focusing nonlinear Schr\"odinger equation (however, see a
remark at the end of this section):
\begin{equation}\label{nls}
\ii q_t+q_{xx}+2|q|^2q=0,\qquad x\in\D{R}, t\in\D{R}_+,
\end{equation}
\begin{equation}
q(x,0)=q_0(x)=\begin{cases} 0,& x\ge0\\ q_p(x), & x<0,
\end{cases}                                      \label{ic} \\
\end{equation}
where $q_p(x)$ is a finite-gap periodic or quasi-periodic potential of the Dirac operator \eqref{xeq}.
We will show that the solution of IBV problem \eqref{nls}-\eqref{ic} does exist and unique.

The main tool  for studying rigorously the long-time asymptotics of  solutions of initial and initial
boundary value problems for integrable nonlinear equations is the asymptotic analysis of the
Riemann-Hilbert (RH) problem by the Deift-Zhou method of steepest descent  \cite{DZ92}. This involves
the Jost type solutions of the system of linear equations (the
Lax pair or AKNS equations) associated with the nonlinear equation.
For the focusing NLS equation (\ref{nls}), the Lax pair is as follows \cite{ZS}:
\begin{equation}   \label{xeq}
 \Phi_x+\ii k\sigma_3\Phi=Q(x,t)\Phi,
\end{equation}
\begin{equation}   \label{teq}
\Phi_t+2\ii k^2\sigma_3\Phi=\tilde Q(x,t,k)\Phi,
\end{equation}
where $\sigma_3:=\begin{pmatrix}1&0\\0&-1\end{pmatrix}$,
$\Phi(x,t,k)$ is a $2\times 2$ matrix-valued function,
$k\in \D{C}$ is a spectral parameter, and the matrix
coefficients $Q$ and $\tilde Q$ are expressed in terms of a
scalar function
 $q$:
\begin{equation}   \label{Q}
Q(x,t) :=
\begin{pmatrix}
0 & q(x,t)  \\
-\bar q(x,t) & 0
\end{pmatrix},
\end{equation}
\begin{equation}   \label{tildeQ}
\tilde Q(x,t,k):=2kQ(x,t)-\ii(Q^2(x,t)+Q_x(x,t))\sigma_3.
\end{equation}
It is well-known \cite{AS}, \cite{FT86}, \cite{ZS} that this
over-determined system of equations is compatible  if and
only if $q(x,t)$ solves the nonlinear Schr\"odinger
equation (\ref{nls}).

The Zakharov-Shabat spectral problem \eqref{xeq} with quasi-periodic potential \eqref{ic} $q_0(x)$ (of genus $n$)  possesses an absolutely continuous spectrum $\sigma$ consisting of the real axis $\mathbb{R}$ and a finite number ($n+1$) of analytical arcs on the complex $k$-plane. The set of eigenvalues is indeed empty for any pure step function $q_0(x)$ of zero genus ($n=0$).  For pure step functions of highest genuses ($n\ge 1$) the number of eigenvalues is not more than $n$, but their set will be empty under special choice of frequencies of the quasi-periodic component of $q_0(x)$  (see remark at the end of Section 3).   For  simplicity we suppose that the set of  discrete spectrum  is empty. Define a real number $C$ by the relation
\[
C=\max\limits_{k\in\sigma\setminus\mathbb{R} }(-4\Re k)=-4\Re E_0; \quad  E_0=A+\ii B\quad B>0.
\]
Then the main result can be written as follows.

\begin{thm}\label{teor_main}
For $x=C t-\frac{\rho\ln t}{B},$
\begin{itemize}
\item if $\rho\in[0,\frac14),$
then $q(x,t)=q_{par}(x,t)+\mathcal{O}(t^{-1}),$
\item if $\rho\in[n+\frac14,n+\frac54),$ $0\leq n\in\mathbb{Z},$
then
\begin{equation}\label{qass}
\begin{split}q(x,t)=q_{sol}(x,t)+
q_{par}(x,t)+\mathcal{O}(t^{-1}\ln t).\end{split}
\end{equation}
where
\[q_{sol}(x,t)=\frac{(-1)^{n}2B\exp[-2\i\(Ax+2t(A^2-B^2)\)-\i\arg\hat\phi(E_0)
+2\i\arg\delta(E_0,A)]}{\cosh\left[2B(x+4At)+(2n+\frac32)\ln t+\ln\(\frac{2\pi\ |\delta(E_0,A)|^2}{n!\Gamma(n+\frac32)}\cdot\frac{(16B^2)^{2n+\frac32}}{|\hat\phi(E_0)|\sqrt{2B}}\)\right]}.\]
\end{itemize}
The constant $\hat\phi(E_0)$ is determined by the initial datum, and is equal to
$$\hat\phi(E_0)=\lim\limits_{k\to E_0-\i 0}\frac{f(k)\e^{\pi\i/4}}{\sqrt{\i(k-E_0)}},\qquad \mbox{ where }\quad f(k)=r(k-0)-r(k+0),\ k\in[E_0, \Re E_0].$$
where the root is positive for $k\in(E_0, \Re E_0)$ and $r(k)$ is the reflection coefficient.
Furthermore,
\[\delta(E_0,A) = \exp\left[\frac{1}{2\pi\i}\int\limits_{-\infty}^{A}\frac{\ln(1+|r(s)|^2)\ \d s}{s-A-\i B}\right]\]
 and $\Gamma(n+3/2)$ is the Gamma-function. Here $q_{par}(x,t)$ is a function that admits the estimate
$q_{par}(x,t)=\mathcal{O}(t^{-1/2}).$
All the estimates are uniform w.r.t. bounded $\rho.$
\end{thm}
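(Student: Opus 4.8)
The plan is to prove Theorem~\ref{teor_main} by a chain of explicit transformations of the basic Riemann--Hilbert problem associated to \eqref{xeq}--\eqref{teq}, culminating in a local model (parametrix) built from generalized Laguerre polynomials near the branch point $E_0$ of the continuous spectrum. First I would set up the basic RH problem for a $2\times2$ matrix $M(x,t,k)$ whose jump across $\sigma\cup\mathbb R$ is expressed through the reflection coefficient $r(k)$ and the quasi-periodic scattering data of $q_p$; existence and uniqueness of $q(x,t)$ follows from solvability of this problem, which one gets from a vanishing-lemma/symmetry argument. The key object controlling the asymptotics is the phase $\theta(k)=4kx/t+8k^2$ (equivalently the exponent $2\ii(kx+2k^2t)$ in the jump), whose sign structure on $\mathbb R$ and on the arcs dictates the factorizations; along the ray $x=Ct-\rho B^{-1}\ln t$ the relevant stationary point sits at $k=A$ and the branch point $E_0=A+\ii B$ is exactly where $-4\Re k$ attains its maximum $C$, so these two points collide in scale as $t\to\infty$.

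Next I would perform the standard Deift--Zhou steps: (i) introduce the scalar function $\delta(k)=\delta(k,A)$ solving a scalar RH problem on $(-\infty,A)$ with jump $1+|r|^2$, conjugate $M$ by $\delta^{\sigma_3}$ to turn the oscillatory jump on $\mathbb R$ into one amenable to lens opening; (ii) open lenses off $\mathbb R$ and off the arcs, absorbing analytic continuations of $r(k)$ and using the decay of $\e^{\pm\ii t\theta}$ in the appropriate half-neighborhoods; (iii) deform onto a global parametrix built from the genus-$n$ theta-functional (quasi-periodic) background, plus local parametrices. The subtle point is that here two local parametrices are needed near the leading edge: one at the stationary point $A$ (the usual parabolic-cylinder/Airy-type piece, producing $q_{par}=\mathcal O(t^{-1/2})$) and one at the branch point $E_0$. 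Rescaling $k=E_0+\zeta/(16B^2 t)$ (up to constants) near $E_0$, the jump contour becomes a fixed cross and the jump matrices become $t$-independent to leading order, with the constant $\hat\phi(E_0)$ entering as the local behavior of the jump $f(k)=r(k-0)-r(k+0)$; the resulting model RH problem is solved \emph{exactly} in terms of generalized Laguerre polynomials $L_n^{(1/2)}$ and the Gamma function $\Gamma(n+3/2)$, which is where the $n$-dependence of $q_{sol}$ comes from. The quantization $\rho\in[n+\tfrac14,n+\tfrac54)$ arises because the size of the $\delta$-factor evaluated near $E_0$ carries a power $t^{\rho}$ (through $\ln t$ in $x$), and the integer $n$ is the one for which the Laguerre-model solution is the dominant contribution; for $\rho\in[0,\tfrac14)$ no soliton has yet emerged and only $q_{par}$ survives.

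Then I would reconstruct $q(x,t)$ from the $k\to\infty$ expansion of the final matrix $M(x,t,k)=I+M_1(x,t)/k+\dots$ via $q(x,t)=2\ii (M_1)_{12}$, collecting the three contributions: the background/stationary-point term giving $q_{par}=\mathcal O(t^{-1/2})$; the branch-point (Laguerre) term giving $q_{sol}$; and the error from matching the two parametrices, which is $\mathcal O(t^{-1}\ln t)$ because the two shrinking disks around $A$ and $E_0$ have radii comparable to $(\ln t)/t$ apart and the small-norm estimate on the remaining contour is $\mathcal O((\ln t)/t)$ — hence the error in \eqref{qass}. Tracking the explicit constants: the exponential prefactor $\exp[-2\ii(Ax+2t(A^2-B^2))]$ comes from $\e^{-\ii t\theta(E_0)}$-type factors, $\arg\hat\phi(E_0)$ from the local jump constant, $2\arg\delta(E_0,A)$ and $|\delta(E_0,A)|^2$ from the scalar function evaluated at $E_0$, the powers $(16B^2)^{2n+3/2}$ and $\sqrt{2B}$ from the rescaling Jacobian, and $\cosh[2B(x+4At)+(2n+\tfrac32)\ln t+\dots]$ precisely reproduces a soliton of velocity $-4A$ whose center is logarithmically shifted — this shift being the signature of a soliton emerging from the edge of the dispersive shock wave. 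The main obstacle I anticipate is the construction and, especially, the \emph{matching} of the $E_0$-parametrix with the stationary-phase parametrix at $A$: one must show the ratio of the two local solutions on the overlapping annuli is $I+\mathcal O(t^{-1}\ln t)$ uniformly in bounded $\rho$, which requires careful uniform asymptotics of the Laguerre-polynomial model as its argument grows and a delicate bookkeeping of the competing scales $1/t$ (rescaling near $E_0$), $1/\sqrt t$ (near $A$), and $(\ln t)/t$ (their separation). Uniformity in $\rho$ over each interval — including the transition values $\rho=n+\tfrac14$ where the dominant Laguerre index jumps — is the most technical part of the argument.
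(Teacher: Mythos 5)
Your outline follows the paper's general strategy (basic RH problem, $\delta$-conjugation, lens opening, parabolic-cylinder parametrix at the stationary point, Laguerre-type parametrix at $E_0$), but it misses the one idea that actually produces $q_{sol}$, and it rests on a wrong picture of the geometry. In the paper, the global model $\left(\frac{k-E_0}{k-\ol E_0}\right)^{-n\sigma_3}$ matched to the Laguerre parametrix does \emph{not} suffice: on the fixed circles around $E_0,\ol E_0$ the error jump has off-diagonal entries of size $t^{\pm(2\gamma-2n)-1}$ with $\gamma=\rho-\frac14$, and when $\gamma$ is near a half-integer (precisely the soliton peaks, $\rho$ near $n+\frac34$) one of these entries is $\mathcal{O}(1)$ no matter how $n$ is chosen, so the small-norm argument collapses exactly where the theorem has content. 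The paper's resolution — absent from your proposal — is to remove the $1/\zeta$ term from the relevant off-diagonal entry of the Laguerre asymptotics and simultaneously dress the whole approximation by a meromorphic factor $G=I+\frac{\A}{k-E_0}+\frac{\widetilde\A}{k-\ol E_0}$, with $\A,\widetilde\A$ determined by a pole-cancellation linear system at $E_0,\ol E_0$. It is the solution of that system, $q_{sol}=2\i\widetilde b_1$, that yields the explicit $\cosh$ envelope and all the constants in \eqref{qass} (the Laguerre data enter only through the subleading coefficient $R_1=\frac{n!\Gamma(n+\frac32)}{2\pi\i}$ fed into $H$, whence $\Gamma(n+\frac32)$, $(16B^2)^{2n+\frac32}$, $\sqrt{2B}$, $|\delta(E_0,A)|^2$). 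Simply ``reading off the branch-point (Laguerre) term'' as you propose does not give the soliton and cannot produce these formulas.

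The second problem is your error bookkeeping. The stationary point $k_0=-\xi=A+\mathcal{O}(\frac{\ln t}{t})$ and the branch point $E_0=A+\i B$ stay at the fixed distance $\approx B$; the local parametrices live in disks of \emph{fixed} radii $r$ and $\delta$, they never collide, and there is no matching of overlapping shrinking annuli separated by $\frac{\ln t}{t}$. The scale near $E_0$ is linear, $\zeta\approx 8Bt\,(k-E_0)$, versus $\sqrt{2t}$ near $k_0$, and the two constructions interact only through the uniformly bounded factor $G$. The term $q_{par}=\mathcal{O}(t^{-1/2})$ is just the contribution of the parabolic-cylinder parametrix extracted from the circle $\partial D$, while the $\mathcal{O}(t^{-1}\ln t)$ error comes from the residual $\mathcal{O}(t^{-1})$ on $\partial C,\partial C_d$ after the $G$-refinement, from the $\ln t$ corrections hidden in $z'=8B+\frac{2\i\rho\ln t}{Bt}$ and $\varphi(t)$, and from the quadratic term of the small-norm expansion — not from a parametrix-collision matching. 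Finally, uniformity in $\rho$ (including the transition values) is not obtained from uniform asymptotics of a single model as you suggest, but by using two different refinements, $n=\lfloor\gamma\rfloor$ for $\{\gamma\}\in[0,\frac12]$ and $n=\lfloor\gamma\rfloor+1$ for $\{\gamma\}\in(\frac12,1)$, whose validity ranges overlap and together cover $\rho\in[n+\frac14,n+\frac54)$.
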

%------------------------------------
\begin{thm}\label{teor_refined}
The error term in Theorem \ref{teor_main} can be written as
%\[q_{par}(x,t) =
%\frac{\sqrt{\pi}\e^{\pi\nu/2}}{B^2\sqrt{t}}\(\frac{\chi^2(k_0)\cdot\e^{4\i t\xi^2}\,(8t)^{\i\nu}(a_1-\i B)^2\e^{-3\pi\i/4}}{r(k_0)\Gamma(\i\nu)}
%+
%\frac{\tilde b_1^2\e^{-\pi\i/4}}{\chi^2(k_0)\cdot \e^{4\i t\xi^2}(8t)^{\i\nu}\ol{r(k_0)}\Gamma(-\i\nu)}\),
%\]
\[
\begin{split}&q_{par}(x,t) = \sqrt{\frac{\nu}{2t}}\(\e^{\i\psi}\(1-\left|\frac{q_{sol}(x,t)}{2B}\right|^2\)-\e^{-\i\psi}\(\frac{q_{sol}^2(x,t)}{4B^2}\)\)+\mathcal{O}\(\frac{\ln t}{t}\),
\end{split}
\]
where
\[\begin{split}
&\psi:=4t\xi^2+\nu\ln(8t)+2\arg\chi(k_0)-\arg r(k_0)-\arg\Gamma(\i\nu)+\frac{\pi}{4},
\end{split}
\]
and
\[
\begin{split}
\chi(k_0,\xi) &=
\lim\limits_{k\to k_0}(k-k_0)^{\i\nu}\cdot \exp\left[\frac{1}{2\pi\i}\int\limits_{-\infty}^{k_0}\frac{\ln(1+|r(s)|^2)\ \d s}{s-k}\right]
\\&= (k+N)^{\i\nu}\cdot
\exp\left[\frac{1}{2\pi\i}\int\limits_{-N}^{k_0}\frac{\ln\frac{1+|r(s)|^2}{1+|r(k_0)|^2}\ \d s}{s-k_0}
+
\frac{1}{2\pi\i}\int\limits_{-\infty}^{-N}\frac{\ln(1+|r(s)|^2)\ \d s}{s-k}\right]
\end{split}\]
with an arbitrary parameter $-N< k_0,$ which does not change the value of $\chi(k_0,\xi),$ but which is needed to have convergent integrals in the above representation.\\
Furthermore, $\nu=\DS\frac{1}{2\pi}\ln[1+|r(k_0)|^2]$,  $k_0=-\xi:=\DS\frac{-x}{4t}.$
\end{thm}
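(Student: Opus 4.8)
The plan is to carry the small-norm analysis used in the proof of Theorem~\ref{teor_main} one order further. Recall that after the $g$-function and $\delta$-function conjugations, the contour deformations and the opening of lenses one arrives at a Riemann--Hilbert problem for a matrix $\widehat M(k)$ whose jump is uniformly $\mathcal O(\e^{-ct})$ away from fixed discs $U_{k_0}$ (around the stationary phase point $k_0=-\xi=-x/4t$, which in the leading-edge scaling tends to $A=\Re E_0$) and $U_{E_0},U_{\ol{E_0}}$ (around the endpoints of the continuous spectrum). Inside $U_{E_0}$ one places the generalized-Laguerre parametrix constructed in the previous sections, inside $U_{k_0}$ the NLS stationary-phase (parabolic-cylinder) parametrix $P^{k_0}$, and outside all discs the outer parametrix $P^{\mathrm{out}}$. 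Setting $R:=\widehat M\,P^{-1}$ region by region, $R$ solves a small-norm problem, $R=I+\mathcal O(t^{-1/2})$, with jumps supported on $\partial U_{k_0}\cup\partial U_{E_0}\cup\partial U_{\ol{E_0}}$ up to exponentially small terms, and the potential is recovered from $q(x,t)=2\i\lim_{k\to\infty}k\,[\widehat M(k)]_{12}=2\i\lim_{k\to\infty}k\,[R(k)]_{12}$, where $R(k)=I+R_1/k+\mathcal O(k^{-2})$. As established in the proof of Theorem~\ref{teor_main}, the $\partial U_{E_0}$- and $\partial U_{\ol{E_0}}$-parts of $[R_1]_{12}$ produce precisely $q_{sol}(x,t)/(2\i)$ (an $\mathcal O(1)$ effect in this scaling), while the $\partial U_{k_0}$-part is merely $\mathcal O(t^{-1/2})$; the task now is to compute the latter explicitly.

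First I would write down the behaviour of the parabolic-cylinder parametrix on $\partial U_{k_0}$. After the rescaling $\zeta=\sqrt{8t}\,(k-k_0)$ that reduces the local problem to the exact parabolic-cylinder model, one has, uniformly on $\partial U_{k_0}$,
\[
P^{k_0}(P^{\mathrm{out}})^{-1}(k)=I+\frac{\mathcal C}{k-k_0}+\mathcal O(t^{-1}),\qquad
\mathcal C=\frac{1}{\sqrt t}\,\mathcal N(k_0)\begin{pmatrix}0&\beta\\\gamma&0\end{pmatrix}\mathcal N(k_0)^{-1},
\]
where $\mathcal N(k_0):=P^{\mathrm{out}}(k_0)$, the off-diagonal matrix collects the parabolic-cylinder connection coefficients together with all the diagonal conjugating factors, $|\beta|=|\gamma|=\sqrt{\nu/2}$ with $\nu=\tfrac1{2\pi}\ln(1+|r(k_0)|^2)$, and $\arg\beta=\psi$; here $\psi$ is assembled from the value $4t\xi^2$ of the controlling phase at the stationary point, the logarithm $\nu\ln(8t)$ coming from the rescaling together with the fractional power $(k-k_0)^{\i\nu}$, the $\delta$-function factor $\chi(k_0,\xi)$ entering squared (hence $2\arg\chi(k_0)$), the reflection coefficient ($-\arg r(k_0)$), and the parabolic-cylinder/Gaussian constants ($-\arg\Gamma(\i\nu)+\pi/4$). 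This is the classical Its/Deift--Zhou computation, now performed in the presence of the extra $g$-function of the finite-gap background. What makes the statement nontrivial is that $\mathcal N(k_0)\ne I$: in the leading-edge scaling the Laguerre parametrix at $E_0$ already acts on a neighbourhood of $k_0$, so $P^{\mathrm{out}}$ near $k_0$ is the one-soliton-type matrix with pole data at $E_0=A+\i B$ that generates $q_{sol}$ --- this is the ``mutual influence'' of the two parametrices. At the real point $k_0$ it has the symmetry $\mathcal N(k_0)=\begin{pmatrix}a&b\\-\ol b&\ol a\end{pmatrix}$ with $|a|^2+|b|^2=1$.

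Then a short matrix multiplication gives $\big[\mathcal N(k_0)\begin{pmatrix}0&\beta\\\gamma&0\end{pmatrix}\mathcal N(k_0)^{-1}\big]_{12}=a^2\beta-b^2\gamma$, so that the $\partial U_{k_0}$-part of $[R_1]_{12}$ equals $\mp t^{-1/2}(a^2\beta-b^2\gamma)$ and $q_{par}$ is $2\i$ times it. It remains to express $a=\mathcal N_{11}(k_0)$ and $b=\mathcal N_{12}(k_0)$ through $q_{sol}$. A direct computation with the explicit one-soliton matrix --- equivalently, with the explicit $q_{sol}$ of Theorem~\ref{teor_main} --- shows that as $k_0\to A=\Re E_0$ the entry $a$ becomes real, with
\[
a^2=\mathcal N_{11}^2(k_0)\longrightarrow 1-\Big|\frac{q_{sol}(x,t)}{2B}\Big|^2,\qquad
b^2=\mathcal N_{12}^2(k_0)\longrightarrow \frac{q_{sol}^2(x,t)}{4B^2}
\]
(up to the overall sign conventions in the recovery formula and in the parabolic-cylinder connection coefficients). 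Substituting, keeping track of the relative sense of rotation of $\beta$ and $\gamma$ fixed by the Schwarz symmetry of the jump of $R$ across $\D{R}$, and collecting the remaining constant into the prefactor $\sqrt{\nu/2t}$ and into $\psi$, one obtains precisely
\[
q_{par}(x,t)=\sqrt{\frac{\nu}{2t}}\Big(\e^{\i\psi}\big(1-|q_{sol}/2B|^2\big)-\e^{-\i\psi}\,\frac{q_{sol}^2}{4B^2}\Big)+(\text{remainder}).
\]
The remainder is $\mathcal O(t^{-1}\ln t)$: it absorbs the quadratic term of the Neumann series for $R$ (size $\mathcal O(t^{-1})$), the $\mathcal O(t^{-1})$ correction of the parabolic-cylinder parametrix, the genuinely higher-order part of the $\partial U_{E_0}$-contribution beyond $q_{sol}$, and the logarithm produced by the scaling $x=Ct-\rho\ln t/B$ (which enters the centre of $U_{k_0}$ and the arguments of $\chi(k_0,\xi)$ and $\delta(E_0,A)$); the same bookkeeping gives uniformity in bounded $\rho$.

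I expect the main obstacle to be the ``mutual influence'' itself: establishing rigorously that on $\partial U_{k_0}$ the parabolic-cylinder parametrix must be matched to the analytic continuation of the Laguerre parametrix from $E_0$ rather than to the identity, and that the matching matrix $\mathcal N(k_0)$ collapses, through the explicit $q_{sol}$-formula, to the coefficients $1-|q_{sol}/2B|^2$ and $q_{sol}^2/4B^2$ in the limit $k_0\to A$ --- which is inseparable from tracking every phase (the stationary-phase value $4t\xi^2$, the rescaling logarithm $\nu\ln(8t)$, the two $\delta$-functions $\chi(k_0,\xi)$ and $\delta(E_0,A)$, $\arg r(k_0)$, $\arg\Gamma(\i\nu)$ and $\pi/4$) so that they assemble into the single angle $\psi$. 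All remaining steps are standard Deift--Zhou small-norm and parametrix bookkeeping.
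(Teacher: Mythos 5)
Your proposal follows essentially the same route as the paper: a soliton-dressed outer model (the paper's explicit meromorphic correction $G=I+\frac{\A}{k-E_0}+\frac{\widetilde\A}{k-\ol{E_0}}$ playing the role of your $\mathcal{N}$), generalized-Laguerre parametrices at $E_0,\ol{E_0}$, the parabolic-cylinder parametrix at $k_0$, extraction of the $\mathcal{O}(t^{-1/2})$ term as the Cauchy/residue integral of the $G$-conjugated PC jump over the circle $\partial D$, and the identity $(a_1-\i B)^2=-B^2\bigl(1-|q_{sol}/2B|^2\bigr)$, $\tilde b_1^2=-q_{sol}^2/4$ doing exactly the job of your $a^2,b^2$ identification, with the same assembly of phases into $\psi$ via $\e^{\pi\nu/2}/\Gamma(\i\nu)=\e^{-\i\arg\Gamma(\i\nu)}|r(k_0)|\sqrt{\nu/2\pi}$. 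The only slight misstatement is attributing $q_{sol}$ to the $\partial U_{E_0}$-part of $[R_1]_{12}$: in the paper (and in your own later description of the one-soliton outer matrix) it comes from the dressed outer parametrix itself, the $\partial C,\partial C_d$ error contributions being $\mathcal{O}(t^{-1})$, but this does not affect your computation of $q_{par}$.
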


\begin{rem*}
It is easy to see that the error term can be written as
\[
q_{par}(x,t) = \sqrt{\frac{\nu}{2t}}\e^{\i\psi} -\sqrt{\frac{2\nu}{t}}
\frac{\cos(\psi-\arg q_{sol})}{\cosh^2\phi} \e^{\i\arg q_{sol}}+\mathcal{O}\(\frac{\ln t}{t}\),
\]
where
\[
\phi=2B(x+4At)+(2n+\frac32)\ln t+\ln\(\frac{2\pi\ |\delta(E_0,A)|^2}{n!\Gamma(n+\frac32)}\cdot\frac{(16B^2)^{2n+\frac32}}{|\hat\phi(E_0)|\sqrt{2B}}\)
\] and
\[
\arg q_{sol}=\pi n-2\(Ax+2t(A^2-B^2)\)-\arg\hat\phi(E_0)
+2\arg\delta(E_0,A),
\]
or in another form:
\[
q_{par}(x,t) = \sqrt{\frac{\nu}{2t}}\(\e^{\i\psi} \tanh^2\phi-
\frac{\e^{2\i\arg q_{sol}-\i\psi}}{\cosh^2\phi} \)+\mathcal{O}\(\frac{\ln t}{t}\).
\]
In an elegant form these formulas give a description of  the mutual
interaction between asymptotic solitons and oscillating background.
\end{rem*}

%------------------------------------------------------------

For $C=-4A>0$ our results show that  $q_0(x)$ generates a \emph{ dispersive shock wave} whose leading edge is described by a train of \emph{asymptotic solitons}  \eqref{qass} running to the right. The results in \cite{KK86},  \cite{K91} were  obtained exactly under the restriction $C>0$. The Rieman-Hilbert problem technique gives us the result independently on  sign $C$   unlike the Marchenko equations, which are effectively applicable only for the case of $C> 0$.  But in the case $C<0$ the train of asymptotic solitons runs to the left  and thus they  describe the rear edge of the dispersive wave. Theorem 2  describes  an interaction between asymptotic solitons   and oscillating background generated by the parametrices of the end-point of continuous spectrum and  the stationary point respectively.

Asymptotic behavior of the dispersive shock wave (DSW) between its leading and trailing edges is much more complicated in compare with studied earlier (sf. \cite{BKS11}, \cite{EGKT13}, \cite{KM10}, \cite{KM12}) where DSW was described by an elliptic modulated wave \cite{BKS11}, \cite{EGKT13}, \cite{KM10}  or hyperelliptic modulated wave of genus 2 \cite{KM12}. In our problem  the hyperelliptic component of  step initial function has genus $n$ and, hence, the spectrum of the Zakharov-Shabat spectral problem can be in general very exotic in its geometry. In turn, a complicated  geometry of the spectrum provides
a complicated structure of DSW which consists of  a finite set of  hyperelliptic modulated waves of different genuses.  Each of this modulated waves are located in  their own sector of the $xt$-plane. Thus, DSW has  a very complicated ``microstructure", especially in domains between the own sectors of DSW in the $xt$-plane where the different hyperelliptic modulated waves  must match each other. A full picture of the DSW  ``microstructure" is an open problem and  a  subject of future publications.

To finish  the introduction we give the mentioned above remark: \emph{the results obtained in the paper are also valid for smooth
initial functions, which tend to  their  asymptotic values  sufficiently fast. The proof becomes more complicated in view of the
non-analyticity of the reflection coefficient. This difficulty  is  overcome by the well-known methods of the  Deift-Zhou theory \cite{DZ93}.}

%%%%%%%%%%%%%%%%%%%%%%%%%%%%%%%%%%%%%%%%%%%%%%%%%%%%%%%%%%%%%%%%%%%%%%%%%%%%%
\section{Definition of the planar matrix Baker-Akhiezer function and finite-gap solution of the NLS equations.}%%%%%%%%%%%%%%%%%%%%%%%%%
%%%%%%%%%%%%%%%%%%%%%%%%%%%%%%%%%%%%%%%%%%%%%%%%%%%%%%%%%%%%%%%%%%%%%%%%%%%%%

To define  $q_p(x)$ as a finite-gap potential of the Dirac operator we need to introduce the planar  matrix Baker-Akhiezer function associated with the AKNS equations \eqref{xeq}-\eqref{teq}, using results of \cite{KS17}.
Let $\Sigma_j:=(E_j, \bar E_j),$ $j=0, 1,2,\ldots,n$ be a set of vertical open intervals on the complex plane $\mathbb{C}$
which  constitute an oriented contour $\Sigma$. All $\Sigma_j$ are oriented downwards (Figure \ref{fig.1}).
\begin{figure}[ht]
\vskip-1,5cm
\begin{picture}(150,160)(-70,100)
\setlength{\unitlength}{0.60mm}
%-----------%
\linethickness{1,5pt}
\put(30.00,123.00){\makebox(0,0)[cc]{$E_0$}}\put(30.00,76.00){\makebox(0,0)[cc]{$\bar E_0$}}
\put(30.00,120.00){\line(0,-1){40.00}}
\put(40.00,114.00){\line(0,-1){28.00}}
\put(60.00,107.00){\line(0,-1){14.00}}
\put(80.00,114.00){\line(0,-1){28.00}}
%\put(110.00,130.00){\line(0,-1){60.00}}
\put(120.00,124.00){\line(0,-1){48.00}}
\put(150.00,108.00){\makebox(0,0)[cc]{$E_n$}}\put(150.00,91.00){\makebox(0,0)[cc]{$\bar E_n$}}
\put(150.00,105.00){\line(0,-1){10.00}}
\put(180.00,95.00){\makebox(0,0)[cc]{$\Re k$}}
%\put(85.00,110.00){\makebox(0,0)[cc]{$\Sigma_j$}}
%\put(85.00,90.00){\makebox(0,0)[cc]{$\bar\Sigma_j$}}
%\put(10.00,100.00){\vector(1,0){140.00}}
%\put(54.00,100.00){\vector(1,0){3.33}}
\put(80.00,110.0){\vector(0,-1){3.00}}
\put(80.00,95.0){\vector(0,-1){3.00}}
\put(78.00,118.00){\makebox(0,0)[cc]{$E_j$}}
\put(78.00,80.00){\makebox(0,0)[cc]{$\bar E_j$}}
\linethickness{0,21pt}\put(10.00,100.00){\vector(1,0){170.00}}
\end{picture}
\vskip-1cm \caption{The oriented contour $\Sigma=\cup_{j=1}^n(E_j, \bar E_j)$}
\label{fig.1}
\end{figure}
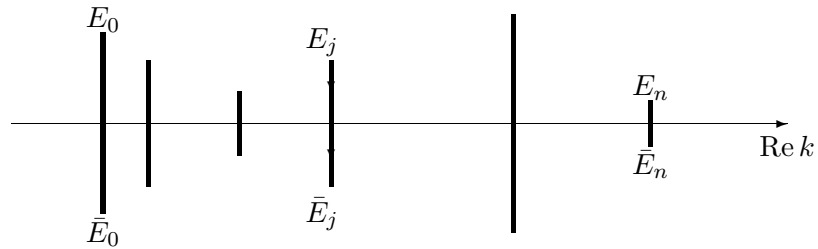
%%%%%%%%%%%%%%%%%%%%%%%%%%%%%%%%%%%%%%%%%%%%%
%----------------------------% fin figure

\begin{Def}
Let an oriented contour $\Sigma$ and a set of real numbers $(\phi_1,\dots,\phi_n)$  be given. A $2\times2$ matrix  $\Phi^p(x,t,k)$   is called  the  Baker-Akhiezer function associated with equations \eqref{xeq}-\eqref{teq} if it satisfies following properties:
\begin{itemize}
\item for any  $x,t\in\mathbb{R}$,  the function $\Phi^p(x,t,k)$ is analytic in $k\in\mathbb{C}\setminus\tilde\Sigma,\quad
\tilde\Sigma=\cup_{j=0}^n [E_j,\bar E_j]$;
\item $\det\Phi^p(x,t,k)\equiv 1$;
\item $\Phi^p(x,t,k)$ has at most the inverse fourth root singularities at $E_j$ and $\bar E_j$;
\item $\Phi^p(x,t,k)$ satisfies the jump conditions with piecewise constant jumps:
$$
\Phi^p_{-}(x,t,k)=\Phi^p_{+}(x,t,k)J_{0},\qquad k\in \Sigma,
$$
where
\begin{align*}
 J_0&=\begin{pmatrix}
                0 & \ii e^{-\ii \phi_j}\\
                \ii e^{\ii \phi_j} & 0 \\
              \end{pmatrix},\ \   &k\in\Sigma_j=(E_j, \bar E_j),
							\  j=0,1,\dots,n,
\end{align*}
with $\phi_0=0$,
\item $\Phi^p(x,t,k)=\left(I+O(k^{-1})\right)e^{-\ii( kx +2k^2t)\sigma_3}$ as $k\to\infty$.
\end{itemize}
\end{Def}

By the Liouville theorem, the above conditions determine $\Phi_p$ uniquely. This function solves the AKNS equations
\eqref{xeq}-\eqref{teq} with
$$
Q:= Q_p(x,t):=\begin{pmatrix}
 0 & q_p(x) \\
 -\bar q_p(x) & 0 \\
 \end{pmatrix}
$$
The explicit construction of $\Phi^p$ and $q_p(x,t)$ is presented in \cite{KS17}. In particulary, the finite-gap solution  $q_p(x,t)$  of the nonlinear Schr\"odinger equation takes the form:
\begin{align*}
q_p(x,t)=&2\ii E_\theta\frac{\theta\left(-\B A(\infty)+\B A(\mathcal{D})+\B K- \frac{x{\B C^h}+t{\B C^g}+{\bf \phi}}{2\pi}\right)}
{\theta\left(\B A(\infty)+\B A(\mathcal{D})+\B K- \frac{x{\B C^h}+t{\B C^g}+{\bf \phi}}{2\pi}\right)} e^{2\ii x h_0+2\ii t g_0},
\end{align*}
where $\theta$ is the well-known theta function, $E_\theta$ is a constant:
$$
E_\theta:=\DS\frac{1}{2}\sum^{n}_{j=0}\Im E_j \frac{\theta(\B A(\infty)+\B A(\mathcal{D})+\B K)}{\theta(-\B A(\infty)+\B A(\mathcal{D})+\B K)}
$$
and $h_0$, $g_0$ are  some scalars.  Further, $\B A(k)$ is the Abel map, $\B K$ is the Riemann constant vector.  The divisor $\mathcal{D}$ is chosen in  such a way that  $\Phi_p(x,t,k)$ is analytic in $k\in\mathbb{C}\setminus\tilde\Sigma$.

The simplest periodic solution is
$$
q_p(x,t)=(\Im E_0) \cdot\ee^{2\ii(xh_0+tg_0)}, \qquad h_0=-\Re E_0, \qquad  g_0=\Im^2 E_0-2h_0^2.
$$
The corresponding matrix $\Phi^p(x,t,k)$ takes the form \cite{BKS11}:
\begin{equation}\label{Phi-p}
\Phi^p(x,t,k)=\ee^{\ii(xh_0+tg_0)\sigma_3} N_0(k) \ee^{-\ii (x h(k) + t g(k))\sigma_3},
\end{equation}
where
$$
N_0(k)=\frac{1}{2}
\begin{pmatrix}
\varkappa_0(k)+\DS\frac{1}{\varkappa_0(k)}&
\varkappa_0(k)-\DS\frac{1}{\varkappa_0(k)}\\[3mm]
\varkappa_0(k)-\DS\frac{1}{\varkappa_0(k)}& \varkappa_0(k)+\DS\frac{1}{\varkappa_0(k)}
\end{pmatrix}
$$
with
\begin{equation}\label{X-Om}
\varkappa_0(k)=\left(\frac{k-A-\ii B}{k-A+\ii B}\right)^{\frac{1}{4}},\quad
h(k)=\sqrt{(k-A)^2+B^2}, \quad g(k)= 2(k+A)h(k),
\end{equation}
where $A=\Re E_0=-h_0$, $B=\Im E_0$, $g_0=B^2-2A^2$.
The branch cut for  $\varkappa_0$ and $h$ is taken along the
vertical segment $[E,\bar E]$, where
 $E=A+\ii B$ and  $\bar E=A-\ii B$,   and the branches are fixed by asymptotics
$$
h(k)= k-A+\ord(k^{-1}) \quad \text{and } \
\varkappa_0(k)=1+\ord(k^{-1}) \qquad \text{as }\  k\to\infty.
$$
Notice that  $g(k)=2k^2 + g_0 +\ord(k^{-1})$ with $g_0$ given above.
In this simplest case the asymptotics were studied in \cite{BKS11}, however without analyzing of interjacent sectors.

In general case the Baker-Akhiezer  function takes the form \cite{KS17}:
\begin{equation}
\Phi^p(x,t,k) = \ee^{(\ii h_0 x +\ii g_0 t)\sigma_3} N(x,t,k)
\ee^{-(\ii h(k) x +\ii g(k) t)\sigma_3},
\label{M}
\end{equation}
where
$$
h(k) = \frac{w(k)}{2\pi \ii}\sum_{j=1}^{n}\int_{\Sigma_j}\frac{C^h_j}{w_+(\xi)(\xi-k)}d\xi,
\qquad
g(k) = \frac{w(k)}{2\pi \ii}
\sum_{j=1}^{n}\int_{\Sigma_j}\frac{C^g_j}{w_+(\xi)(\xi-k)}d\xi.
$$
The branch of
$
w(k):=\sqrt{\prod_{j=0}^{n}(k-E_j)(k-\bar E_j)}
$
is defined  as an analytic outside the arcs $\tilde\Sigma$ with asymptotics $w(k)\simeq k^{n+1}$ as $k\to\infty$.
Real numbers $C^h_j$ and $C^g_j$ are defined in such a way that $h(k)$ and $g(k)$ have the asymptotics:
$$
h(k)=k+h_0+O(1/k),   \qquad g(k)=2k^2+g_0+O(1/k), \qquad k\to\infty.
$$
Then  $C^h_j$ for $j=1,\dots,n$ have to satisfy the  system of $n$ linear algebraic equations:
\begin{align*}
\sum_{j=1}^n C^h_j\int_{\Sigma_j}\frac{\xi^k \dd\xi}{w_+(\xi)} =& 0,
		\quad k=0,\dots,n-2,\nonumber \\
\sum_{j=1}^n C^h_j\int_{\Sigma_j}\frac{\xi^{n-1} \dd\xi}{w_+(\xi)} = &-2\pi\ii	.
\end{align*}
This system  has a unique solution $\{C^h_j\}_{j=1}^n$ (sf.\cite{BBEIM}).
For $C^g_j$ the system reads
\begin{align*}
\sum_{j=1}^n C^g_j\int_{\Sigma_j}\frac{\xi^k \dd\xi}{w_+(\xi)} =& 0,
		\quad k=0,\dots,n-3,\nonumber \\
\sum_{j=1}^n C^g_j\int_{\Sigma_j}\frac{\xi^{n-2} \dd\xi}{w_+(\xi)} =&
-4\pi\ii,
\nonumber\\
\sum_{j=1}^n C^g_j\int_{\Sigma_j}\frac{\xi^{n-1} \dd\xi}{w_+(\xi)} =&
-2\pi\ii \sum_{j=0}^n (E_j+\hat E_j)
\end{align*}
These equations have also a unique solution (sf.\cite{BBEIM}).

The matrix $N(x,t,k)$ has an explicit representation in theta functions:
\begin{equation}	\label{M-theta}
\begin{aligned}
N(k):= & \DS\frac{1}{2}\begin{pmatrix}
        \frac{1}{F_1(\infty)} & 0 \\
        0 & \frac{1}{H_2(\infty)}\\
      \end{pmatrix}
		\begin{pmatrix}
             (\varkappa(k)+\varkappa^{-1}(k))F_1(k) &
						(\varkappa(k)-\varkappa^{-1}(k))H_1(k) \\\\
             (\varkappa(k)-\varkappa^{-1}(k))F_2(k) &
						(\varkappa(k)+\varkappa^{-1}(k))H_2(k) \\
           \end{pmatrix},
				\end{aligned}
\end{equation}
where an analytic in $k\in\mathbb{C}\setminus\tilde\Sigma$ function $\varkappa(k)=\prod\limits_{j=0}^{n}\sqrt[4]{\DS\frac{k-E_j}{k-\bar E_j}}$  is defined by cuts $\Sigma_j$ and asymptotics $\varkappa(k)=1+\ord(k^{-1}) $ as $k\to\infty$. For $s=1,2,$ the functions $F_s(k)$ and $H_s(k)$ are as follows:
$$
F_s(k)=\DS\frac{\theta({\bf A}(k)+{\bf C(x,t)}+{\bf d}_s)}{\theta({\bf A}(k)+{\bf d}_s)},\ \
H_s(k)=\DS\frac{\theta(-{\bf A}(k)+{\bf C(x,t)}+{\bf d}_s)}{\theta(-{\bf A}(k)+{\bf d}_s)}.
$$
with
$$
 \B C(x,t):=- \DS\frac{x{\B C^h}+t{\B C^g}+{\bf \phi}}{2\pi},\qquad d_1= -d_2=\B A(\mathcal{D})+\B K.
$$
As usual, the entries of the matrix $N(x,t,k)$ are not independent, namely: $N_{22}(x,t,\bar k)=\bar N_{11}(x,t,k)$ and $N_{21}(x,t,\bar k)=-\bar N_{12}(x,t,k)$. It means that $\det N(x,t,k)=|N_{11}(x,t,k)|^2+|N_{12}(x,t,k)|^2\equiv 1$ for any real $x$, $t$ and $k\in\mathbb{R}$. Hence, $N_{jl}(x,t,k)$ ($j,l=1,2$) are uniformly bounded for any real $x$, $t$, $k$. They are also bounded for any real $x$, $t$ and $k\in [e_j, \bar e_j]\subset(E_j, \bar E_j)$, $j=0,1,\ldots,n$. Since  $\Im h(k)=\Im g(k)=0$ for $k\in\Gamma$, where $\Gamma=\D{R}\cup\cup_{j=0}^n (E_j, \bar E_j)$,   we have that $\Phi^{\R{p}}(x,t,k)$ is bounded and smooth  in $x,t\in\mathbb{R}$ for any fixed $k\in\Gamma$ (with the exception of the points $E_j$ and $\bar E_j$ where entries of $\Phi^{\R{p}}$ have the inverse fourth root singularities).

\section{Eigenfunctions}        \label{sec.eigenfunctions}
\setcounter{equation}{0}

Let $q(x,t)$ be a solution to the problem (\ref{nls})-(\ref{ic}) satisfying the asymptotic conditions
\eqref{sum} and let  $Q(x,t)$ and $Q_{\R{p}}(x,t)$ be defined in terms of respectively
 $q$ and $q_{\R{p}}$ by (\ref{Q}).

Assuming for a moment that the function $q(x,t)$ satisfies:
\begin{equation}\label{sum}
\int\limits_{-\infty}^0(1+|x|)|q(x,t)-q_p(x,t)|dx + \int\limits_0^\infty(1+|x|)|q(x,t)|dx<\infty.
\end{equation}
for all $t\geq 0,$ we can
define  the $2\times 2$-valued
functions $\mu_j(x,t,k)$, $j=1,2$, $-\infty<x<\infty$,
$0\le t<\infty$, as the  solutions of the  Volterra integral equations:
\begin{subequations}   \label{mu}
\begin{align}
\mu_1(x,t,k)&=I-\int_x^{\infty}\ee^{\ii k(y-x)\sigma_3}
(Q\mu_1)(y,t,k)\ee^{-\ii k(y-x)\sigma_3}\dd y,\qquad
k\in\D{R},
\label{mu1} \\
\mu_2(x,t,k)&=I+\int_{-\infty}^x
G^{\R{p}}(x,y,t,k)[Q(y,t)-Q_{\R{p}}(y,t)]\mu_2(y,t,k)
[{G^{\R{p}}}(x,y,t,k)]^{-1}\dd y, \label{mu2} \quad
k\in\Gamma,
\end{align}
\end{subequations}
where $G^{\R{p}}(x,y,t,k)$ is given by
$$
G^{\R{p}}(x,y,t,k)=\Phi^{\R{p}}(x,t,k)[\Phi^{\R{p}}(y,t,k)]^{-1}.
$$
Obviously,  for real $x$ and $t$, $G^{\R{p}}(x,y,k)$ is an analytic function in $k\in\mathbb{C}\setminus\tilde\Gamma$, where $\tilde\Gamma$ is a closure of $\Gamma$,  i.e.: $\tilde\Gamma:= \D{R}\cup\cup_{j=0}^n [E_j, \bar E_j]$.  It has  the asymptotic behavior for a large $k$:
$$
G^{\R{p}}(x,y,k)=\ee^{\ii(y-x) h(k)\sigma_3}
\Bigr[I+\ord\Bigl(\frac{1}{k}\Bigr)\Bigr] \quad\text{as}\
k\to\infty,\quad \Im h(k)=0.
$$

The analytic properties of $\mu_j$ are collected in the
following

%-----------------%
\begin{prop*}     \label{prop.properties.eigenfunctions}%{\rm\bf 3.}
The $2\times 2$ matrices $\mu_j(x,t,k)$, $j=1,2$ have  the
following properties:
\begin{enumerate}[\rm(i)]
\item $\det\mu_1(x,t,k)=\det\mu_2(x,t,k)\equiv1$. \item The
functions $\Psi(x,t,k)$ and $\Phi(x,t,k)$ defined by
\begin{alignat*}{2}
&\Psi(x,t,k)&&:= \mu_1(x,t,k)\ee^{-\ii kx\sigma_3-2\ii k^2t\sigma_3},\\
&\Phi(x,t,k)&&:= \mu_2(x,t,k) \ee^{-\ii x h(k)\sigma_3-\ii
t g( k)\sigma_3}
\end{alignat*}
satisfy the Lax pair equations (\ref{xeq})-(\ref{teq}).
\item Let the columns of a $2\times 2$ matrix $M$ be
denoted respectively by $M^{(1)}$ and $M^{(2)}$. Then
$\mu_1^{(1)}(x,t,k)$ is analytic in $k\in\D{C}_-$ and
$\mu_1^{(1)}(x,t,k)=\begin{pmatrix}1\\0\end{pmatrix}+\ord(k^{-1})$
as $k\to\infty,
\Im k\le 0 $ whereas
 $\mu_1^{(2)}(x,t,k)$ is analytic in $k\in\D{C}_+$
and
$\mu_1^{(2)}(x,t,k)=\begin{pmatrix}0\\1\end{pmatrix}+\ord(k^{-1})$
as $ k\to\infty,
 \Im k\ge0 $.
\item
 $\mu_2^{(1)}(x,t,k)$ is analytic in
$k\in\D{C}_+\setminus\tilde\Sigma_+$, has a jump across
$\tilde\Sigma_+:=\tilde\Sigma\cap\mathbb{C}_+$, and
$\mu_2^{(1)}(x,t,k)=\begin{pmatrix}1\\0\end{pmatrix}+\ord(k^{-1})$
as $k\to\infty,
 \Im k\ge 0$ whereas
$\mu_2^{(2)}(x,t,k)$ is analytic in
$k\in\D{C}_-\setminus\tilde\Sigma_-$, has a jump across
$\tilde\Sigma_-:=\tilde\Sigma\cap\mathbb{C}_-$,
and
$\mu_2^{(2)}(x,t,k)=\begin{pmatrix}0\\1\end{pmatrix}+\ord(k^{-1})$
as $ k\to\infty,
 \Im k\le0 $.
 \item
 Moreover,
 $$
 \mu_j(x,t,k) = I + \frac{\tilde \mu(x,t)}{\ii k} + \decay(k^{-1})
 $$
 as $k\to\infty$ along curves non-tangential to $\mathbb R$
 (the expansion is to be understood column-wise, in the
 respective half-plane of the $k$-plane),
 where
 $$
 [\sigma_3,\tilde\mu(x,t)] = \begin{pmatrix}0 &
 q(x,t)\\-\bar q(x,t) & 0\end{pmatrix}.
 $$
\item Near $k=E_j$ and  $k=\bar E_j$, the respective columns of
$\mu_2(x,t,k)$ exhibit inverse fourth-root singularities like those the matrix $M(x,t,k)$ has.
\end{enumerate}
\end{prop*}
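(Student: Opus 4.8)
The plan is to read off every item from the Neumann (Volterra) series for \eqref{mu1} and \eqref{mu2}, supplemented by the explicit analytic data for $\Phi^{\R{p}}$ assembled in Section 2.

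\emph{The function $\mu_1$ and items (i), (iii), (v).} Rewriting \eqref{mu1} column by column, the conjugation $M\mapsto\ee^{\ii k(y-x)\sigma_3}M\ee^{-\ii k(y-x)\sigma_3}$ multiplies the lower entry of the first column by $\ee^{-2\ii k(y-x)}$ and the upper entry of the second column by $\ee^{2\ii k(y-x)}$; since $y>x$ on the range of integration, these factors are bounded for $\Im k\le0$, respectively $\Im k\ge0$. Under \eqref{sum} (here only $L^1$-integrability of $q$ and of $q-q_{\R p}$ is used) the Neumann series converges absolutely and uniformly on compact subsets of the corresponding closed half-planes, every summand being analytic in the open half-plane, so the sum is analytic there and continuous up to $\D{R}$; this is the first half of (iii). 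Differentiating \eqref{mu1} in $x$, using $\partial_x\bigl(\ee^{\ii k(y-x)\sigma_3}M(y)\ee^{-\ii k(y-x)\sigma_3}\bigr)=-\ii k\bigl[\sigma_3,\ee^{\ii k(y-x)\sigma_3}M(y)\ee^{-\ii k(y-x)\sigma_3}\bigr]$ together with the boundary term at $y=x$, gives $\mu_{1,x}+\ii k[\sigma_3,\mu_1]=Q\mu_1$, i.e. $\Psi=\mu_1\ee^{-\ii kx\sigma_3-2\ii k^2t\sigma_3}$ solves \eqref{xeq} (part of (ii)). Since the coefficient matrix $-\ii k\sigma_3+Q$ is traceless and $\mu_1\to I$ as $x\to+\infty$, Abel's identity forces $\det\mu_1\equiv1$, and likewise for $\mu_2$ using $x\to-\infty$ and $\det\Phi^{\R p}\equiv1$; this is (i). Integrating by parts once in each term of the series (the weight $1+|x|$ in \eqref{sum} makes the $k^{-1}$-coefficient a convergent integral) yields $\mu_1=I+\tilde\mu/(\ii k)+\decay(k^{-1})$ and the normalizations in (iii); substituting this expansion into $\mu_{1,x}+\ii k[\sigma_3,\mu_1]=Q\mu_1$ and matching the $k^{0}$ term gives $[\sigma_3,\tilde\mu]=Q$, completing (v).

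\emph{The $t$-equation in (ii).} For the $t$-parts I would use the classical dressing/uniqueness argument rather than a direct manipulation of the integral equations. Put $R:=\Psi_t+2\ii k^2\sigma_3\Psi-\tilde Q\Psi$. A direct computation, using that $q$ solves \eqref{nls} --- equivalently, the zero-curvature relation $U_t-V_x+[U,V]=0$ with $U=-\ii k\sigma_3+Q$, $V=-2\ii k^2\sigma_3+\tilde Q$ --- shows that $R\Psi^{-1}$ is $x$-independent; since $\mu_1\to I$ as $x\to+\infty$ and the $t$-behaviour of the free solution is explicit, $R\to0$ there, hence $R\equiv0$ and $\Psi$ solves \eqref{teq}. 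The analogous argument, based on $\Phi\to\Phi^{\R p}$ as $x\to-\infty$ and the fact that $\Phi^{\R p}$ solves \eqref{teq}, handles $\Phi$.

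\emph{Items (iv) and (vi): the main obstacle.} The delicate point is that equation \eqref{mu2} is posed only for $k\in\Gamma$, and its kernel $G^{\R p}(x,y)\,[Q-Q_{\R p}](y)\,\mu_2(y)\,[G^{\R p}(x,y)]^{-1}$ is \emph{not} bounded for $k$ off $\Gamma$: the factor $\ee^{\ii(y-x)h(k)\sigma_3}$ buried in $G^{\R p}$ carries an exponentially growing entry as soon as $\Im h(k)\neq0$, so the column-wise analyticity cannot be obtained from \eqref{mu2} verbatim. Instead I would split the identity as $I=\operatorname{diag}(1,0)+\operatorname{diag}(0,1)$ and derive from \eqref{mu2} two separate Volterra equations for $\mu_2^{(1)}$ and $\mu_2^{(2)}$ in which the Green's function is replaced by the corresponding Heaviside-weighted diagonal propagator pieces built from the columns of $\Phi^{\R p}$ and $[\Phi^{\R p}]^{-1}$; for $\mu_2^{(1)}$ with $y<x$ the surviving exponential factors are then bounded (indeed decaying) precisely for $k\in\D{C}_+\setminus\tilde\Sigma_+$, and symmetrically for $\mu_2^{(2)}$. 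Here one uses from Section 2 that $\Phi^{\R p}$ is analytic in $\D{C}\setminus\tilde\Sigma$, uniformly bounded on $\Gamma$, has unit determinant and at worst inverse fourth-root singularities at the $E_j,\bar E_j$, and that the matrix $N$ is bounded off the branch points and obeys the conjugation symmetry $N_{22}(\bar k)=\bar N_{11}(k)$, $N_{21}(\bar k)=-\bar N_{12}(k)$, which is what renders the $\tilde\Sigma_-$-part of the cut invisible to the first column. Convergence of the two series then follows exactly as for $\mu_1$; each summand is analytic in the asserted domain, the limit is analytic, its jump across $\tilde\Sigma_\pm$ is inherited from that of $\Phi^{\R p}$, and the inverse fourth-root behaviour near $E_j,\bar E_j$ in (vi) is inherited from $\Phi^{\R p}$, since the Volterra correction integrates an $L^1$-in-$y$ density and does not worsen the $k$-singularity. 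Finally, the large-$k$ normalizations of the columns of $\mu_2$ and the $k^{-1}$-term in (v) for $\mu_2$ come from integrating by parts in these series, using $G^{\R p}(x,y,k)=\ee^{\ii(y-x)h(k)\sigma_3}[I+\ord(k^{-1})]$. The bulk of the work, and the only genuinely non-routine step, is the construction of the two column-adapted Volterra equations and the verification --- from the explicit form of $\Phi^{\R p}$ --- that their kernels remain bounded up to $\tilde\Sigma_\pm$.
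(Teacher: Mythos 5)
You should know at the outset that the paper states this Proposition without any proof at all (it is offered as a summary of standard properties of Jost-type solutions relative to a finite-gap background), so there is no argument of the authors to compare yours against; judged on its own, your outline for items (i), (ii), (iii), (v) is the standard one and is essentially sound: the column-wise Neumann series for \eqref{mu1} with the factors $\ee^{\mp2\ii k(y-x)}$, Abel/Liouville plus the limits $x\to\pm\infty$ for the determinants, integration by parts (using the weight in \eqref{sum}) for the $1/k$ term and the matching $[\sigma_3,\tilde\mu]=Q$, and the zero-curvature uniqueness argument for the $t$-equation. One small slip there: with $R:=\Psi_t+2\ii k^2\sigma_3\Psi-\tilde Q\Psi$ it is $\Psi^{-1}R$, not $R\Psi^{-1}$, that is $x$-independent (one gets $R=\Psi C(t,k)$); the conclusion $C=0$ from the limit $x\to+\infty$ is unaffected.

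The soft spot is your treatment of (iv) and (vi). You correctly observe that \eqref{mu2} cannot be continued off $\Gamma$ verbatim, but the fix you propose, splitting $G^{\R p}$ into pieces built from a column of $\Phi^{\R p}(x)$ and a row of $[\Phi^{\R p}(y)]^{-1}$, creates exactly the difficulty you then dismiss: each such piece is a product of two inverse fourth-root singular factors, hence square-root singular at $E_j$, so the claim that the Volterra correction ``does not worsen the $k$-singularity'' is not automatic for those kernels, and naive iteration would degrade the bound order by order. The clean route, and the reason the hypothesis \eqref{sum} carries the weight $1+|x|$, is to keep the propagator whole: the closed equation for the renormalized first column $\mu_2^{(1)}=\Phi^{(1)}\ee^{\ii(xh+tg)}$ has kernel $G^{\R p}(x,y)[Q-Q_{\R p}](y)\,\ee^{\ii h(k)(x-y)}$, and $G^{\R p}(x,y,t,k)$ is \emph{entire} in $k$ (it is the transfer matrix of an ODE whose coefficients are entire in $k$; the fourth-root factors of $\Phi^{\R p}(x)$ and $[\Phi^{\R p}(y)]^{-1}$ cancel, e.g.\ in genus zero the offending entries are $(k-E_0)\sin\bigl(h\,(x-y)\bigr)/h$), with bound $C(1+|x-y|)$ uniformly near $E_j$ and bounded for $y<x$ precisely where $\Im h(k)\ge 0$, i.e.\ on $\ol{\D{C}_+}\setminus\tilde\Sigma_+$ (the positivity of $\Im h$ there deserves its one-line maximum-principle justification, since $\Im h=0$ on all of $\Gamma$). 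With this kernel the resolvent stays bounded up to $E_j$, the inverse fourth-root behaviour of (vi) is inherited solely from the inhomogeneous term $\Phi^{\R p,(1)}\ee^{\ii(xh+tg)}$, and the large-$k$ normalization and the jump across $\tilde\Sigma_+$ follow as you say (modulo a harmless unimodular diagonal factor $\ee^{\ii(h_0x+g_0t)\sigma_3}$ coming from the normalization conventions, which the paper itself handles loosely). So the architecture of your proof is right, but the singularity/boundedness step in (iv) and (vi) needs this repair rather than the column-splitting as written.
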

%-----------------%

Since the eigenfunctions  $\Psi(x,t,k)$ and $\Phi(x,t,k)$
satisfy both equations of the Lax pair, we have
\begin{alignat}{1}
\Phi(x,t,k)&=\Psi(x,t,k) S(k),  \qquad k\in\D{R}, \label{S}
\end{alignat}
where $S(k)$ is independent of $x$ and $t$. In the absence of any perturbations of the discontinuous initial function we have
$$
\Psi(x,0,k)=\ee^{-\ii kx\sigma_3}
$$
for $x\ge 0$ and
$$
\Phi(x,0,k)=\Phi^{\R{p}}(x,0,k)
$$
 for $ x\le 0$.  Therefore
\begin{equation}
S(k):=\begin{pmatrix} a(k)& -\bar b(k)\\ b(k)&\bar a(k)\end{pmatrix}
=\Psi^{-1}(0,0,k)\Phi^{\R{p}}(0,0,k)=\Phi^{\R{p}}(0,0,k)=N(0,0,k).
\label{S1}
\end{equation}
where
\begin{align*}
a(k)=&N_{11}(0,0,k)=\bar N_{22}(0,0,\bar k)=
\frac{\varkappa(k)+\varkappa^{-1}(k)}{2}\frac{F^0_1(k)}{F^0_1(\infty)} , \\
b(k)=&N_{21}(0,0,k)=-\bar N_{12}(0,0,\bar k)=\frac{\varkappa(k)-\varkappa^{-1}(k)}{2}\frac{F^0_2(k)}{H^0_2(\infty)},
\end{align*}
and
$$
F^0_s(k)=\DS\frac{\theta({\bf A}(k)+{\bf C(0,0)}+{\bf d}_s)}{\theta({\bf A}(k)+{\bf d}_s)},\quad
H^0_s(k)=\DS\frac{\theta(-{\bf A}(k)+{\bf C(0,0)}+{\bf d}_s)}{\theta(-{\bf A}(k)+{\bf d}_s)},\quad s=1,2
$$
with
$$
 \B C(0,0):=- \DS\frac{{\bf \phi}}{2\pi},\qquad d_1= -d_2=\B A(\mathcal{D})+\B K.
$$
Thus, for the discontinuous initial function,  $a(k)$ and $b(k)$ are analytic in $k\in\mathbb{C}\setminus\tilde\Sigma,\quad \tilde\Sigma=\cup_{j=0}^n [E_j,\bar E_j]$ and possess following symmetries:
$$
\bar a(\bar  k)=a(k),\quad \bar  b(\bar  k)=-b(k).
$$
The behavior at infinity are:
\[
a(k)=1+\ord\left(\frac{1}{k}\right)  \quad\text{as }\ k\to \infty, \qquad b(k)=\ord\left(\frac{1}{k}\right)  \quad\text{as }\ k\to \infty.
\]
The functions $F^0_1(k)$ and $F^0_2(k)$ are analytic in $k\in\mathbb{C}\setminus\tilde\Sigma$,  continuous and bounded
up to the contour $\tilde\Sigma$. Hence $a(k)$ and  $b(k)$ are also analytic in $k\in\mathbb{C}\setminus\tilde\Sigma$ and
continuous up to  $\Sigma$ with the exception of the end points $E_j$ and $\bar E_j$ where they have forth root singularities. Their ratio
$r(k):=b(k)/a(k)$, i.e. the reflection coefficient is continuous and bounded on the both sides of contour  $\tilde\Sigma$ (the singularities  are compensated). In general, $a(k)$ may have zeros at the points $\{k_1, k_2, \ldots, k_m\}$ ($m\le n$) which coincide with projections of zeros of  $\theta({\bf A}(k)+{\bf C(0,0)}+\B A(\mathcal{D})+\B K)$. This function has precisely $n$ zeros which lies on the Riemann surface ($m$ of them on the upper sheet, and $n-m$ on the lower sheet). On the other hand, one can control these zeros. For example, let ${\mathcal D}_\phi$ be a  non-special divisor on the upper sheet. Let us choose free parameters  $(\phi_1,\dots,\phi_n)$ of the initial function $q_p(x)$ in such a way that  $a(k)\neq 0$. It will be done if
$\mathcal{D}_\phi\cap\mathcal{D}=\emptyset$ and $\phi_j=2\pi({\B A}_j({\mathcal D})-{\B A}_j({\mathcal D}_\phi) )$, $j=1,2,\ldots, n$ because all zeros of the corresponding theta-function will be situated on the lower sheet.  Thus, $a(k)$ may have zeros, but not more than the corresponding genus ($n$) of the Riemann surface even in the case of discontinuous pure step initial function $q(x,0)$. In what follows we consider the case $a(k)\neq 0$.

%------------------------------------------------------------------%
\section{The Basic Riemann\textendash Hilbert Problem}
\label{sec.basic.rh.pb} \setcounter{equation}{0}
%------------------------------------------------------------------%

The scattering relation \eqref{S} involving the
eigenfunctions $\Psi(x,t,k)$ and $\Phi(x,t,k)$ can be
rewritten in the form of conjugation of boundary values of
a piecewise analytic matrix-valued function on a contour in
the complex $k$-plane, namely:
\begin{equation}\label{RHxt}
M_-(x,t,k)=M_+(x,t,k) J(x,t,k),\quad
k\in \Gamma=\D{R}\cup\cup_{j=0}^n (E_j, \bar E_j),
\end{equation}
where $M_\pm(x,t,k) $ denote the boundary values of
$M(x,t,k)$ according to a chosen orientation of $\Gamma$, i.e.  $M_{\pm}(x,t,k)$ is a non-tangential limit of $M(x,t,k')$
as $k'\rightarrow k\in\Gamma$, from the positive/negative side of the contour $\Gamma$ (see Figure \ref{fig.2}).
%%%%%%%%%%%%%%%%%%%%%%%%%%%%%%%%%%%%%%%%%%%%%-%   fig2
%\begin{figure}[ht]
%\vskip-2cm
%\begin{picture}(150,160)(-70,100)
%\setlength{\unitlength}{0.60mm}
%%-----------%
%\linethickness{1,5pt}
%\put(30.00,123.00){\makebox(0,0)[cc]{$E_0$}}\put(30.00,76.00){\makebox(0,0)[cc]{$\bar E_0$}}
%\put(30.00,120.00){\line(0,-1){40.00}}
%\put(40.00,114.00){\line(0,-1){28.00}}
%\put(60.00,107.00){\line(0,-1){14.00}}
%\put(80.00,114.00){\line(0,-1){28.00}}
%%\put(110.00,130.00){\line(0,-1){60.00}}
%\put(120.00,124.00){\line(0,-1){48.00}}
%\put(150.00,108.00){\makebox(0,0)[cc]{$E_n$}}\put(150.00,91.00){\makebox(0,0)[cc]{$\bar E_n$}}
%\put(150.00,105.00){\line(0,-1){10.00}}
%\put(180.00,95.00){\makebox(0,0)[cc]{$\Re k$}}
%%\put(85.00,110.00){\makebox(0,0)[cc]{$\Sigma_j$}}
%%\put(85.00,90.00){\makebox(0,0)[cc]{$\bar\Sigma_j$}}
%%\put(10.00,100.00){\vector(1,0){140.00}}
%%\put(54.00,100.00){\vector(1,0){3.33}}
%\put(80.00,110.0){\vector(0,-1){3.00}}
%\put(80.00,95.0){\vector(0,-1){3.00}}
%\put(78.00,118.00){\makebox(0,0)[cc]{$E_j$}}
%\put(78.00,80.00){\makebox(0,0)[cc]{$\bar E_j$}}
%%\linethickness{0,21pt}
%\put(10.00,100.00){\vector(1,0){170.00}}
%\end{picture}
%\vskip-1cm \caption{The oriented contour $\Gamma=\mathbb{R}\cup\Sigma$}
%\label{fig.2}
%\end{figure}
%%%%%%%%%%%%%%%%%%%%%%%%%%%%%%%%%%%%%%%%%%%%%%
%%----------------------------% fin figure

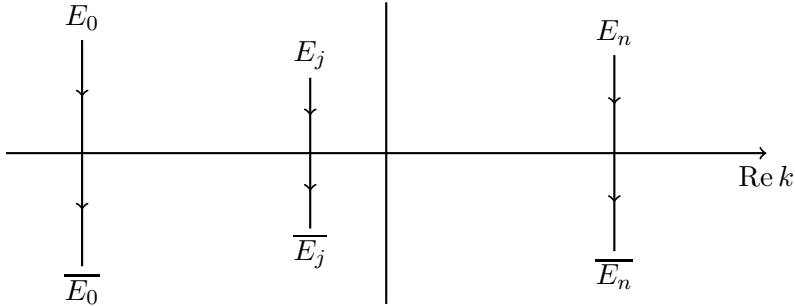
\begin{figure}[ht]
\begin{tikzpicture}%(150,160)(-70,100)
\setlength{\unitlength}{0.60mm}
\linethickness{1pt}
%%%%real line
\draw[thick,->] (-5,0) to (5,0);
\node at (5,-0.3) {$\Re k$};
%%%%%%imaginary line
\draw[thick](0,2) to (0,-2);
%%%%%%%%line E0,\ol{E_0}
\draw[thick,postaction=decorate, decoration = {markings, mark = at position 0.25 with {\arrow{>}}},decoration = {markings, mark = at position 0.75 with {\arrow{>}}}](-4,1.5) to (-4,-1.5);
\node at (-4, 1.8) {$E_0$};\node at (-4, -1.8) {$\ol{E_0}$};
%%%%%%%%%%line Ej,\ol{E_j}
\draw[thick, postaction=decorate, decoration = {markings, mark = at position 0.25 with {\arrow{>}}},decoration = {markings, mark = at position 0.75 with {\arrow{>}}}](-1,1) to (-1,-1);
\node at (-1, 1.3) {$E_j$};\node at (-1, -1.3) {$\ol{E_j}$};
%%%%%%%%%%line En,\ol{E_n}
\draw[thick, postaction=decorate, decoration = {markings, mark = at position 0.25 with {\arrow{>}}},decoration = {markings, mark = at position 0.75 with {\arrow{>}}}](3,1.3) to (3,-1.3);
\node at (3, 1.6) {$E_n$};\node at (3, -1.6) {$\ol{E_n}$};
\end{tikzpicture}
\caption{The oriented contour $\Gamma=\mathbb{R}\cup\Sigma$}
\label{fig.2}
\end{figure}

Indeed, let us write \eqref{S} in the vector form:
\begin{eqnarray} \label{sr1}
\DS\frac{\Phi^{(1)}(x,t,k)}{a(k)}&=&\Psi^{(1)}(x,t,k)+r(k)\Psi^{(2)}(x,t,k),
\nonumber\\
\DS\frac{\Phi^{(2)}(x,t,k)}{\bar a(k)}&=&-\bar r(k)\Psi^{(1)}(x,t,k)+\Psi^{(2)}(x,t,k),
\end{eqnarray}
where
\begin{equation}\label{r}
r(k):= \DS\frac{b(k)}{a(k)},\quad \bar r(k):= \DS\frac{\bar b(k)}{\bar a(k)}=-r(k),
\end{equation}
and define the matrix $M(x,t,k)$ as follows:
$$
M(x,t,k)=\begin{cases}
\begin{pmatrix} \Psi^{(1)}(x,t,k)\ee^{\ii t\theta(k)}&
\DS\frac{\Phi^{(2)}(x,t,k)}{\bar a(k)}\ee^{-\ii t\theta(k)}
\end{pmatrix}, &
 k\in\D{C}_-\setminus\tilde\Sigma_-, \\
\begin{pmatrix}
\DS\frac{\Phi^{(1)}(x,t,k)}{a(k)}\ee^{\ii t\theta(k)}&
\Psi^{(2)}(x,t,k)\ee^{-\ii t\theta(k)}
\end{pmatrix}, &
 k\in\D{C}_+\setminus\tilde\Sigma_+,
 \end{cases}
$$
where
\begin{equation}\label{theta-xi}
\theta(k)=\theta(k;\xi) = 2k^2+4\xi k  \qquad \text{with}\
\xi=\frac{x}{4t}.
\end{equation}
Then the boundary values  $M_-(x,t,k)$ and $M_+(x,t,k)$ are related by (\ref{RHxt}), where
\begin{equation}    \label{J}
J(x,t,k)=
\begin{cases}
\begin{pmatrix} 1&-\bar r(k)\ee^{-2 \ii t\theta(k)}\\
-r(k)\ee^{2 \ii t\theta(k)}&1+|r(k)|^2
\end{pmatrix},
\qquad k\in\D{R}\setminus\cup_{j=0}^n \{\Re E_j\},\\   \\
\begin{pmatrix}
1&-\bar f(\bar k)\ee^{-2 \ii t\theta(k)}\\0&1
\end{pmatrix},\qquad k\in\Sigma_-, \\ \\
\begin{pmatrix}
1&0\\f(k)\ee^{2 \ii t\theta(k)}&1
\end{pmatrix},
\qquad k\in\bar\Sigma_+,
\end{cases}
\end{equation}
with
\begin{equation}\label{f}
f(k):=r_-(k)-r_+(k),\qquad \bar f(\bar k)=-f(k).
\end{equation}

Jump relation (\ref{RHxt}) considered together with the
properties of the eigenfunctions listed in Proposition \ref{prop.properties.eigenfunctions}
suggests the way of representing the solution to problem
(\ref{nls})-(\ref{ic}) in terms of a solution of the
following Riemann--Hilbert problem (specified by
initial conditions (\ref{ic}) via the associated spectral
function $r(k)$).

\begin{DefRHxt*}
Given analytic outside $\tilde\Gamma=\mathbb{R}\cup\tilde\Sigma$ functions $r(k)=-\bar r(\bar k)$
 and $f(k)=r_-(k)-r_+(k)=-\bar f(\bar k)$ for $k\in \Gamma$, find a $2\times 2$-valued function $M(x,t,k)$ such that
\begin{enumerate}[\rm(i)]
\item $M(x,t,k)$ is  analytic in $k\in\D{C}\setminus\tilde\Gamma$.
\item $M(x,t,k)$ has  at most the inverse fourth root singularities at $E_j$ and $\bar E_j$;
\item the boundary values $M_\pm(x,t,k)$  satisfy the jump condition
$$
M_-(x,t,k) = M_+(x,t,k)J(x,t,k), \qquad k\in\Gamma\setminus\{\cup_{j=1}^n \Re E_j\},
$$
where the jump matrix $J(x,t,k)$ is defined in terms of
$r(k)$ and $f(k)$ by (\ref{J});
\item
$$
M(x,t,k) = I + \ord\left(\frac{1}{k}\right)  \quad\text{as }\ k\to \infty.
$$
\end{enumerate}
\end{DefRHxt*}
Then the solution $q(x,t)$ of problem
(\ref{nls})-(\ref{ic}) can be expressed in terms of the
solution of the RH problem (i)-(iv) as follows:
\begin{equation}\label{2ikM}
q(x,t) = 2\ii\lim_{k\to\infty} \big(k M(x,t,k)\big)_{12}.
\end{equation}

The basic RH problem has universal structure in the sense that for any initial function $q_0(x)$ with prescribed behavior at infinity, and such that the direct scattering problem is well-posed, it has the same form as in Definition with following differences of properties of the scattering data:
\begin{itemize}
\item the discrete spectrum is not empty and residual conditions are presented;
\item $f(k)$ and  $\bar f(\bar k)$ are some functions given on semi-intervals $(\Re E_j, E_j)$ and $(\Re E_j, \bar E_j)$;
\item $r(k)$ is a function given for real $k$  with jumps at the points $E_j$:
\[
r_-(\Re E_j) - r_+(\Re E_j) =f(\Re E_j), \qquad j=0,1,2,\ldots, n;
\]
\end{itemize}

At this point we can forget how the Riemann-Hilbert problem (i) -(iv) was deduced. We simply prove that such a problem has a unique solution which is smooth in $x$ and $t\neq 0$.  Moreover, we show that the matrix $M(x,t,k)$ generates a solution of the AKNS equations and, as a result, a smooth solution of the focusing nonlinear Schr\"odinger equation. Just this solution is the subject of our research. In the present paper we restrict our attention to the leading edge of the dispersive shock wave only.

\begin{thm}
For any fixed $x,t\in\mathbb{R}$, the Rie\-mann -- Hilbert
prob\-lem (i)--(iv) has the unique solution. This solution is continuous in the parameters
$(x,t)\in\mathbb{R}\times \mathbb{R}.$
\end{thm}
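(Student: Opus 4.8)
The plan is to prove existence, uniqueness and continuity in $(x,t)$ by the standard $L^2$ theory of Riemann--Hilbert problems, the one genuinely delicate ingredient being a vanishing lemma. \emph{Uniqueness} is immediate: since $\det J(x,t,k)\equiv1$ on $\Gamma$ by \eqref{J}, for any solution $M$ the scalar $\det M(x,t,\cdot)$ has no jump across $\Gamma$ and, by (ii), at most an inverse-square-root singularity at the points $E_j,\bar E_j$ (a product of two inverse-fourth-root factors), which is removable; being entire and tending to $1$ at infinity, $\det M\equiv1$, so $M^{-1}$ exists wherever $M$ is analytic. If $M_1,M_2$ are two solutions, then $M_1M_2^{-1}$ has no jump on $\Gamma$ (the jumps $J$ cancel), removable singularities at $E_j,\bar E_j$, and limit $I$ at infinity, hence $M_1M_2^{-1}\equiv I$ by Liouville.

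For \emph{existence} I would pass to the associated singular integral equation. Writing the triangular factorization $J=b_-^{-1}b_+$ (on $\mathbb{R}$ one has $J=\left(\begin{smallmatrix}1&0\\-r\e^{2\i t\theta(k)}&1\end{smallmatrix}\right)\left(\begin{smallmatrix}1&-\bar r\e^{-2\i t\theta(k)}\\0&1\end{smallmatrix}\right)$, while on $\Sigma$ the jump is already triangular), and noting that the endpoint singularities allowed by (ii) are of inverse-fourth-root type, hence square-integrable, one sees that the RH problem is equivalent, in the natural class, to $(I-C_w)\mu=I$ on $L^2(\Gamma)$, where $C_w$ is the usual Cauchy singular operator associated with $w=(b_+-I,\ I-b_-)\in L^2\cap L^\infty(\Gamma)$; alternatively one first removes the endpoint singularities by explicit local parametrices built from the Baker--Akhiezer matrix $N$ of Section~2. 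Because $\det J\equiv1$, because $J\to I$ along $\mathbb{R}$, and because at each self-intersection $\Re E_j$ of $\mathbb{R}$ with $\Sigma$ the jump matrices satisfy the cyclic compatibility relation encoded by $r_-(\Re E_j)-r_+(\Re E_j)=f(\Re E_j)$, the operator $I-C_w$ is Fredholm of index zero on $L^2(\Gamma)$ by the standard theory of singular integral operators on such contours with intersection points (Zhou). Hence existence is equivalent to injectivity of $I-C_w$, i.e. to the \emph{vanishing lemma}: the homogeneous problem---$M^{(0)}$ analytic off $\tilde\Gamma$, at most inverse-fourth-root singularities at $E_j,\bar E_j$, jump $M^{(0)}_-=M^{(0)}_+J$, and $M^{(0)}=\ord(1/k)$ at infinity---has only the zero solution.

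The vanishing lemma is the hard part. The two structural facts I would use are: \emph{on $\mathbb{R}$ the jump matrix $J$ is Hermitian and positive definite} (Hermitian because $\theta(k)$ is real there, positive definite because $\det J=1$ and $\operatorname{tr}J=2+|r|^2>0$); and the Schwarz symmetries $\bar r(\bar k)=-r(k)$, $\bar f(\bar k)=-f(k)$, together with the symmetry of the configuration, make the problem invariant under $M\mapsto\sigma_2\,\overline{M(x,t,\bar k)}\,\sigma_2$ with $\sigma_2=\left(\begin{smallmatrix}0&-\i\\\i&0\end{smallmatrix}\right)$, so that, replacing $M^{(0)}$ by its symmetric and antisymmetric parts, one may assume $M^{(0)}$ carries this symmetry. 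Then, setting $P(k):=M^{(0)}(k)\big[M^{(0)}(\bar k)\big]^\dagger$, the symmetry forces $P(k)=\det M^{(0)}(k)\cdot I$; but $\det M^{(0)}$ is entire (the jump has unit determinant), has removable singularities at $E_j,\bar E_j$, and is $\ord(k^{-2})$ at infinity, hence $\det M^{(0)}\equiv0$, so $P\equiv0$. Passing to the boundary on $\mathbb{R}$ gives $0=P_+(k)=M^{(0)}_+(k)\big[M^{(0)}_-(k)\big]^\dagger=M^{(0)}_+(k)\,J(k)\,\big[M^{(0)}_+(k)\big]^\dagger$ a.e., and the positive definiteness of $J$ forces $M^{(0)}_+=0$ a.e. on $\mathbb{R}$; being in the Smirnov class of $\mathbb{C}_+\setminus\tilde\Sigma_+$ with vanishing $L^2$ boundary value on $\mathbb{R}$, $M^{(0)}$ then vanishes in $\mathbb{C}_+\setminus\tilde\Sigma_+$, and via the jump relation on $\mathbb{R}$ also in $\mathbb{C}_-\setminus\tilde\Sigma_-$, i.e. $M^{(0)}\equiv0$. (Without invoking the symmetry one argues similarly, after checking via $\bar f(\bar k)=-f(k)$ that the jumps of the two factors of $P$ cancel across $\tilde\Sigma_+$, so that $P$ is analytic and $\ord(k^{-2})$ in all of $\mathbb{C}_+$, whence $\int_{\mathbb{R}}P_+(k)\,\dd k=0$ and positivity of $J$ finishes the argument.) The obstacle I anticipate is precisely here: verifying the Schwarz-reflection compatibility for the \emph{non}-self-adjoint triangular jumps on the arcs $\Sigma_j$ with their orientation, and checking that the inverse-fourth-root behaviour at $E_j,\bar E_j$ is harmless throughout (every singularity that arises being removable).

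For \emph{continuity in $(x,t)$} I would observe that $t\theta(k)=2tk^2+xk$ is smooth in $(x,t)\in\mathbb{R}^2$ (in particular across $t=0$), while $r$ is bounded and $\ord(1/k)$ at infinity and $f$ is supported on the compact arcs; hence $(x,t)\mapsto w(x,t,\cdot)$ is continuous into $L^2\cap L^\infty(\Gamma)$ and $(x,t)\mapsto C_{w(x,t)}$ is continuous in operator norm on $L^2(\Gamma)$. Since $I-C_{w(x,t)}$ is invertible for every $(x,t)$ by the previous step, and operator inversion is continuous on the open set of invertible operators, $\mu(x,t,\cdot)=(I-C_{w(x,t)})^{-1}I$ depends continuously on $(x,t)$, and so does $M(x,t,k)=I+\frac{1}{2\pi\i}\int_\Gamma\frac{(\mu w)(x,t,s)}{s-k}\,\dd s$, locally uniformly in $k\notin\Gamma$. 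The continuity of $q(x,t)$ given by \eqref{2ikM} follows.
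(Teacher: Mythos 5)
Your proposal is correct and follows essentially the same route as the paper: both reduce the RH problem to a singular integral equation on $L^2(\Gamma)$ whose unique solvability rests on exactly the three structural facts you isolate (contour symmetric about $\mathbb{R}$, $J(k)^{-1}=\overline{J(\bar k)}^T$ on the arcs $\Sigma_\pm$, $J$ positive definite on $\mathbb{R}$), and both obtain continuity in $(x,t)$ from continuity of the jump data together with continuity of operator inversion. The only difference is that the paper invokes Zhou's Theorem 9.3 for the $L^2$ invertibility and Deift's book for uniqueness, whereas you unpack those citations yourself (Fredholm index zero plus the Schwarz-reflection vanishing lemma, and the Liouville/determinant argument), which is precisely the content of the cited results.
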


\begin{proof}[Proof]
\textbf{Existence.} Let $x$ and $t$ be fixed. We look for the
solution $M(x,t,k)$ of the RH prob\-lem in the form:
\begin{equation}{\label{MinN}}
M(x,t,k)=I+\DS\frac{1}{2\pi\i}\int\limits_{\Gamma}\DS\frac{\left[I+N(x,t,s)\right]\left[I-J(x,t,s)\right]}{s-k}\d
s, \quad k\in\mathbb{C}\backslash\Gamma.
\end{equation}
One can show that the Cauchy integral (\ref{MinN}) provides
all properties of the RH problem if and only if the matrix
$N(x,t,k)$ satisfies the singular in\-te\-gral equation
\begin{equation}{\label{singulintegralequationN}}N(x,t,s)-\mathcal{K}[N](x,t,s)=F(x,t,s).\end{equation}
The singular in\-te\-gral operator $\mathcal{K}$ and the
right-hand side $F(x,t,s)$ are as follows:
\[\mathcal{K}[N](x,t,s)=\DS\frac{1}{2\pi\i}\int\limits_{\Gamma}\DS\frac{N(x,t,z)
[I-J(x,t,z)]}{z-s_+}\d z,\]
\[F(x,t,s)=\DS\frac{1}{2\pi\i}\int\limits_{\Gamma}\DS\frac{I-J(x,t,z)}{z-s_+}\d z.\]
We consider this in\-te\-gral equation in the space $L^2(\Gamma)$
of $2\times2$ matrix complex-valued func\-tions $N(k):=N(x,t,s)$.
The operator $\mathcal{K}$ is defined by the jump matrix
$J(x,t,k)$ and the generalized func\-tion
$\DS\frac{1}{z-s_+}=\lim\limits_{k\rightarrow s, k\in
+\textrm{side}}\DS\frac{1}{z-k}.$
\\
The Cauchy operator
\[
C_+[f](s)=\DS\frac{1}{2\pi\i}\int\limits_{\Gamma}\DS\frac{f(z)}{z-s_+}\d z
\]
is bounded in the space $L_2(\Gamma)$
\cite{LitvinchukSpitkovskii}.

The matrix-valued func\-tion $I-J(x,t,k)$ as a func\-tion of the
variable $k$ is in the space $L_2(\Gamma)$. Hence, the func\-tion
$F(x,t,k)$ is also in $L_2(\Gamma)$. The matrix-valued func\-tion
$I-J(x,t,k)$ is bounded as a func\-tion of the variable $k$:
$I-J(x,t,k)\in L_{\infty}(\Gamma).$ Thus $Id-\mathcal{K}$ is a
function acting in $L_2(\Gamma)$ ($Id$ is the identical operator).
The contour $\Gamma$ and the jump matrix $J(x,t,k)$ satisfy the Schwartz reflection principle
\cite{Zhou2}:
\begin{itemize}
\item the contour $\Gamma$ is symmetric with
respect to the real axis $\mathbb{R}$,
\item
$J(x,t,k)^{-1}=\overline{J(x,t,\overline{k})}^T$ for $k\in\Sigma_+\cup\Sigma_-$,
\item the jump matrix $J(x,t,k)$ has a positive
definite real part for $k\in\mathbb{R}\setminus\cup_{j=0}^n \{\Re E_j\}$.
\end{itemize}

Then Theorem 9.3 from \cite{Zhou2} (p. 984) guarantees the $L^2$
invertibility of the operator $Id-\mathcal{K}$. Therefore, the singular in\-te\-gral equation
(\ref{singulintegralequationN}) has a unique solution $N(x,t,k)\in
L_2(\Gamma)$ for any fixed $x,t\in\mathbb{R}$ and the formula
(\ref{MinN}) gives the solution of the above RH problem.
\\The operator $Id-\mathcal{K}$ depends continuously on the pa\-ram\-e\-ters
$(x,t)\in\mathbb{R}\times\mathbb{R}.$ Therefore the inverse
operator $\(Id-\mathcal{K}\)^{-1}$ also has this property. Hence,
the solution $N(x,t,k)$ of  singular in\-te\-gral equation
$(\ref{singulintegralequationN})$  also depends
continuously on $x,t$. From rep\-re\-sen\-ta\-tion$(\ref{MinN})$ we obtain the required
statement for $M(x,t,k).$
\\\textbf{Uniqueness.} The uniqueness for the Rie\-mann -- Hilbert prob\-lem (i)--(iv)
in the space $L_2(\Gamma)$ is proved in \cite{Deift99} (p. 194--198).
\end{proof}

\begin{thm}\label{smoothnessRH} For any $x\in\mathbb{R}$ and $t\neq0$, the solution of the RH problem (i)--(iv)
is infinitely differentiable in $x$ and $t$.
\end{thm}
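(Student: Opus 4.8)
The plan is to upgrade the continuity statement of the previous theorem to $C^{\infty}$-smoothness by differentiating the singular integral equation \eqref{singulintegralequationN} in the parameters $(x,t)$ and controlling the derivatives. All $(x,t)$-dependence enters only through the oscillatory exponentials $\e^{\pm 2\i t\theta(k)}$ with $\theta(k)=2k^{2}+4\xi k=2(k+\xi)^{2}-2\xi^{2}$ and $\xi=x/(4t)$, so $\partial_{x}$ and $\partial_{t}$ applied to the jump $J$ merely reproduce the same jump structure multiplied by polynomials in $k$ (and by powers of $t^{-1}$). On the compact arcs $\Sigma_{j}$ this is harmless: there $k$ is bounded, $f(k)$ is bounded up to the endpoints, and $\e^{\pm 2\i t\theta}$ is a bounded analytic factor, so $J-I$ together with all its $(x,t)$-derivatives stays in $L^{\infty}\cap L^{2}$ on $\Sigma$ and depends real-analytically on $(x,t)$. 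The difficulty is concentrated on the real line: since $q_{0}$ is discontinuous one has only $r(k)=\ord(k^{-1})$, so already $\partial_{x}J_{21}=-2\i k\,r(k)\e^{2\i t\theta}$ fails to lie in $L^{2}(\D R)$, and the naive differentiation of \eqref{singulintegralequationN} does not close.

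First I would remove this obstruction by deforming the contour near infinity, exactly as in the Deift--Zhou framework \cite{DZ93,Deift99}. Using the analyticity of $r$ in $\D C\setminus\tilde\Sigma$ --- explicit from the theta-function formulas for $a(k)$ and $b(k)$ in Section~\ref{sec.eigenfunctions} --- the factorization $J=\bigl(\begin{smallmatrix}1&0\\-r\e^{2\i t\theta}&1\end{smallmatrix}\bigr)\bigl(\begin{smallmatrix}1&-\bar r\e^{-2\i t\theta}\\0&1\end{smallmatrix}\bigr)$ on $\D R$, and the sign structure $\Im\theta(k)=4\,\Im k\,(\Re k+\xi)$, I replace the portion of $\D R$ with $|k|>R$ --- with $R$ so large that $\{|k|\le R\}$ contains all cuts $\Sigma_{j}$ --- by four arcs running to infinity into the half-planes where the corresponding triangular factor decays exponentially. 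This produces an RH problem on a new contour $\hat\Gamma$ that coincides with $\Gamma$ inside $\{|k|\le R\}$ and is related to the original problem by an explicit, $(x,t)$-analytic triangular conjugation; in particular the two problems are equivalent, so the deformed one inherits unique solvability from the existence-and-uniqueness theorem just proved. Its jump $\hat J$ has the property that $\hat J-I$ is bounded and real-analytic in $(x,t)$ on the bounded part of $\hat\Gamma$, while on the unbounded arcs $\hat J-I$, and likewise every derivative $\partial_{x}^{a}\partial_{t}^{b}(\hat J-I)$, decays faster than any power of $k$ as $k\to\infty$, uniformly for $(x,t)$ in compact subsets of $\D R\times(\D R\setminus\{0\})$. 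The uniformity of this rapid decay is exactly what forces $t$ to be kept away from $0$, which is why the hypothesis $t\neq 0$ appears.

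After this preparation the smoothness is a routine bootstrap. Since $\hat J-I$ and all its $(x,t)$-derivatives lie in $L^{1}\cap L^{2}\cap L^{\infty}(\hat\Gamma)$, the map $(x,t)\mapsto\hat{\mathcal K}_{x,t}$ into the bounded operators on $L^{2}(\hat\Gamma)$ is $C^{\infty}$, with $\partial_{x}^{a}\partial_{t}^{b}\hat{\mathcal K}_{x,t}$ the operator built from $\partial_{x}^{a}\partial_{t}^{b}\hat J$; the requisite operator bounds follow from the $L^{\infty}$-bounds on these derivatives together with the $L^{2}$-boundedness of the Cauchy projection on $\hat\Gamma$. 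Differentiating $\hat N=(Id-\hat{\mathcal K}_{x,t})^{-1}\hat F_{x,t}$ and iterating $\partial(Id-\hat{\mathcal K})^{-1}=(Id-\hat{\mathcal K})^{-1}(\partial\hat{\mathcal K})(Id-\hat{\mathcal K})^{-1}$ shows that $(x,t)\mapsto\hat N(x,t,\cdot)\in L^{2}(\hat\Gamma)$ is $C^{\infty}$. Then, for each fixed $k\notin\hat\Gamma$, one differentiates the Cauchy representation \eqref{MinN} (rewritten over $\hat\Gamma$ with $\hat N,\hat J$) under the integral sign --- legitimate because the integrand and all its $(x,t)$-derivatives are in $L^{2}(\hat\Gamma)$ while $(s-k)^{-1}$ is smooth and square-integrable in $s$ away from $k$ --- and obtains that $M(x,t,k)$ is $C^{\infty}$ in $(x,t)$; for $k$ inside a deformed lens $M$ differs from this solution by an explicit triangular matrix that is analytic in $(x,t)$, so the conclusion holds for all $k\in\D C\setminus\tilde\Gamma$, and in particular for $q(x,t)=2\i\lim_{k\to\infty}(kM(x,t,k))_{12}$. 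The only genuinely delicate point --- and the one place where real work is needed --- is the one flagged above: justifying the contour deformation and proving the uniform rapid-decay bounds on $\hat J$ and all its $(x,t)$-derivatives, uniformly for $t$ bounded away from $0$; everything after that is the standard differentiated-integral-equation argument of \cite{DZ93,Deift99}.
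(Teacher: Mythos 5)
Your overall strategy coincides with the paper's: identify the weak decay of $r$ on $\mathbb{R}$ as the obstruction, deform the contour near infinity so that the jump and all its $(x,t)$-derivatives become integrable (indeed exponentially small) there, and then differentiate the singular integral equation and the Cauchy representation. The paper does exactly this, by conjugating with $\delta^{\sigma_3}(k,k_1)$, opening lenses along rays emanating from points $k_1<\Re E_0$ and $k_2>\Re E_n$, and differentiating the resulting equation for $N^{(2)}$. However, your deformation step has a genuine gap on the left unbounded arcs. With the lower--upper factorization $J=LU$, $L=\bigl(\begin{smallmatrix}1&0\\-r\e^{2\i t\theta}&1\end{smallmatrix}\bigr)$, $U=\bigl(\begin{smallmatrix}1&-\bar r\e^{-2\i t\theta}\\0&1\end{smallmatrix}\bigr)$, the lens-opening algebra is rigid: if you set $\tilde M=MX$ in the upper lens and $\tilde M=MY$ in the lower lens, triviality of the new jump on $\mathbb{R}$ forces $X=JY$; the choice $Y=U^{-1}$, $X=L$ necessarily puts the $\e^{2\i t\theta}$-factor on the arc \emph{above} the axis and the $\e^{-2\i t\theta}$-factor \emph{below}, while any attempt to place them the other way around produces a non-unipotent factor with diagonal $\mathrm{diag}\bigl((1+|r|^2)^{-1},\,1+|r|^2\bigr)$. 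Since $|\e^{2\i t\theta}|=\e^{-8t\,\Im k\,(\Re k+\xi)}$, the factor $L$ decays in the upper half-plane only to the right of the stationary point $k_0=-\xi$; on the arcs emanating from $-R$ (which lie to the left of $k_0$ once $R>|k_0|$, as you must arrange) the $LU$-opening makes the jump grow exponentially, so your assertion that each triangular factor can be sent ``into the half-plane where it decays'' while keeping the real-axis jump trivial is not available there, and with it the claimed rapid decay of $\hat J-I$ fails.

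The repair is standard and is precisely what the paper's proof supplies: to the left one must use the second factorization $J=\hat U D\hat L$ and dispose of $D=\mathrm{diag}\bigl((1+|r|^2)^{-1},1+|r|^2\bigr)$, either by the scalar function $\delta(k,k_1)$ solving $\delta_+=\delta_-(1+|r|^2)$ on $(-\infty,k_1)$ (conjugation by $\delta^{-\sigma_3}$ absorbs $D$, after which all off-real jumps are exponentially small and your bootstrap goes through verbatim), or by simply leaving $D$ as a jump on $(-\infty,-R)$ --- which also suffices for this theorem, because $D$ is independent of $(x,t)$ and $D-I=\mathcal{O}(k^{-2})\in L^1\cap L^2$, but then your statement that $\hat J-I$ itself decays faster than any power must be weakened to ``the $(x,t)$-dependent part of $\hat J$ decays rapidly''. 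With this correction the remaining steps of your argument (smooth $(x,t)$-dependence of the singular integral operator, the resolvent identity, differentiation under the Cauchy integral, and the harmlessness of the compact portion of the contour, including the retained oscillatory jump on the bounded part of $\mathbb{R}$) match the paper's proof.
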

%%%%%%%%%%%%%%%%%%%%%%%%%%%%%%%%%%%%%

\begin{proof}[Proof]
First of all we note that it is impossible to differentiate the
equation (\ref{singulintegralequationN}) with respect to $x$ and
$t$ because the function $r(k)$, as well as the matrix
$I-J(x,t,k)$, vanishes as $k^{-1}$ when $k\to\pm\infty$ along the
real $k$-axis. To avoid a weak decreasing of the matrix
$I-J(x,t,k)$ for large real $k$, we use an equivalent RH prob\-lem
on such a contour, where the jump matrix $I-J(x,t,k)$ for large
complex $k$ becomes exponentially small.

We restrict ourselves to the case $t>0$ and $x\in\mathbb{R}$.
%\clr{Since $\Im\theta(k,\xi)=4k_2(k_1+\xi)$ (where $k=k_1+\ii k_2$) then
%the complex $k$-plane is divided   into four  domains $\Omega_1\cup \Omega_3=\{k\in\mathbb{C}: \Im\theta>0\}$
%and $\Omega_2\cup \Omega_4=\{k\in\mathbb{C}: \Im\theta<0\}$.}
\clu{We pick up two arbitrary real numbers such that $k_1<\Re E_0$ and $k_2>\Re E_n.$}
Let us perform the next transformations.
The first one:
\[{M}^{(1)}(x,t,k)=M (x,t,k)\delta^{-\sigma_3}(k,k_1),\qquad
\delta(k,k_1) = \exp\left(\displaystyle \frac{1}{2\pi i}\displaystyle\int\limits_{-\infty}^{k_1}
\displaystyle\frac {\ln\left({1+|r(s)|^2} \right)ds}{s-k}\right).
\]
Then $M_-^{(1)}(x,t,k)=M_+^{(1)}(x,t,k)J^{(1)}(x,t,k)$ where $J^{(1)}(x,t,k)=\delta^{\sigma_3}(k,\xi)J(x,t,k)\delta^{-\sigma_3}(k,k_1)$ for $k\in\Sigma_+\cup\Sigma_-$, and on the real axis
\begin{align*}\label{Jfactorization}
J^{(1)}(x,t,k)=&\begin{pmatrix} 1& \DS\frac{{r(k)}}{1-r^2(k)}
\delta^{2}_+(k,k_1)\ee^{-2\ii t\theta(k,\xi)}\cr0&1 \end{pmatrix}\begin{pmatrix}
1&0\cr\DS\frac{-r(k)}{1-r^2(k)}\delta^{-2}_-(k,k_1)\ee^{2\ii t\theta(k,\xi)}&1
\end{pmatrix}, \qquad k<k_1\\
=&\begin{pmatrix}1&0\\\\-r(k)\delta^{-2}(k,k_1)\e^{2\i t\theta(k,\xi)}&1\end{pmatrix}
\begin{pmatrix}
1&{r(k)}\delta^{2}(k,k_1)\e^{-2\i t\theta(k,\xi)}\\\\0&1
\end{pmatrix}, \qquad k>k_2.
\end{align*}

\begin{figure}[ht]
\begin{tikzpicture}%(150,160)(-70,100)
\setlength{\unitlength}{0.60mm}
\linethickness{1pt}
%%%%real line
\draw[thick,->] (-7,0) to (7,0);
%%%%%%imaginary line
\draw[thick](0,2) to (0,-2);
%%%%%%%%%%%%%% line L_3
\draw[postaction=decorate, decoration={markings, mark = at position 0.5 with {\arrow{>}}}](-7,1)[out=0, in =135] to (-5,0);
\node at (-6.5, 0.5){$D_3$};\node at (-6.5, 1.2) {$\hat L_3$};
\node at (-4.9,-0.3){$k_1$};
\node at (4.9,-0.3){$k_2$};

%%%%%%%%%%%%%% line \ol L_3
\draw[postaction=decorate, decoration={markings, mark = at position 0.5 with {\arrow{>}}}](-7,-1)[out=0, in =-135] to (-5,0);
\node at (-6.5, -0.5){$\ol{D_3}$};\node at (-6.5, -1.3) {$\ol{\hat L_3}$};

%%%%%%%%%%%%%% line L_1
\draw[postaction=decorate, decoration={markings, mark = at position 0.5 with {\arrow{<}}}](7,1)[out=180, in =45] to (5,0);
\node at (6.5, 0.5){$D_1$};\node at (6.5, 1.2) {$\hat L_1$};

%%%%%%%%%%%%%% line \ol L_1
\draw[postaction=decorate, decoration={markings, mark = at position 0.5 with {\arrow{<}}}](7,-1)[out=180, in =-45] to (5,0);
\node at (6.5, -0.5){$\ol{D_1}$};\node at (6.5, -1.3) {$\ol{\hat L_1}$};

%%%%%%%%%%%% D_2
\node at (1,1.8) {$D_2$};
\node at (1,-1.8) {$\ol{D_2}$};

%%%%%%%%line E0,\ol{E_0}
\draw[thick,postaction=decorate, decoration = {markings, mark = at position 0.25 with {\arrow{>}}},decoration = {markings, mark = at position 0.75 with {\arrow{>}}}](-4,1.5) to (-4,-1.5);
\node at (-4, 1.8) {$E_0$};\node at (-4, -1.8) {$\ol{E_0}$};
%%%%%%%%%%line Ej,\ol{E_j}
\draw[thick, postaction=decorate, decoration = {markings, mark = at position 0.25 with {\arrow{>}}},decoration = {markings, mark = at position 0.75 with {\arrow{>}}}](-1,1) to (-1,-1);
\node at (-1, 1.3) {$E_j$};\node at (-1, -1.3) {$\ol{E_j}$};
%%%%%%%%%%line En,\ol{E_n}
\draw[thick, postaction=decorate, decoration = {markings, mark = at position 0.25 with {\arrow{>}}},decoration = {markings, mark = at position 0.75 with {\arrow{>}}}](3,1.3) to (3,-1.3);
\node at (3, 1.6) {$E_n$};\node at (3, -1.6) {$\ol{E_n}$};
\end{tikzpicture}
\caption{Decomposition of $\mathbb{C}$ into $D_j$ and $\ol{D_j},$ $j=1,2,3.$}
\label{sigma2_}
\end{figure}
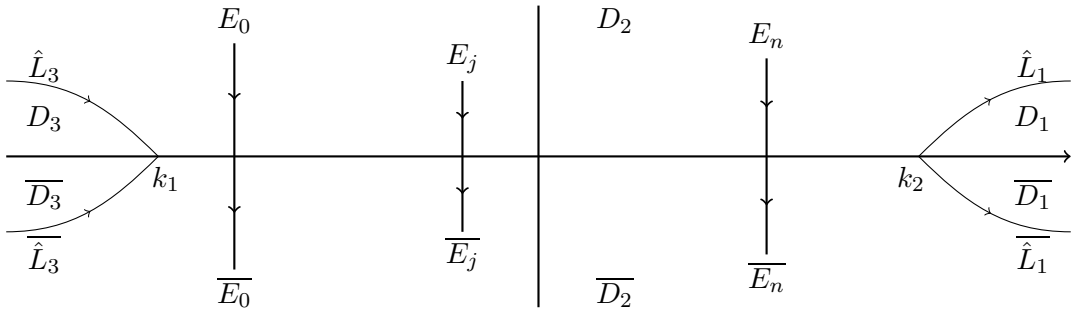

\noindent
Define a decomposition (see Fig. \ref{sigma2_}) of the complex
$k$-plane into the six  domains $D_1$, $D_2$, $D_3$ and their
complex conjugated domains $\overline{D_1}$, $\overline{D_2}$,
$\overline{D_3}$. These domains are separated by the contour
$\hat{\Sigma}^{(1)}=\mathbb{R}\cup
\hat{L}_1(k_2)\cup\overline{\hat{L}_1(k_2)}\cup
\hat{L}_3(k_1)\cup\overline{\hat{L}_3(k_1)}$, where
$\hat{L}_1(k_2)=\left\{k:\arg\(\mathrm{k-k_2}\)=\pi/4\right\}$,
\\$\overline{\hat{L}_1(k_2)}=\left\{k:\arg\(\mathrm{k-k_2}\)=-\pi/4\right\}$,
$\hat{L}_3(k_1)=\left\{k:\arg\(\mathrm{k-k_1}\)=3\pi/4\right\}$,\\
$\overline{\hat{L}_3(k_1)}=\left\{k:\arg\(\mathrm{k-k_1}\)=-3\pi/4\right\}$. The jump matrix $J^{(1)}(x,t,k)$
initiates  the second transformation:
\[{M}^{(2)}(x,t,k)=M^{(1)} (x,t,k)G^{(1)}(x,t,k),
\]
where
\begin{align*}
G^{(1)}(x,t,k)=&\begin{pmatrix}1&0\\-r(k)\delta^{-2}(k,\xi)\e^{2\i t\theta(k,\xi)}&1\end{pmatrix},
& k\in D_1,\\
=&\begin{pmatrix}1& r( k)\delta^{2}(k,\xi)\e^{-\i t\theta(k,\xi)}\\0&1\end{pmatrix},
&k\in \overline{D_1},\\
=&\begin{pmatrix}1&0\\0&1\end{pmatrix},& k\in D_2\cup\overline{D_2},
\end{align*}
\begin{align*}
G^{(1)}(x,t,k)= &\begin{pmatrix} 1& \DS\frac{{r(k)}}{1-r^2(k)}
\delta^{2}_+(k,\xi)\ee^{-2\ii t\theta(k,\xi)}\cr0&1 \end{pmatrix}, & k\in D_3\\
=&\begin{pmatrix}
1&0\cr\DS\frac{-r(k)}{1-r^2(k)}\delta^{-2}_-(k,\xi)\ee^{2\ii t\theta(k,\xi)}&1
\end{pmatrix}, & k\in \overline{D_3}.
\end{align*}

\noindent The $G^{(1)}$-transformation implies the following RH problem:
\[{M}^{(2)}_-(x,t,k)={M}^{(2)}_+(x,t,k){J}^{(2)}(x,t,k), \qquad k\in\hat{\Sigma}^{(2)},\]
\[{M}^{(2)}(x,t,k)\rightarrow I,\ k\rightarrow\infty,\]
where the jump matrix $J^{(2)}(x,t,k)=(G^{(1)}(x,t,k))^{-1}J^{(1)}(x,t,k)G^{(1)}(x,t,k)$ for $k\in\Sigma_+\cup\Sigma_-$ and
\begin{align*}
J^{(2)}(x,t,k)= &\begin{pmatrix} 1& \DS\frac{{r(k)}}{1-r^2(k)}
\delta^{2}_+(k,\xi)\ee^{-2\ii t\theta(k,\xi)}\cr0&1 \end{pmatrix}, & k\in{{L_3}(k_1)}\\
=&\begin{pmatrix}
1&0\cr\DS\frac{-r(k)}{1-r^2(k)}\delta^{-2}_-(k,\xi)\ee^{2\ii t\theta(k,\xi)}&1
\end{pmatrix}, & k\in\overline{{L_3}(k_1)},\\
=&\begin{pmatrix}1&0\\-r(k)\e^{2\i t\theta(k,\xi)}&1\end{pmatrix},& k\in {L_1}(k_2),\\
=&\begin{pmatrix}1&-\overline{r(\overline{k})}\e^{-2\i t\theta(k,\xi)}\\0&1\end{pmatrix},& k\in\overline{{L_1}(k_2)},\\
=&\begin{pmatrix}1&0\\0&1\end{pmatrix},& k\in \mathbb{R}\setminus\bigcup\limits_{j=0}^n \{\Re E_j\}.
\end{align*}

As we did for the matrix $M(x,t,k),$ let us pass  to $M^{(2)}(x,t,s)=C[I+N^{(2)}](x,t,s)$ and to the equivalent singular integral equation:
$$
N^{(2)}(x,t,s) - \DS\frac{1}{2\pi\i}\int\limits_{{\Gamma}^{(2)}}\DS\frac{N^{(2)}(x,t,z)[I-J^{(2)}(x,t,z)]}{z-s_+}\d z=
\DS\frac{1}{2\pi\i}\int\limits_{\Gamma^{(2)}}\DS\frac{[I-J^{(2)}(x,t,z)]}{z-s_+}\d z,
$$
where $\Gamma^{(2)}=\Sigma_+\cup\Sigma_-\cup L_1(\xi)\cup\overline{L_1(\xi)}\cup L_3(\xi)\cup\overline{L_3(\xi)}$.
Like  the singular integral equation \eqref{singulintegralequationN}, this equation has a unique
solution $N^{(2)}(x,t,s)\in L_2(\Gamma^{(2)})$. Now we can differentiate the last equation in $x$ and $t$ as many times as desired. Indeed, to differentiate these equations and matrix $N^{(2)}(x,t,s)$ it is sufficient that its formal derivatives are convergent. The function $I-J(x,t,s)$ is responsible for decaying of integrands in the first case. In the second case decaying of integrands is determined by $I-J^{(2)}(x,t,s)$. While on the real axis the function
$I-J(x,t,s)$ decreases as $1/s$, which does not allow to differentiate, the function $I-J^{(2)}(x,t,s)$ decreases exponentially on the infinite parts of the contour $\Gamma^{(2)}$ (we remind that $t>0$). It provides a unique solvability and existence of the partial derivatives of $N^{(2)}(x,t,k)$ with respect to $x$ and $t$. Hence, the same is true for $M^{(2)}(x,t,k)$ and $M(x,t,k)$.
\end{proof}

\begin{thm}\label{tAKNS}
Let ${\Phi}^{(2)}(x,t,k)\equiv {M}^{(2)}(x,t,k)\e^{\(\i kx+2\i k^2t\)\sigma_3}$. Then  $ {\Phi}^{(2)}(x,t,k)$ satisfies the
Ablowitz-Kaup-Newell-Segur {\rm\cite{AS}} system of equations
\begin{eqnarray}\label{xequ}
 \Phi^{(2)}_x +\i  k\sigma_3\Phi^{(2)} &=&Q(x,t)\Phi^{(2)}, \quad x\in\mathbb{R}, \quad t>0, \\
  \Phi^{(2)}_t +4\i k^3\sigma_3 \Phi^{(2)}&=&\hat{Q}(x,t,k)\Phi^{(2)},\label{tequ}
\end{eqnarray}
where \vskip-1cm
\begin{align*}
Q(x,t)=&\begin{pmatrix} 0&q(x,t)\\-\bar q(x,t)&0\end{pmatrix},\\
\hat{Q}(x,t,k)=&2kQ(x,t)-\ii(Q^2(x,t)+Q_x(x,t))\sigma_3
\end{align*}
with the function $q(x,t)$ given by
\begin{equation}\label{qinNhat1}
q(x,t)=2 \i\lim\limits_{k\rightarrow\infty}k\left[M^{(2)}(x,t,k)\right]_{12}=
\DS\frac{-1}{\pi}\int\limits_{\Gamma^{(2)}}\left(\left[I+N^{(2)}(x,t,k)\right]\left[I-J^{(2)}(x,t,k)\right]\right)_{12}\d k.
\end{equation}
\end{thm}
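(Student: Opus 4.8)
The plan is to run the classical Zakharov--Shabat reconstruction (``dressing'') argument. Its engine is the observation that the $(x,t)$-dependence of every block of the jump matrix $J^{(2)}(x,t,k)$ sits \emph{only} in the oscillatory exponentials $\e^{\pm 2\ii t\theta(k,\xi)}$, the factors $r(k)$ and $\delta(k,\cdot)$ being $(x,t)$-free; and, by \eqref{theta-xi} with $\xi=\tfrac{x}{4t}$, one has $t\,\theta(k,\xi)=2k^2t+kx$. Hence conjugating $M^{(2)}$ by the exponential $\e^{\pm(\ii kx+2\ii k^2t)\sigma_3}$ entering the definition of $\Phi^{(2)}$ cancels these factors, and $\Phi^{(2)}$ inherits a jump $\widehat J(k)$ that is \emph{independent of $x$ and $t$}. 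The logarithmic derivatives $A:=\Phi^{(2)}_x(\Phi^{(2)})^{-1}$ and $B:=\Phi^{(2)}_t(\Phi^{(2)})^{-1}$ then carry no jump, are entire in $k$ of polynomial growth, hence are polynomials in $k$; reading their coefficients off the $k\to\infty$ expansion of $M^{(2)}$ identifies $A$ and $B$ with the coefficient matrices of \eqref{xequ}--\eqref{tequ} and produces \eqref{qinNhat1} for $q$.

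First I would record the preliminaries. By Theorem \ref{smoothnessRH}, $M^{(2)}(x,t,k)$ is $C^\infty$ in $(x,t)$ for $t>0$; and since $M^{(2)}=C[I+N^{(2)}]$ with $N^{(2)}$ and all its $(x,t)$-derivatives lying in $L^2(\Gamma^{(2)})$, the derivatives $\partial_xM^{(2)},\partial_tM^{(2)}$ are again Cauchy transforms over $\Gamma^{(2)}$, hence analytic in $k\in\mathbb{C}\setminus\Gamma^{(2)}$. Thus $\Phi^{(2)}$ is smooth in $(x,t)$, analytic in $k$ off $\Gamma^{(2)}\cup\tilde\Sigma$, and has at most inverse fourth-root singularities at the $E_j,\bar E_j$. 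I would also note $\det\Phi^{(2)}=\det M^{(2)}=\det M\equiv1$: $\det M$ is analytic across all of $\Gamma$ (every block of $J$ has unit determinant), has a removable inverse-square-root singularity at each $E_j,\bar E_j$, and tends to $1$; while the transformations $M\mapsto M\delta^{-\sigma_3}\mapsto M^{(1)}G^{(1)}$ together with the conjugating exponential all have unit determinant.

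Next, a block-by-block inspection of $J^{(2)}$ on $L_1,\overline{L_1},L_3,\overline{L_3},\mathbb{R}$ and on $\Sigma_+\cup\Sigma_-$ confirms that the conjugation removes all the factors $\e^{\pm 2\ii t\theta(k,\xi)}=\e^{\pm 2\ii(2k^2t+kx)}$, so that $\Phi^{(2)}_-=\Phi^{(2)}_+\widehat J$ with $\widehat J_x=\widehat J_t=0$. Differentiating this relation in $x$ and in $t$ shows that $A$ and $B$ are analytic across $\Gamma^{(2)}\cup\Sigma_+\cup\Sigma_-$; they extend holomorphically across the $E_j,\bar E_j$ as well, because $\det\Phi^{(2)}\equiv1$ makes $(\Phi^{(2)})^{-1}$ the adjugate matrix, so $A$ and $B$ blow up there strictly more slowly than a simple pole and the isolated singularity is removable. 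Hence $A$ and $B$ are entire. Inserting $\Phi^{(2)}=\bigl(I+M_1(x,t)k^{-1}+M_2(x,t)k^{-2}+\cdots\bigr)\e^{\pm(\ii kx+2\ii k^2t)\sigma_3}$ and invoking Liouville's theorem forces $A$ to be affine in $k$ and $B$ a polynomial of low degree; comparing the first few Taylor coefficients at $k=\infty$ with the AKNS normal forms identifies $A$ with the coefficient matrix of \eqref{xequ}, whose $\mathrm{O}(1)$-part is $Q$ with $(1,2)$-entry $q=2\ii\lim_{k\to\infty}k[M^{(2)}]_{12}$ (this is where item (v) of Proposition \ref{prop.properties.eigenfunctions} is used), and $B$ with the coefficient matrix of \eqref{tequ}, whose subleading part reproduces $\hat Q=2kQ-\ii(Q^2+Q_x)\sigma_3$. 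Finally the symmetries $r(k)=-\bar r(\bar k)$, $f(k)=-\bar f(\bar k)$ and the mirror symmetry of the contour, together with uniqueness of the RH solution, yield the reflection relation for $M^{(2)}$ under $k\leftrightarrow\bar k$, whence $(M_1)_{21}=-\overline{(M_1)_{12}}$ and $Q=\begin{pmatrix}0&q\\-\bar q&0\end{pmatrix}$. The second equality in \eqref{qinNhat1} is then obtained by reading the $k^{-1}$-coefficient off $M^{(2)}=I+\tfrac{1}{2\pi\ii}\int_{\Gamma^{(2)}}\tfrac{[I+N^{(2)}][I-J^{(2)}]}{s-k}\,\d s$.

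The hard part will be the removability of the singularities of $A$ and $B$ at the branch points $E_j,\bar E_j$: it requires knowing that the \emph{shape} of the inverse fourth-root singularity of $M^{(2)}$ there is frozen in $k$, so that the singular factors cancel between $\Phi^{(2)}_x$ (resp.\ $\Phi^{(2)}_t$) and the adjugate of $\Phi^{(2)}$, leaving a blow-up of order $<1$; this is precisely where condition (ii) of the RH problem enters. The remaining technical points --- that $\partial_x,\partial_t$ commute with the Cauchy transform in the first step, and that the conjugation genuinely kills the exponentials block by block in the second --- are routine.
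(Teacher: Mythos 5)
Your proposal follows essentially the same route as the paper: conjugating $M^{(2)}$ by $\e^{(\i kx+2\i k^2t)\sigma_3}$ makes the jump $(x,t)$-independent, so the logarithmic derivatives $\Phi^{(2)}_x(\Phi^{(2)})^{-1}$ and $\Phi^{(2)}_t(\Phi^{(2)})^{-1}$ are entire, Liouville plus the Cauchy-integral expansion of $M^{(2)}$ identifies them as polynomials of degree $1$ and $2$, and the Schwartz symmetry of the contour and jump gives $Q=-Q^{*}$. The only (minor) divergence is that the paper pins down $\hat Q=2kQ-\i(Q^2+Q_x)\sigma_3$ via the zero-curvature compatibility $\Phi^{(2)}_{xt}=\Phi^{(2)}_{tx}$ rather than by matching higher-order coefficients at $k=\infty$, and your explicit removability argument at the branch points $E_j,\bar E_j$ is a point the paper passes over silently; both are sound.
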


\begin{cor}\label{smoothness_q} The func\-tion $q(x,t)$ is smooth for $x\in\mathbb{R}$,
$t\neq0$, and it satisfies the NLS equation (\ref{nls})
\end{cor}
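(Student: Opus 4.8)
The statement is a short corollary of the two preceding theorems, so the plan is simply to combine them, taking care of the one place where an interchange of limits needs justification.

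\textbf{Step 1: smoothness of $q$.} The idea is to differentiate the representation \eqref{qinNhat1}. By Theorem~\ref{smoothnessRH} the solution $N^{(2)}(x,t,\cdot)\in L^2(\Gamma^{(2)})$ of the regularised singular integral equation has partial derivatives of all orders in $x$ and $t$ for $t\neq 0$; these derivatives again lie in $L^2(\Gamma^{(2)})$ and depend continuously on $(x,t)$, because along the infinite rays of $\Gamma^{(2)}$ the matrix $I-J^{(2)}(x,t,z)$ and each of its $(x,t)$-derivatives decays exponentially (for $t>0$; the case $t<0$ is symmetric), which is exactly what makes differentiation under the integral legitimate. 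Starting from
\[
q(x,t)=-\frac{1}{\pi}\int_{\Gamma^{(2)}}\bigl(\bigl[I+N^{(2)}(x,t,k)\bigr]\bigl[I-J^{(2)}(x,t,k)\bigr]\bigr)_{12}\,\d k
\]
and applying $\partial_x^{\,j}\partial_t^{\,\ell}$, each resulting integrand is a finite sum of products of an $L^2$ factor (a derivative of $N^{(2)}$, or the constant $I$) with an $L^2\cap L^\infty$ factor (a derivative of $I-J^{(2)}$), hence integrable by the Cauchy–Schwarz inequality; continuity of the result in $(x,t)$ follows from the continuity of the $L^2$-valued maps $(x,t)\mapsto\partial_x^{\,j}\partial_t^{\,\ell}N^{(2)}(x,t,\cdot)$ together with dominated convergence. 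Thus $q\in C^\infty\bigl(\mathbb{R}\times(\mathbb{R}\setminus\{0\})\bigr)$.

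\textbf{Step 2: $q$ solves the NLS equation.} By Theorem~\ref{tAKNS} the matrix $\Phi^{(2)}(x,t,k)=M^{(2)}(x,t,k)\e^{(\i kx+2\i k^2t)\sigma_3}$ is an invertible ($\det M^{(2)}\equiv1$) solution of the over-determined linear system \eqref{xequ}--\eqref{tequ}, whose coefficients $Q$ and $\hat Q=\tilde Q$ are precisely those of the focusing NLS Lax pair \eqref{xeq}--\eqref{teq} built from the same function $q(x,t)$. Since, by Step~1, $\Phi^{(2)}$ is twice continuously differentiable in $x$ and $t$, the equality of mixed partials $\partial_x\partial_t\Phi^{(2)}=\partial_t\partial_x\Phi^{(2)}$ holds; substituting \eqref{xequ}--\eqref{tequ} and cancelling $\Phi^{(2)}$ on the right (using invertibility) produces the zero-curvature relation for $Q$, which — as recalled in the Introduction, \cite{AS,FT86,ZS} — is equivalent to $\i q_t+q_{xx}+2|q|^2q=0$. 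Hence $q$ satisfies \eqref{nls}.

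\textbf{Main obstacle.} Essentially all the analytic content has already been secured in Theorems~\ref{smoothnessRH} and \ref{tAKNS}; the only genuinely technical point left is the commutation of $(x,t)$-differentiation with the contour integral in \eqref{qinNhat1}, and for this the exponential decay of $I-J^{(2)}$ along the infinite parts of $\Gamma^{(2)}$ — the very reason the deformation $M\to M^{(2)}$ was performed — is what is needed. The remainder is the standard chain ``solution of the Lax pair $\Rightarrow$ zero-curvature equation $\Rightarrow$ NLS''.
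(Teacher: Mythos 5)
Your proposal is correct and follows essentially the same route as the paper: the paper's own argument (folded into the proof of Theorem \ref{tAKNS}) likewise obtains smoothness of $q$ by differentiating the representation \eqref{qinNhat1} under the integral, using the smoothness of $N^{(2)}$ from Theorem \ref{smoothnessRH} and the exponential decay of $I-J^{(2)}(x,t,\cdot)$ on the unbounded parts of $\Gamma^{(2)}$ for $t\neq0$, and then derives the NLS equation from the compatibility $\Phi^{(2)}_{xt}=\Phi^{(2)}_{tx}$ of the two Lax equations satisfied by the invertible matrix $\Phi^{(2)}$. The only difference is organizational: you invoke Theorem \ref{tAKNS} with the coefficients already identified (implicitly reading the $t$-equation \eqref{tequ} with $2\i k^2\sigma_3$, as the paper's proof actually uses, rather than the misprinted $4\i k^3\sigma_3$), whereas the paper determines $Q_1,Q_0$ by a Liouville-type argument and extracts the NLS equation within that same computation.
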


\begin{proof} The matrix $\Phi^{(2)}(x,t,k)$
is analytic in $k\in\mathbb{C}\backslash{\Gamma}^{(2)}.$ It
has the following jump across $\Gamma^{(2)}$.
\[\Phi^{(2)}_-(x,t,k)=\Phi^{(2)}_+(x,t,k)J^{(2)}_{0}(k),\qquad k\in\Gamma^{(2)},
\]
where $J^{(2)}_{0}(k)=\e^{(\i kx+2\i k^2t)\sigma_3}J^{(2)}(x,t,k)\e^{(-\i kx-2\i k^2t)\sigma_3}
$
is independent on $x$ and $t$. By differentiation with respect to
$x$ and $t$ we get
\[\DS\frac{\partial \Phi^{(2)}_-(x,t,k)}{\partial x}\(\Phi^{(2)}_-(x,t,k)\)^{-1}=
\DS\frac{\partial \Phi^{(2)}_+(x,t,k)}{\partial x}\(\Phi^{(2)}_+(x,t,k)\)^{-1},\]
\[\DS\frac{\partial \hat{\Phi}^{(1)}_-(x,t,k)}{\partial t}\(\Phi^{(2)}_-(x,t,k)\)^{-1}=
\DS\frac{\partial \Phi^{(2)}_+(x,t,k)}{\partial t}\(\Phi^{(2)}_+(x,t,k)\)^{-1}\] for
$k\in{\Gamma}^{(2)}.$ The last relations mean that the matrix
logarithmic derivatives \\
$\Phi^{(2)}_x(x,t,k)\(\Phi^{(2)}(x,t,k)\)^{-1}$ and
$\Phi^{(2)}_t(x,t,k)\(\Phi^{(2)}(x,t,k)\)^{-1}$ are
analytic (entire) in $k\in\mathbb{C}$. The Cauchy integral
for $M^{(2)}(x,t,k)$ gives the following asymptotic formulas:
\[M^{(2)}(x,t,k)=I+\DS\frac{ {m}^{(2)}(x,t)}{k}+\mathrm{O}(k^{-2}),\]
\[\DS\frac{\Phi^{(2)}(x,t,k)}{\partial x}=\DS\frac{{m}^{(2)}_x(x,t)}{k}+\mathrm{O}(k^{-2}),\
k\rightarrow\infty,\] where
\[
{m}^{(2)}(x,t)=\DS\frac{\i}{2\pi}\int\limits_{\Gamma^{(2)}}
\left[I+N^{(2)}(x,t,k)\right]\left[I-J^{(2)}(x,t,k)\right]\d k.
\] Hence
\[\Phi^{(2)}_x(x,t,k)\(\Phi^{(2)}(x,t,k)\)^{-1}
= -\i k\sigma_3+\i[\sigma_3, {m}^{(1)}]+\mathrm{O}(k^{-1}), \
k\rightarrow\infty.
\]
Here $[A,B]=AB-BA$. Therefore, by Liouville's theorem, the
logarithmic derivative\\
$\Phi^{(2)}_x(x,t,k)\(\Phi^{(2)}(x,t,k)\)^{-1}$ is a
polynomial of degree 1 in $k$:
\[U(k):=\Phi^{(2)}_x(x,t,k)\(\hat{\Phi}^{(1)}(x,t,k)\)^{-1}
= -\i k\sigma_3+Q(x,t),
\]
where
$
Q(x,t)\equiv\i[\sigma_3,{m}^{(2)}]=\begin{pmatrix}0&q(x,t)\\p(x,t)&0\end{pmatrix}.
$
The  symmetry of the contour and the jump matrix ensure the symmetry of the matrices
$\sigma_2\overline{\Phi^{(2)}(x,t,k)}\sigma_2= \Phi^{(2)}(x,t,k)$ and
$\sigma_2\overline{U}(\bar k)\sigma_2=U(k)$, where $\sigma_2=\begin{pmatrix}0&-\i\\\i&0
\end{pmatrix}$.  Therefore $Q(x,t)=-Q^*(x,t)$ is anti-Hermitian, i.e. $p(x,t)=-\overline q(x,t)$.
Thus $\Phi^{(2)}(x,t,k)$
satisfies the equation (\ref{xequ}) and a scalar func\-tion
$q(x,t)$ is defined by (\ref{qinNhat1}). The func\-tion $q(x,t)$
is smooth in $x\in\mathbb{R}$, $t\neq0$, because the jump matrix
${J}^{(2)}(x,t,k)$ is smooth in $x$ and $t$ by definition, the
function ${N}^{(2)}(x,t,k)$ is smooth by  Theorem
\ref{smoothnessRH}, and the integrals that have presented any
partial derivative with respect to $x$ and $t$ of the function
$q(x,t)$ are well convergent under the condition $t\neq0$.
 \\In the same way as before, we find that
 $\Phi^{(2)}_t(x,t,k)\(\Phi^{(2)}(x,t,k)\)^{-1}$ is also a
 polynomial,
 \[\Phi^{(2)}_t(x,t,k)\(\Phi^{(2)}(x,t,k)\)^{-1}=-2\i k^2\sigma_3 + kQ_1(x,t)+Q_0(x,t).
\]
Thus we see that the matrix $\Phi^{(2)}(x,t,k)$ satisfies
two differential equations:
\[
\Phi^{(2)}_x(x,t,k)=(Q(x,t)-\i k\sigma_3)\Phi^{(2)}(x,t,k),
\]
\[
\Phi^{(2)}_t(x,t,k)=(Q_0(x,t)+kQ_1(x,t)-2\i k^2\sigma_3)\Phi^{(2)}(x,t,k).
\]
Their compatibility
($\Phi^{(2)}_{xt}(x,t,k)=\Phi^{(2)}_{tx}(x,t,k)$)
gives the system of matrix equations:
\[ [\sigma_3, Q_1 ]+2[Q ,\sigma_3]=0,\]
\[(Q_1)_x +\i[\sigma_3,Q_0 ]-[Q ,Q_1 ]=0,\]
\[Q_t-(Q_0)_x +[Q ,Q_0 ]=0.\]
The first and the second equations give $ Q_1=2Q$, one $Q_0=-\i\alpha\sigma_3-\i Q_x\sigma_3$
while the third matrix equation defines $\alpha=Q^2(x,t)=-|q(x,t)|^2 I$ and NLS equation in the matrix form:
\[
\i Q_t-Q_{xx}\sigma_3 +2Q^3\sigma_3=0.
\]
Thus the theorem is proved.
\end{proof}

\begin{cor}{\label{corZakharovShabat}} Let ${\Phi}(x,t,k):=M(x,t,k)\e^{(\i k x+2\i k^2t)\sigma_3}$, where $M(x,t,k)$ is  defined  by  \ the \ original RH problem (i) - (iv). Then the  matrix  ${\Phi}(x,t,k)$ satisfies the equations
(\ref{xequ}), (\ref{tequ}) with the matrix
$$Q(x,t)=\i[\sigma_3,m(x,t)],$$ where \[m(x,t)=\lim\limits_{k\rightarrow\infty}k(M(x,t,k)-I)
=\lim\limits_{k\rightarrow\infty}k(M^{(2)}(x,t,k)-I) = {m}^{(2)}(x,t).\]
\end{cor}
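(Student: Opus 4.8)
The corollary is the counterpart, for the matrix $M$ of the original RH problem, of Theorem~\ref{tAKNS} and Corollary~\ref{smoothness_q}, which were proved for $M^{(2)}$, and the plan is simply to rerun the same Liouville argument for $\Phi$. The only ingredient that is not repeated verbatim is differentiability in $x,t$: on $\mathbb{R}$ the matrix $I-J(x,t,k)$ decays only like $k^{-1}$, so the singular integral equation \eqref{singulintegralequationN} for $M$ cannot be differentiated termwise. I would instead borrow smoothness of $M(x,t,k)$ in $(x,t)$ (for $t\neq0$) from Theorem~\ref{smoothnessRH} through the explicit factorization $M=M^{(2)}(G^{(1)})^{-1}\delta^{\sigma_3}(k,k_1)$ obtained in its proof, with $\delta(k,k_1)$ independent of $x,t$; then $\Phi(x,t,k)$ is $C^\infty$ in $(x,t)$ for $t\neq0$, analytic in $k\in\mathbb{C}\setminus\tilde\Gamma$, and carries at most inverse fourth-root singularities at the $E_j,\bar E_j$.

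The key structural observation, established exactly as for $J^{(2)}_0$ in the proof of Corollary~\ref{smoothness_q}, is that $\Phi$ has an $(x,t)$-independent jump $J_0(k)$ across $\Gamma$: all the $(x,t)$-dependence of $J$ in \eqref{J} is carried by the scalars $\e^{\pm2\i t\theta(k;\xi)}$, with $\theta(k;\xi)=2k^2+4\xi k$ and $\xi=x/4t$, that is by $\e^{\pm(2\i kx+4\i k^2t)}$, and conjugation of $J$ by the diagonal exponential factor entering $\Phi$ cancels it on each piece $\mathbb{R}$ and $\Sigma_\pm$. Because $J_0$ does not depend on $x$ or $t$, differentiating $\Phi_-=\Phi_+J_0$ gives $(\partial_x\Phi_-)\,\Phi_-^{-1}=(\partial_x\Phi_+)\,\Phi_+^{-1}$, and likewise for $\partial_t$; since the endpoint fourth-root singularities cancel in a logarithmic derivative (as they do for $M$ itself), the matrices $U(k):=\Phi_x\Phi^{-1}$ and $V(k):=\Phi_t\Phi^{-1}$ are entire in $k$.

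Now I would run Liouville. From the Cauchy formula \eqref{MinN}, $M(x,t,k)=I+m(x,t)/k+\mathrm{O}(k^{-2})$ as $k\to\infty$ non-tangentially to $\mathbb{R}$, with $m(x,t)=\lim_{k\to\infty}k(M(x,t,k)-I)=\frac{\i}{2\pi}\int_\Gamma[I+N(x,t,s)][I-J(x,t,s)]\,\d s$; a direct expansion gives $U(k)=-\i k\sigma_3+\i[\sigma_3,m(x,t)]+\mathrm{O}(k^{-1})$ and $V(k)=-2\i k^2\sigma_3+\mathrm{O}(k)$. As $U,V$ are entire with polynomial growth, Liouville forces $U(k)=-\i k\sigma_3+Q(x,t)$ with $Q(x,t):=\i[\sigma_3,m(x,t)]$ and $V(k)=-2\i k^2\sigma_3+kQ_1(x,t)+Q_0(x,t)$. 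The symmetry $\sigma_2\overline{\Phi(x,t,\bar k)}\sigma_2=\Phi(x,t,k)$, inherited from that of $\Gamma$ and $J_0$, makes $Q$ anti-Hermitian, $Q=\begin{pmatrix}0&q\\-\bar q&0\end{pmatrix}$; and the compatibility $\Phi_{xt}=\Phi_{tx}$, available since $\Phi$ is smooth, yields $Q_1=2Q$, $Q_0=-\i(Q^2+Q_x)\sigma_3$, so $V=-2\i k^2\sigma_3+\hat Q$ in the notation of Theorem~\ref{tAKNS}, and, by the very same relations used in the proof of Corollary~\ref{smoothness_q}, $q$ solves \eqref{nls}. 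Hence $\Phi$ satisfies \eqref{xequ}, \eqref{tequ} with $Q=\i[\sigma_3,m]$.

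Finally I would compare $m$ and $m^{(2)}$. In $M=M^{(2)}(G^{(1)})^{-1}\delta^{\sigma_3}(k,k_1)$, the factor $(G^{(1)})^{-1}$ equals $I$ near $k=\infty$ in the central sector $D_2\cup\overline{D_2}$ and tends to $I$ exponentially fast in $D_1,\overline{D_1},D_3,\overline{D_3}$ — its off-diagonal entries carry $\e^{\pm2\i t\theta(k;\xi)}$, which decay there once the lenses $D_j$ are placed according to $\sign t$ — so it contributes nothing to the $k^{-1}$-coefficient, while $\delta^{\sigma_3}(k,k_1)=I+\mathrm{O}(k^{-1})$ is an $(x,t)$-independent \emph{diagonal} matrix. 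Therefore $m-m^{(2)}$ is a fixed diagonal matrix, annihilated by $[\sigma_3,\,\cdot\,]$, so $Q(x,t)=\i[\sigma_3,m(x,t)]=\i[\sigma_3,m^{(2)}(x,t)]$ coincides with the matrix of Theorem~\ref{tAKNS}; in particular the $(1,2)$-entries of $m$ and $m^{(2)}$ agree, which reconciles \eqref{2ikM} with \eqref{qinNhat1}. I expect this last bookkeeping — that the transformations relating $M$ and $M^{(2)}$ do not alter the off-diagonal part of the $k^{-1}$-coefficient — to be the only step needing genuine care; it rests on the exponential decay of $\e^{\pm2\i t\theta(k;\xi)}$ into $D_1,D_3$ and their conjugates, the same estimate already exploited in Theorem~\ref{smoothnessRH}.
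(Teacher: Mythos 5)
Your proposal is correct and follows essentially the paper's own route: smoothness of $M$ in $(x,t)$ for $t\neq0$ taken from Theorem \ref{smoothnessRH}, the same constant-jump/Liouville argument as in Theorem \ref{tAKNS} to get $\Phi_x\Phi^{-1}=-\i k\sigma_3+Q$ and $\Phi_t\Phi^{-1}$ polynomial with $Q=\i[\sigma_3,m]$, and the comparison of $m$ with $m^{(2)}$ through the factorization $M=M^{(2)}(G^{(1)})^{-1}\delta^{\sigma_3}(k,k_1)$. Where the paper asserts the full equality $m=m^{(2)}$ (justifying it only by $(G^{(1)})^{-1}\delta^{\sigma_3}\to I$ without the factor $k$), you conclude only that $m-m^{(2)}$ is a constant diagonal matrix — which is in fact the accurate statement, since the $k^{-1}$ coefficient of $\delta^{\sigma_3}(k,k_1)$ is a generally nonzero diagonal matrix, and it suffices because only $[\sigma_3,m]$ and the off-diagonal $k^{-1}$ coefficient enter $Q$ and $q$.
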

\textit{Proof. } Since the matrix $M(x,t,k)$ is smooth in
$x\in\mathbb{C}$, $t\neq0$, and solves the original RH problem,
then in the same way as in the theorem
(\ref{tAKNS}), we prove that the matrix
${\Phi}$ satisfies the equations (\ref{xequ}),
(\ref{tequ}) with the matrix $Q(x,t)=\i[\sigma_3,m(x,t)]$.
By
the definition, we have
\begin{align*}
m(x,t)=&\lim\limits_{k\rightarrow\infty}k(M(x,t,k)-I)\\
=&\lim\limits_{k\rightarrow\infty}k\( {M}^{(2)}(x,t,k)-I\)\({G}^{(1)}(x,t,k)\)^{-1}\delta^{\sigma_3}(k,\xi)\\
+& k\(\(\hat{G}^{(1)}(x,t,k)\)^{-1}\delta^{\sigma_3}(k,\xi)-I\)\\
=&\lim\limits_{k\rightarrow\infty}k\({M}^{(2)}(x,t,k)-I\)={m}^{(2)}(x,t)
\end{align*}
because
$\lim\limits_{k\rightarrow\infty}\(\(\ {G}^{(1)}(x,t,k)\)^{-1}\delta^{\sigma_3}(k,\xi)-I\)=0$.

In the previous sections we construct the set of matrix-valued
Riemann -- Hilbert problems (i)--(iv).
We prove the existence of solutions to these problems,
continuity in the parameters $(x,t)\in\mathbb{R}\times
\mathbb{R}$ and smoothness for $x\in\mathbb{R}$ and $t\neq0$. Any
solution of the RH problem (i)--(iv) is
associated with the function $q(x,t)$ by the formula
$q(x,t)=\lim\limits_{k\rightarrow\infty}2\i k M(x,t,k)_{21}$. We
prove that $q(x,t)$ is smooth in $x\in\mathbb{R}$, $t\neq0$ and
satisfies the focusing nonlinear Schr\"odunger equation
(\ref{nls}). The behavior of $q(x,t)$ in the neighborhood of
$t=0$ is determined by the parameter of the original RH problem
function $r(.)$. If $r(.)\in L_1(\Gamma)$, then $q(x,t)$ is
jointly continuous in $(x,t)\in\mathbb{R}\times\mathbb{R}$. If
$r(.)\notin L_1(\Gamma)$, then the integral
$\int\limits_{\Gamma}r(s)\e^{2\i sx+8\i s^3t} \d s$ converges
improperly for $x,t\neq0$ and
\[q(x,t)+\DS\frac{1}{\pi}\int\limits_{\Gamma}r(s)\e^{2\i sx+8\i
s^3t}\d s\in C(\mathbb{R}\times\mathbb{R}).\]

\section{Long-time asymptotics} \label{sec.as}
\setcounter{equation}{0}
%---------------------------------

\begin{rem*}
In the present paper, we consider the initial value problem
with the pure step-like initial conditions. This is done
basically to fix ideas while avoiding technical
complications arising in the case when $q_0(x)$ given by
(\ref{ic}) is not exact initial data but the large-$x$
asymptotics for $q(x,0)$. In the latter case, the
reflection coefficient $r(k)$ cannot, in general, be
analytically extended from the real axis, which leads to
additional steps in the sequence of RH problem deformations
related to rational approximations  of $r(k)$ and the
consequent ``opening of lenses'' procedures (cf.
\cite{DZ93}, \cite{DKMVZ99}, \cite{BIK09}) along the real
axis and the contour $\Sigma$. Also,
possible zeros of the spectral function $a(k)$ will
generate solitons.
\end{rem*}

\subsection{Region $\xi\equiv\frac{x}{4t}=-\Re{E_0}+\frac{\rho\ln t}{4\Im{E_0}t}.$}

In this region we continue to use the function $\theta(k,\xi)=2k^2+4\xi k.$ Let us pick up a sufficiently small but fixed positive number $r>0.$ First of all, we `bend' the segment $[E_0, \ol{E_0}]$ to the right, and call the resulting (smooth) line
$$\hat\Sigma_0=\hat\Sigma_0^+\cup\hat\Sigma_0^-, \quad \mbox{ where } \quad \hat\Sigma_0^+=[E_0, E_0+r, \Re{E_0}+10 r], \quad \hat\Sigma_0^-= [\Re{E_0}+10 r, \ol{E_0}+r, \ol{E_0}].$$
Denote by $\Omega_u$ the domain between $\mathbb{R},$ $\Sigma_0^+,$ and  $\hat\Sigma_0^+,$ and denote by $\Omega_d$ the domain between $\mathbb{R}, \Sigma_0^-, \hat\Sigma_0^-$ (subscript $u$ means `up', $d$ means `down').
Denote by $\hat\Sigma$ the contour $\Sigma,$ where instead of the segment $\Sigma_0$ we take the curve $\hat\Sigma_0,$
\[\hat\Sigma=\hat\Sigma_0\cup\ldots \cup\Sigma_j\cup\ldots\cup\Sigma_n.\]
Function $f(k),$ defined on $k\in[E_0,\ol E_0],$ admits a piece-wise analytic continuation $\widehat f(k)$ in a neighborhood of the segment $[E_0, 0],$
\[\widehat f(k+0)=f(k), \quad k\in[E_0,0],\]
where $\widehat f(k)$ is analytic in a neighborhood of the segment $[E_0, 0],$ excluding the segment $[E_0, 0]$ itself.
Furthermore, define
\[\begin{split}
M^{(2)}(\xi,t;k)&=M(\xi,t;k)\begin{bmatrix}1 & 0 \\ \widehat f(k)\e^{2\i t \theta(k,\xi)} & 1\end{bmatrix}, k \in \Omega_u,
\\
&=M(\xi,t;k)\begin{bmatrix}1 & -\ol{\widehat f(\ol{k})}\e^{-2\i t \theta(k,\xi)} \\ 0 & 1\end{bmatrix}, k \in \Omega_d,
\\
&=M(\xi,t;k), \mbox{ elsewhere}.
\end{split}\]
The function $M^{(2)}$ satisfies the Basic RH problem with the contour $\hat\Sigma$ instead of $\Sigma$ (see Figure \ref{Fig2_Gamma}).
\begin{figure}[ht!]
\begin{tikzpicture}
%%%%real line
\draw[thick,->] (-5,0) to (5,0);
\node at (5,-0.3) {$\Re k$};
%%%%imaginary line
\draw[thick](0,2) to (0,-2);
%line \hat\Sigma_0
\draw[postaction=decorate, decoration = {markings, mark = at position 0.25 with {\arrow{>}}},decoration = {markings, mark = at position 0.75 with {\arrow{>}}}] (-4,1.5) to (-3.5, 1.5) to [out=0, in =90] (-2.5,0) to [out=-90, in =0] (-3.5, -1.5) to (-4,-1.5);
%%%% line \Im\theta=0
\draw[dashed](-3.9,2.2) to (-3.9,-2.2);
\node at (-3,2.2) {$\Im\theta=0$};
%%%%circles
\draw[fill=black] (-4,1.5) circle [radius =0.05];
\draw[fill=black] (-4,-1.5) circle [radius =0.05];

\node at (-4.5, 1.8) {$E_0$};\node at (-4.5, -1.8) {$\ol{E_0}$};

%%%%%line Ej,\ol{E_j}
\draw[thick, postaction=decorate, decoration = {markings, mark = at position 0.25 with {\arrow{>}}},decoration = {markings, mark = at position 0.75 with {\arrow{>}}}](-1,1) to (-1,-1);
\node at (-1, 1.3) {$E_j$};\node at (-1, -1.3) {$\ol{E_j}$};
%%%%%line En,\ol{E_n}
\draw[thick, postaction=decorate, decoration = {markings, mark = at position 0.25 with {\arrow{>}}},decoration = {markings, mark = at position 0.75 with {\arrow{>}}}](3,1.3) to (3,-1.3);
\node at (3, 1.6) {$E_n$};\node at (3, -1.6) {$\ol{E_n}$};
\end{tikzpicture}
\caption{The oriented contour $\hat\Gamma = \mathbb{R}\cup\hat\Sigma$ and (dashed) line $\Im\theta=0.$}
\label{Fig2_Gamma}
\end{figure}
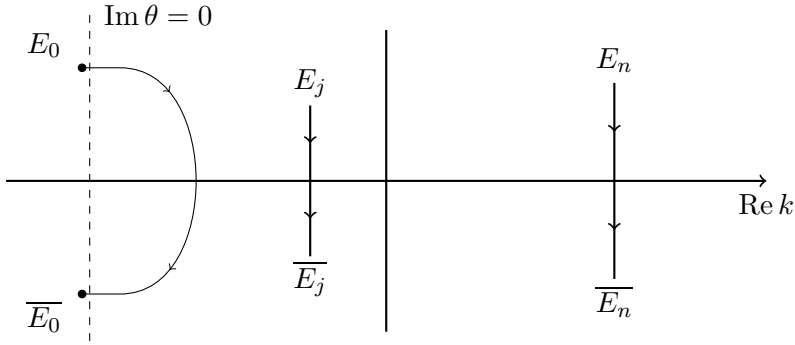

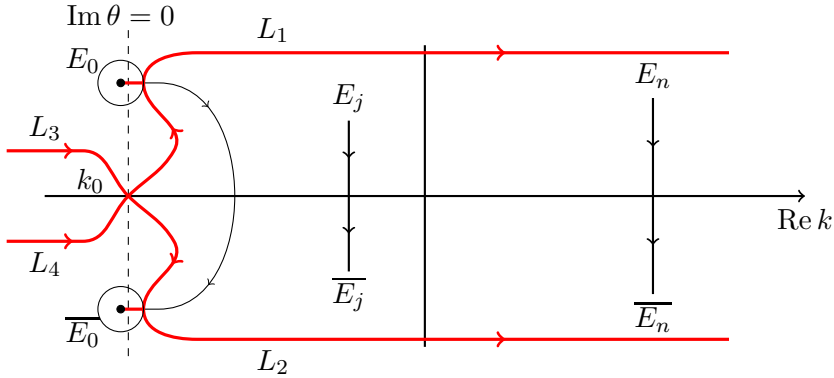
\begin{figure}[ht]
\begin{tikzpicture}
%real line
\draw[thick,->] (-5,0) to (5,0);
\node at (5,-0.3) {$\Re k$};
%imaginary line
\draw[thick](0,2) to (0,-2);
%line \hat\Sigma_0
\draw[postaction=decorate, decoration = {markings, mark = at position 0.25 with {\arrow{>}}},decoration = {markings, mark = at position 0.75 with {\arrow{>}}}] (-4,1.5) to (-3.5, 1.5) to [out=0, in =90] (-2.5,0) to [out=-90, in =0] (-3.5, -1.5) to (-4,-1.5);
% line \Im\theta=0
\draw[dashed](-3.9,2.2) to (-3.9,-2.2);
\node at (-4,2.4) {$\Im\theta=0$};
%lenses
\draw[very thick, red, postaction=decorate, decoration={markings, mark = at position 0.12 with {\arrow{>}}},decoration={markings, mark = at position 0.7 with {\arrow{>}}}] (-3.9,0) to [out=45, in=-120] (-3.3, 0.6) to [out=60, in=-90] (-3.7,1.5) to [out=90, in =180](-3,1.9) to (4,1.9);
\draw[very thick, red, postaction=decorate, decoration={markings, mark = at position 0.12 with {\arrow{>}}},decoration={markings, mark = at position 0.7 with {\arrow{>}}}] (-3.9,0) to [out=-45, in=120] (-3.3, -0.6) to [out=-60, in=90] (-3.7, -1.5) to [out=-90, in =-180](-3,-1.9) to (4,-1.9);
\node at (-2, 2.2) {$L_1$};\node at (-2, -2.2) {$L_2$};
%lines L_3, L_4
\draw[very thick, red,  postaction = decorate, decoration = {markings, mark = at position 0.6 with {\arrow{<}}}] (-3.9,0) [out=135, in=0] to (-4.5,0.6) [out =180] to (-5.5,0.6);
\draw[very thick, red, postaction = decorate, decoration = {markings, mark = at position 0.6 with {\arrow{<}}}] (-3.9,0) [out=-135, in=0] to (-4.5, -0.6) [out = -180] to (-5.5,-0.6);
\node at (-5, 0.9) {$L_3$}; \node at (-5, -0.9) {$L_4$};
%lines $E_0, E_0+r$ and symmetric
\draw[very thick, red] (-4,1.5) to (-3.7, 1.5);
\draw[very thick, red] (-4,-1.5) to (-3.7, -1.5);
%circles
\draw (-4,1.5) circle [radius =0.3]; \draw (-4,-1.5) circle [radius =0.3];
\draw[fill=black] (-4,1.5) circle [radius =0.05];
\draw[fill=black] (-4,-1.5) circle [radius =0.05];

\node at (-4.5, 1.8) {$E_0$};\node at (-4.5, -1.8) {$\ol{E_0}$};
%line Ej,\ol{E_j}
\draw[thick, postaction=decorate, decoration = {markings, mark = at position 0.25 with {\arrow{>}}},decoration = {markings, mark = at position 0.75 with {\arrow{>}}}](-1,1) to (-1,-1);
\node at (-1, 1.3) {$E_j$};\node at (-1, -1.3) {$\ol{E_j}$};
%line En,\ol{E_n}
\draw[thick, postaction=decorate, decoration = {markings, mark = at position 0.25 with {\arrow{>}}},decoration = {markings, mark = at position 0.75 with {\arrow{>}}}](3,1.3) to (3,-1.3);
\node at (3, 1.6) {$E_n$};\node at (3, -1.6) {$\ol{E_n}$};
% point k_0
\node at (-4.4,0.2) {$k_0$};

\end{tikzpicture}
\caption{Opening of the lenses. The point $k_0=k_0(\xi)=-\xi.$}
\label{Fig3_Gamma}
\end{figure}

We draw the lines $L_j,$ $j=1,2,3,4,$ as shown in Figure \ref{Fig3_Gamma}
Denote the domain consisted between $L_1$ and $\mathbb{R}$ by $\Omega_1,$ the one between $L_2$ and $\mathbb{R}$ by $\Omega_2,$ the one between $L_3$ and $\mathbb{R}$ by $\Omega_3,$ and the domain between $L_4$ and $\mathbb{R}$ by $\Omega_4.$
Furthermore, introduce the scalar  function $\delta(k,\xi)$ as the solution to the following conjugation problem:
\[
\begin{split}
&\delta_+(k,\xi) = \delta_-(k,\xi) (1+|r(k)|^2), \ k\in(-\infty, k_0=-\xi),
\\
&\lim\limits_{k\to\infty}\delta(k) = 1,
\\
&\delta(k) \mbox{ is analytic in } k\in\mathbb{C}\setminus(-\infty,k_0].
\end{split}
\]
The function $\delta(k,\xi)$ satisfies the symmetry property
\[\ol\delta(\ol k,\xi)\cdot \delta(k,\xi)=1,\]
and can be found explicitly in the form
\[\delta(k,\xi) = \exp\left[\frac{1}{2\pi\i}\int\limits_{-\infty}^{k_0}\frac{\ln(1+|r(s)|^2)\ \d s}{s-k}\right].\]
Furthermore, define the function
\[\begin{split}
M^{(3)}(\xi,t;k) &= M^{(2)}(\xi,t;k) \begin{bmatrix}1 & 0 \\ -r(k)\delta^{-2}(k,\xi)\e^{2\i t\theta(k,\xi)} & 1\end{bmatrix}, k\in\Omega_1,
\\
&=M^{(2)}(\xi,t;k) \begin{bmatrix}1 & \ol{r(\ol{k})} \delta^{2}(k,\xi)\e^{-2\i t\theta(k,\xi)} \\ 0 & 1\end{bmatrix}, k\in\Omega_2,
\\\\
&=M^{(2)}(\xi,t;k) \begin{bmatrix} 1 & -\ol{r(\ol{k})}\(1+r(k)\ol{r(\ol{k})}\)^{-1} \delta^{2}(k,\xi) \e^{-2\i t\theta(k,\xi)}
\\ 0 & 1
\end{bmatrix}, k\in\Omega_3,
\\\\
&=M^{(2)}(\xi,t;k) \begin{bmatrix}1 & 0 \\ r(k)\(1+r(k)\ol{r(\ol{k})}\)^{-1} \delta^{-2}(k,\xi) \e^{2\i t\theta(k,\xi)} & 1\end{bmatrix}, k\in\Omega_4,
\end{split}\]

In view of \eqref{f}, the function $M^{(3)}$ satisfies the following RHP:
\begin{EqDefRHxt*}
Find a $2\times 2$-valued function $M^{(3)}(\xi,t;k)$ such that
\begin{enumerate}[\rm(i)]
\item $M^{(3)}(\xi,t,k)$ is  analytic in $k\in\D{C}\setminus\tilde\Gamma$, $\ \tilde\Gamma=L_1\cup L_2\cup L_3\cup L_4\cup[E_0,E_0+r]\cup[\ol{E_0},\ol{E_0}+r]$.
%\item $M(x,t,k)$ has  the inverse fourth root singularities at $E_j$ and %$\bar E_j$;
\item the boundary values $M^{(3)}_\pm(x,t,k)$ satisfy the jump conditions
$$
M^{(3)}_-(x,t,k) = M^{(3)}_+(x,t,k)J^{(3)}(x,t,k), k\in\tilde\Gamma,
$$
where the jump matrix $J^{(3)}(x,t,k)$ is defined as follows:
\begin{equation}\label{J3}
\begin{split}
J^{(3)}(x,t,k)&=
\begin{bmatrix}1 & 0 \\ -r(k)\delta^{-2}(k,\xi)\e^{2\i t\theta(k,\xi)} & 1\end{bmatrix}, k\in L_1,
\\\\
&=\begin{bmatrix}1 & -\ol{r(\ol{k})} \delta^{2}(k,\xi)\e^{-2\i t\theta(k,\xi)} \\ 0 & 1\end{bmatrix}, k\in L_2,
\\\\
&=
\begin{bmatrix} 1 & -\ol{r(\ol{k})}\(1+r(k)\ol{r(\ol{k})}\)^{-1} \delta^{2}(k,\xi) \e^{-2\i t\theta(k,\xi)}
\\ 0 & 1
\end{bmatrix}, k\in L_3,
\\\\
&=\begin{bmatrix}1 & 0 \\ -r(k)\(1+r(k)\ol{r(\ol{k})}\)^{-1} \delta^{-2}(k,\xi) \e^{2\i t\theta(k,\xi)} & 1\end{bmatrix}, k\in L_4,
\end{split}
\end{equation}
\[\begin{split}
&
=\begin{bmatrix}1 & 0 \\ \widehat f(k)\delta(k,\xi)^{-2} \e^{2\i t\theta(k,\xi)} & 1\end{bmatrix}, k\in l^+\  (\sim[E_0, E_0+r]),
\\\\
&=\begin{bmatrix}1 & -\ol{\widehat f(\ol{k})}\ \delta(k,\xi)^2 \e^{-2\i t\theta(k,\xi)} \\ 0 & 1\end{bmatrix}, k\in l^-\ (\sim[\ol{E_0}+r, \ol{E_0}]),
\end{split}
\]
\item
$$
M^{(3)}(\xi,t;k) = I + \ord\left(\frac{1}{k}\right)  \quad\text{as }\ k\to \infty.
$$
\end{enumerate}
\end{EqDefRHxt*}
Let us notice, that the jumps on $L_j,$ $j =1,\ldots,4$ except for the vicinity of the point $k_0,$ are exponentially small, and hence the main contribution comes from the segments $l^+\sim [E_0, E_0+r],$ $l^- \sim [\ol{E_0}, \ol{E_0+r}]$ and (sub-leading of order $t^{-1/2}$) contribution comes from the neighborhood of $k_0.$

\subsubsection{Local changes of variables in the vicinity of the points $k=E_0,$ $k=\ol E_0.$}

Let us denote $E_0 = A+\i B,$ and take \[\xi = -A-\frac{\rho}{4B}\frac{\ln t}{t},\]
and make local changes of variable $$k = E_0+y,\qquad k = \ol{E_0}+y_d$$ in the vicinity of the points $k=E_0,$ $\ol{E_0},$ respectively.
Let us notice, that the behavior of $f(k)$ on the interval
$k\in(E_0, E_0-\i r)$ is as follows:
\[f(k) = \e^{-\pi\i/4}c_0\sqrt{\i(k-E_0)}(1+\mathcal{O}(\sqrt{\i(k-E_0)})),\quad k\in(E_0, E_0-\i r),\]
where the root is positive on $(E_0, E_0-\i r).$
Hence, the behavior of the function $\widehat f(k)$
 in the vicinity of the point $k=E_0$ is as follows:
$$\widehat f(k)=\sqrt{k-E_0}\cdot\hat\phi(k),\qquad
\ol{\widehat f(\ol{k})}=\sqrt{k-\ol{E_0}}\cdot \ol{\hat\phi(\ol{k})},\qquad \hat\phi(k=E_0)=c_0.$$
where $\phi(k)$ is an analytic function in a vicinity of $k=E_0$, separated both from $0$ and $\infty.$

We have furthermore
\[\begin{split}
\frac{\widehat f(k)\e^{2\i t \theta(k,\xi)}}{\delta(k,\xi)^{2}}
=-\sqrt{tz}\,\e^{-tz}t^{2\gamma}\e^{-2\i\varphi(t)}(-\phi(k,\xi)),
\qquad
\frac{-\ol{\widehat f(\ol{k})}\delta(k,\xi)^2}{\e^{2\i t \theta(k,\xi)}}
=\sqrt{t z_d}\,\e^{-tz_d}t^{2\gamma}\e^{2\i\varphi(t)}(-\ol{\phi(\ol{k})}),
\end{split}\]
where we made the further local change of variables
\[z=y\(8B+\frac{2\i \rho\ln t}{B t}\)-4\i y^2,
\qquad
z_d=y_d\(8B-\frac{2\i \rho\ln t}{Bt}\)+4\i y_d^2,
\]
and denoted
\begin{equation}\label{gamma}\gamma=\rho-\frac14,\quad \varphi(t) =
2 t (A^2+B^2)+\frac{\rho A \ln t}{B},\quad \phi(k,\xi) = \sqrt{\frac{y}{z}}\cdot\frac{\hat\phi(k)}{\delta(k,\xi)^2},\quad \ol{\phi(\ol k,\xi)} = \sqrt{\frac{y_d}{z_d}}\ \cdot \ol{\hat\phi(\ol k)}\cdot\delta(k,\xi)^{2}.\end{equation}
We fix the segments $l^{\pm}$ by the condition that $z, z_d$ are real on $l^{\pm},$ respectively.
Finally, denote
\[\zeta = z t,\qquad \zeta_d= z_d t.\]
Hence, the jump $J^{(3)}$ on the segment $l^+$ can be written as
\begin{equation}\label{jumps_J3}\begin{split}J^{(3)}&=t^{-\gamma\sigma_3}
\e^{\i\varphi(t)\sigma_3}(-\phi(k,\xi))^{-\frac{\sigma_3}{2}}\begin{bmatrix}1 & 0 \\ -\sqrt{\zeta}\,\e^{-\zeta} & 1\end{bmatrix}
(-\phi(k,\xi))^{\frac{\sigma_3}{2}}\e^{-\i\varphi(t)\sigma_3}t^{\gamma\sigma_3}, \quad k\in l^+,
\\\\
&=t^{\gamma\sigma_3}\e^{\i\varphi(t)\sigma_3}[-\ol{\phi(\ol k,\xi)}]^{\frac{\sigma_3}{2}}\begin{bmatrix}1 & \sqrt{\zeta_d}\,\e^{-\zeta_d} \\ 0 & 1\end{bmatrix}
\left[-\ol{\phi(\ol k,\xi)}\right]^{-\frac{\sigma_3}{2}}\e^{-\i\varphi(t)\sigma_3}t^{-\gamma\sigma_3}, \quad k\in l^-.
\end{split}\end{equation}
Formula \eqref{jumps_J3} clearly indicates the possibility of using generalized Laguerre polynomials of index $1/2,$ which will be done in the next section.

\subsubsection{Generalized Laguerre polynomials of index $\frac12$.}

Denote
$p_n(\zeta)=L^{(1/2)}_n(\zeta)=\frac{(-1)^n}{n!}\zeta^n+...,\quad
\pi_n(\zeta)=(-1)^nn!p_n(\zeta)=\zeta^n+...$
$$\int\limits_0^{+\infty}\zeta^{1/2}\e^{-\zeta}p_n(\zeta)p_m(\zeta)d\zeta=\frac{\Gamma(n+\frac32)}{n!}\delta_{m,n},
\int\limits_0^{+\infty}\zeta^{1/2}\e^{-\zeta}\pi_n(\zeta)\pi_m(\zeta)d\zeta=\Gamma(n+\frac32){n!}\delta_{m,n}.$$

The generalized Laguerre polynomials with index $\frac 1 2$ and degree $n$ solve a RHP of the form
\begin{eqnarray}
\nonumber
L_-(\zeta) &=& L_+(\zeta) J_L(\zeta), \ \  \ \zeta \in \mathbb R_+,
\\\nonumber
J_L(\zeta) &=&
\begin{pmatrix}
1 & 0
\\
-\sqrt{\zeta}\e^{-\zeta} & 1
\end{pmatrix},
\\
L(\zeta) &=& \left({\mathbf 1} + \mathcal O(\zeta^{-1})\right)  \zeta^{-n\sigma_3} ,\ \ \ \zeta \to \infty. \label{Las1}
\end{eqnarray}
and the solution is written as follows: for $n\geq 1$
\begin{equation}\label{LaguerreMatrixfiniten}L(\zeta)=\begin{pmatrix}\frac{-2\pi\i}{\Gamma(n+\frac12)\Gamma(n)}\displaystyle\frac{1}{2\pi\i}\int\limits_{0}^{+\infty}\frac{\sqrt{s}\,\e^{-s}\pi_{n-1}(s)\d
s}{s-\zeta} &
\frac{-2\pi\i}{\Gamma(n+\frac12)\Gamma(n)}\pi_{n-1}(\zeta)
\\\\
\displaystyle\frac{1}{2\pi\i}\int\limits_{0}^{+\infty}\frac{\sqrt{s}\,\e^{-s}\pi_n(s)\d
s}{s-\zeta}
 & \pi_n(\zeta)
\end{pmatrix},\end{equation}
 and for $n=0$
$$L(\zeta)=\begin{pmatrix}
            1 & 0 \\ \frac{1}{2\pi\i}\int\limits_{0}^{+\infty}\frac{\sqrt{s}\ \e^{s}\ \d s}{s-\zeta} & 1
           \end{pmatrix}.
$$

\noindent
Furthermore, the matrix function
$$
L_d(\zeta)=\begin{pmatrix}0&1\\1&0\end{pmatrix}L(\zeta)\begin{pmatrix}0&1\\1&0\end{pmatrix}
$$
 solves a RHP of the form
\begin{eqnarray}\nonumber
L_{d,-}(\zeta) &=& L_{d,+}(\zeta) J_{L_d}(\zeta), \ \  \ \zeta \in (+\infty,0)\  (\textrm{the orientation is from } +\infty \textrm { to } 0),
\\\nonumber
J_{L_d}(\zeta) &=&
\begin{pmatrix}
1 & \sqrt{\zeta_d}\e^{-\zeta_d} \\ 0 & 1
\end{pmatrix},
\\
L_d(\zeta_d) &=& \left({\mathbf 1} + \mathcal O(\zeta_d^{-1})\right)  \zeta_d^{n\sigma_3} ,\ \ \ \zeta_n \to \infty. \label{Lasd1}
\end{eqnarray}
To show the relation with our RH problem for $M^{(3)}$, consider the functions
\[\begin{split}&L^{(1)}=\(-\phi(k,\xi) t^{2\gamma}\e^{-2\i\varphi(t)}\)^{-\sigma_3/2}\cdot
L(\zeta)\cdot\(-\phi(k,\xi) t^{2\gamma}\e^{-2\i\varphi(t)}\)^{\sigma_3/2},
\\&
L_d^{(1)}=\(-\ol\phi(\ol k,\xi) t^{2\gamma}\e^{2\i\varphi(t)}\)^{\sigma_3/2}\cdot
L_d(\zeta_d)\cdot\(-\ol\phi(\ol k,\xi) t^{2\gamma}\e^{2\i\varphi(t)}\)^{-\sigma_3/2};
\end{split}\]  they have
the following jumps on $\zeta\in(0,+\infty),$ $\zeta_d\in(+\infty,0),$ respectively (compare with \eqref{jumps_J3}):
$$(L_+^{(1)})^{-1}L_-^{(1)}=\begin{pmatrix}1&0\\ \phi(k,\xi) t^{2\gamma}\e^{-2\i\varphi}\sqrt{\zeta}\e^{-\zeta}&1\end{pmatrix},\qquad
(L_{d,+}^{(1)})^{-1}L_{d,-}^{(1)}=\begin{pmatrix}1&-\ol\phi(\ol k,\xi) t^{2\gamma}\e^{2\i\varphi}\sqrt{\zeta_d}\e^{-\zeta_d} \\ 0&1\end{pmatrix}.$$
% -----

\noindent Further, developing up to $\zeta^{-1}$, $\zeta_d^{-1}$ term in the asymptotics (\ref{Las1}), (\ref{Lasd1}) of $L$, $L_d$ as $\zeta,\zeta_d\to\infty,$ we obtain
\begin{equation}\label{Las2}L(\zeta)=\left[\begin{pmatrix}1+\frac{n^2+\frac{n}{2}}{\zeta} & \frac{-2\pi\i\ n}{\Gamma(n+\frac12)\, n!\ \zeta}
\\
\frac{-n!\ \Gamma(n+\frac32)}{2\pi\i\ \zeta} &
1-\frac{n^2+\frac{n}{2}}{\zeta}\end{pmatrix}+\mathrm{O}(\zeta^{-2})\right]\zeta^{-n\sigma_3},\quad\zeta\to\infty,
\end{equation}
\begin{equation}\label{Lasd2}L_d(\zeta)=\left[\begin{pmatrix}1-\frac{n^2+\frac{n}{2}}{\zeta_d} & \frac{-n!\ \Gamma(n+\frac32)}{2\pi\i\ \zeta_d}
\\
\frac{-2\pi\i\ n}{\Gamma(n+\frac12)\, n!\ \zeta_d} &
1+\frac{n^2+\frac{n}{2}}{\zeta_d}\end{pmatrix}+\mathrm{O}(\zeta_d^{-2})\right]\zeta_d^{n\sigma_3},\quad\zeta_d\to\infty.
\end{equation}
 The  formulas \eqref{Las2}, \eqref{Lasd2} include also the case $n=0.$

\subsubsection{Local change of variable in the vicinity of the point $k=k_0\equiv-\xi.$}

For $k$ in the vicinity the point $k=k_0\equiv -\xi$ we make the following changes of variable:
\[k=k_0+\mu, \qquad \lambda = \sqrt{2t}\,\mu.\]
Then
\[\theta(k,\xi)=2k^2+4\xi k = -2\xi^2+2\mu^2,\qquad
2\i t\theta(k,\xi) = -4\i t\xi^2+2\i\lambda^2.\]
Furthermore,
the function $\delta(k,\xi)$ can be rewritten in the form
\[\delta(k,\xi) = (k-k_0)^{-\i\nu(\xi)}\cdot\chi(k,\xi),
\qquad \mbox{where }\quad
\nu = \frac1{2\pi}{\ln(1+|r(k_0)|^2)},\]
and the function $\chi(k,\xi)$ has a non-zero limit at the point $k=k_0,$
\begin{equation}\label{chi}
\chi(k,\xi) = (k+N)^{\i\nu}\cdot
\exp[\frac{1}{2\pi\i}\int\limits_{-N}^{k_0}\frac{\ln\frac{1+|r(s)|^2}{1+|r(k_0)|^2}\ \d s}{s-k}
+
\frac{1}{2\pi\i}\int\limits_{-\infty}^{-N}\frac{\ln(1+|r(s)|^2)\ \d s}{s-k}].\end{equation}
The latter expression does not depend on the choice of the parameter $-N<k_0.$

Let us pick up some sufficiently small (fixed) positive $\delta>0.$
Then the jump matrix $J^{(3)}$ has the following representation
for $|k-k_0|<\delta$:
\[\begin{split}J^{(3)}=&
\(\chi(k,\xi)\e^{2\i t\xi^2}(2t)^{\frac{\i\nu}{2}}\)^{\sigma_3}
\begin{bmatrix} 1 & 0 \\ -r(k)\lambda^{2\i\nu}\e^{2\i\lambda^2}
&1
\end{bmatrix}
\(\chi(k,\xi)\e^{2\i t\xi^2}(2t)^{\frac{\i\nu}{2}}\)^{-\sigma_3}, \ k\in L_1,
\\
&
\(\chi(k,\xi)\e^{2\i t\xi^2}(2t)^{\frac{\i\nu}{2}}\)^{\sigma_3}
\begin{bmatrix} 1 & -\ol{r(\ol k)}\lambda^{-2\i\nu}\e^{-2\i\lambda^2}\\0
&1
\end{bmatrix}
\(\chi(k,\xi)\e^{2\i t\xi^2}(2t)^{\frac{\i\nu}{2}}\)^{-\sigma_3}, \ k\in L_2,
\\
&
\(\chi(k,\xi)\e^{2\i t\xi^2}(2t)^{\frac{\i\nu}{2}}\)^{\sigma_3}
\begin{bmatrix} 1 & \frac{-\ol{ r(\ol k)}}{1+r(k)\ol r(\ol k)}\lambda^{-2\i\nu}\e^{-2\i\lambda^2}\\0
&1
\end{bmatrix}
\(\chi(k,\xi)\e^{2\i t\xi^2}(2t)^{\frac{\i\nu}{2}}\)^{-\sigma_3}, \ k\in L_3,
\\
&
\(\chi(k,\xi)\e^{2\i t\xi^2}(2t)^{\frac{\i\nu}{2}}\)^{\sigma_3}
\begin{bmatrix} 1 & 0 \\ \frac{-r(k)}{1+r(k)\ol r(\ol k)}\lambda^{2\i\nu}\e^{2\i\lambda^2}
&1
\end{bmatrix}
\(\chi(k,\xi)\e^{2\i t\xi^2}(2t)^{\frac{\i\nu}{2}}\)^{-\sigma_3}, \ k\in L_4,
\\
\end{split}\]

\subsubsection{Parabolic cylinder functions.}

The parabolic cylinder function $D_{a}(z)$ is an entire function, which satisfies the differential equation
\[\partial_{zz}D_{a}(z)+(a+\frac12-\frac{z^2}{4})D_{a}(z)=0,\]
and has the asymptotics as $\lambda\to\infty$
\[D_{a}(z) = z^{a}\e^{-z^2/4}\(1-\frac{a(a-1)}{2z^2}+
\frac{a(a-1)(a-2)(a-3)}{8z^4}+\ldots\),\quad \arg z\in\(\frac{-3\pi}{4},\frac{3\pi}{4}\),\]
or, in more general form,
\[D_a(z)=z^a\e^{-z^2/4}\(\sum_{j=0}^{N-1}\frac{(-1)^j\cdot (a)_{(2j)}}{j!\ 2^j\ z^{2j}}+\mathcal{O}(z^{-2N})\),\quad \arg z\in\(\frac{-3\pi}{4},\frac{3\pi}{4}\),\]
where $$(a)_{(0)}=1,\qquad (a)_{(n)}=a(a-1)\cdot\ldots\cdot(a-n+1)=\frac{a!}{(a-n)!},\quad n\geq1.$$
In particular case when $a=0$ we have $$D_0(z) = \e^{-z^2/4}.$$
Furthermore, $D_a(z)$ satisfies the following relations:
\begin{equation}\label{ParaCyl_prop}\begin{split}&
D_{a}(z) = \e^{-\pi a\i}D_{a}(-z)+\frac{\sqrt{2\pi}}{\Gamma(-a)}\e^{-\pi(a+1)\i/2}D_{-a-1}(\i z),
\\
&
D_{a}(z) = \e^{\pi a\i}D_{a}(-z)+\frac{\sqrt{2\pi}}{\Gamma(-a)}\e^{\pi(a+1)\i/2}D_{-a-1}(-\i z),
\\
&
D_{a}(z)=\frac{\Gamma(a+1)}{\sqrt{2\pi}}\(\e^{\pi\i a/2}D_{-a-1}(\i z)+\e^{-\pi\i a/2}D_{-a-1}(-\i z)\),
\\&
D_{a+1}(z)-zD_{a}(z)+aD_{a-1}(z)=0,
\qquad
D_{a}'(z)=-\frac{z}{2}D_{a}(z)+aD_{a-1}(z)=0.
\end{split}\end{equation}
Let us consider the following piece-wise analytic function:
\[
\Psi(\lambda)=\begin{bmatrix}
u^+D_{-\i\nu}(2\e^{{-3\pi\i}/{4}}\,\lambda) &
\beta_1v^+D_{\i\nu-1}(2\e^{{-\pi\i}/{4}}\,\lambda)
\\
\beta_2u^+D_{-\i\nu-1}(2\e^{{-3\pi\i}/{4}}\,\lambda)
&
v^+D_{\i\nu}(2\e^{-\pi\i/4}\,\lambda)
\end{bmatrix},\quad \Im\lambda>0,
\]
\[
\Psi(\lambda)=\begin{bmatrix}
u^-D_{-\i\nu}(2\e^{{\pi\i}/{4}}\,\lambda) &
-\beta_1 v^-D_{\i\nu-1}(2\e^{{3\pi\i}/{4}}\,\lambda)
\\
-\beta_2u^-D_{-\i\nu-1}(2\e^{{\pi\i}/{4}}\,\lambda)
&
v^-D_{\i\nu}(2\e^{3\pi\i/4}\,\lambda)
\end{bmatrix},\quad \Im\lambda<0,
\]
where
\[\begin{split}&u^+=2^{\i\nu}\e^{3\pi\nu/4},\quad
u^-=2^{\i\nu}\e^{-\pi\nu/4},\quad
v^+=2^{-\i\nu}\e^{-\pi\nu/4},\quad
v^-=2^{-\i\nu}\e^{3\pi\nu/4},
\\
&\beta_1 = \frac{-\i\sqrt{2\pi}\,2^{2\i\nu}\e^{\pi\nu/2}}{r_0\Gamma(\i\nu)} =
\frac{\rho_0\Gamma(-\i\nu+1)2^{2\i\nu}}{\sqrt{2\pi}\,\e^{\pi\nu/2}},
\quad
\beta_2 =\frac{r_0\Gamma(1+\i\nu)}{\sqrt{2\pi}\,2^{2\i\nu}\e^{\pi\nu/2}}
=\frac{\i\sqrt{2\pi}\,\e^{\pi\nu/2}}{\rho_02^{2\i\nu}\Gamma(-\i\nu)},
\end{split}\]
and $$\nu=\frac{1}{2\pi}\ln(1+r_0\rho_0),$$ and $r_0,\rho_0$ are some (complex) parameters.
Properties \eqref{ParaCyl_prop} allows to verify that $\Psi(\lambda)$ has the following jump across the real line:
\[\Psi(\lambda-\i 0) = \Psi(\lambda+\i 0)\begin{pmatrix}1 & -\rho_0 \\ -r_0 & 1+r_0\rho_0\end{pmatrix},\quad \lambda\in\mathbb{R}.\]
(Let us observe, that for $\nu=0$ the expression for $\Psi$ simplifies to $\Psi(\lambda)=\e^{-\i\lambda^2\sigma_3},$ $\beta_1=\beta_2=0.$ The latter is due to the asymptotics $x\cdot \Gamma\(\frac{\i}{2\pi}\ln(1+x^2)\)=\frac{-2\i\pi}{x}(1+o(1)), x\to0.$
Furthermore, $\beta_1=\ol{\beta_2}$ if $\rho_0=\ol{r_0}$, and $\beta_1=-\ol{\beta_2}$ if $\rho_0=-\ol{r_0}$).
Now let us take \[r_0=r(k_0),\quad \rho_0=\ol{r(k_0)}.\]
The function
\[\begin{split}
P_{PC}(\lambda)&=
\Psi(\lambda)\cdot\lambda^{\i\nu\sigma_3}\e^{\i\lambda^2\sigma_3},
\arg\lambda\in\(\frac{\pi}{4},\frac{3\pi}{4}\)\cup
\(\frac{-3\pi}{4},\frac{-\pi}{4}\),
\\\\&=
\Psi(\lambda)\cdot\lambda^{\i\nu\sigma_3}\e^{\i\lambda^2\sigma_3}
\begin{pmatrix}1 & 0 \\ -r(k_0)\lambda^{2\i\nu}\e^{2\i\lambda^2} & 1\end{pmatrix},\arg\lambda\in(0,\frac{\pi}{4}),
\\\\
&=
\Psi(\lambda)\cdot\lambda^{\i\nu\sigma_3}\e^{\i\lambda^2\sigma_3}
\begin{pmatrix}1 &  \ol{r(k_0)}\lambda^{-2\i\nu}\e^{-2\i\lambda^2} \\ 0 & 1\end{pmatrix},\arg\lambda\in(\frac{-\pi}{4},0),
\end{split}\]
\[\begin{split}
&=
\Psi(\lambda)\cdot\lambda^{\i\nu\sigma_3}\e^{\i\lambda^2\sigma_3}
\begin{pmatrix}1 & \frac{-\ol{r(k_0)}}{1+|r(k_0)|^2}\lambda^{-2\i\nu}\e^{-2\i\lambda^2} \\ 0 & 1\end{pmatrix},\arg\lambda\in(\frac{3\pi}{4}, \pi),
\\\\
&=
\Psi(\lambda)\cdot\lambda^{\i\nu\sigma_3}\e^{\i\lambda^2\sigma_3}
\begin{pmatrix}1 & 0 \\  \frac{r(k_0)}{1+|r(k_0)|^2}\lambda^{2\i\nu}\e^{2\i\lambda^2} & 1\end{pmatrix},\arg\lambda\in(-\pi, -\frac{3\pi}{4})
\end{split}
\]
satisfies the jump relations $P_{PC,-}(\lambda) = P_{PC,+}(\lambda)
J_{PC}(\lambda)$ on the contour $$\lambda\in \Sigma_{PC}=(\infty\e^{3\pi\i/4},0)\cup (\infty\e^{-3\pi\i/4},0)\cup(0,\infty\e^{\pi\i/4})\cup(0,\infty\e^{-\pi\i/4}):$$
\[\begin{split}J_{PC}&=\begin{bmatrix}1 & \frac{-\ol{r(k_0)}}{1+|r(k_0)|^2}\lambda^{-2\i\nu}\e^{-2\i\lambda^2}\\ 0 & 1\end{bmatrix},
\lambda\in (\infty\e^{3\pi\i/4},0),
\qquad
=\begin{bmatrix}1 & 0 \\ -r(k_0)\lambda^{2\i\nu}\e^{2\i\lambda^2} & 1\end{bmatrix},
\lambda\in (0,\infty\e^{\pi\i/4}),
\\
&=\begin{bmatrix}1 & 0 \\ \frac{-r(k_0)}{1+|r(k_0)|^2}\lambda^{2\i\nu}\e^{2\i\lambda^2} & 1 \end{bmatrix}, \lambda\in (\infty\e^{-3\pi\i/4},0),
\qquad
=\begin{bmatrix}1 & -\ol{r(k_0)}\lambda^{-2\i\nu}\e^{-2\i\lambda^2} \\ 0 & 1\end{bmatrix},
\lambda\in (0,\infty\e^{-\pi\i/4}),
\end{split}\]
and has the uniform asymptotics as $\lambda\to\infty$ of the following form:
\begin{equation}\label{PC_asymp}
P_{PC}(\lambda) =
\begin{bmatrix}
1-\frac{\nu(1+\i\nu)}{8\lambda^2}-\frac{\i\nu(1+\i\nu)(2+\i\nu)(3+\i\nu)}{128\lambda^4}+\ldots & \frac{\e^{\pi\i/4}\,\beta_1}{2\lambda}+\frac{\e^{-\pi\i/4}\,\beta_1(1-\i\nu)(2-\i\nu)}{16\lambda^3}+\ldots
\\\\
\frac{\e^{3\pi\i/4}\,\beta_2}{2\lambda}
+\frac{\e^{-3\pi\i/4}\,\beta_2(1+\i\nu)(2+\i\nu)}{16\lambda^3}+\ldots & 1-\frac{\nu(1-\i\nu)}{8\lambda^2}+\frac{\i\nu(1-\i\nu)(2-\i\nu)(3-\i\nu)}{128\lambda^4}+\ldots
\end{bmatrix}.\end{equation}

\subsection{First approximation of $M^{(1)}.$}\label{sect_sub_first_approx}

We look for  an approximation of $M^{(3)}$ of the form
\begin{equation}\label{Minf} M_{\infty}=\begin{cases}
\(\frac{k-E_0}{k-\ol E_0}\)^{-n\sigma_3},\quad |k-E_0|>r,  |k-\ol E_0|>r, |k-k_0|>\delta,
\\B_u L (-\phi(k,\xi))^{\sigma_3/2}t^{\gamma\sigma_3}\e^{-\i\varphi(t)\sigma_3},\quad |k-E_0|<r, %\textrm{ here analytic inside the disk fucntion } B_u \textrm{ is to be determined,}
\\B_dL_d(-\ol\phi(\ol k,\xi))^{-\sigma_3/2}t^{-\gamma\sigma_3}
\e^{-\i\varphi(t)\sigma_3},\quad
|k-\ol E_0|<r,
\\ B_{PC}(k)P_{PC}(\lambda)\cdot\(\chi(k,\xi)\e^{2\i t\xi^2}(2t)^{\frac{\i\nu}{2}}\)^{-\sigma_3}, |k-k_0|<\delta.
\end{cases}
\end{equation}
% ------------
In the process of the construction we will also determine the matrix-valued functions $B_u,$ $B_d,$ $B_{PC}$ analytic inside the disks $|k-E_0|<r,$ $|k-\ol E_0|<r,$ $|k-k_0|<\delta,$ respectively. The driving logic is that of minimizing the distance of the error matrix $E=M^{(3)}M_{\infty}^{-1}$  from the identity matrix. To this end we inspect its jump
$J_E=(E_+)^{-1}E_-=M_{\infty,+}J^{(3)}M_{\infty,-}^{-1}.$

\noindent
On the interval $l^+$ the jump is
$$J_E=B_uL_+(-\phi)^{\sigma_3/2}t^{\gamma\sigma_3}
\e^{-\i\varphi(t)\sigma_3}\begin{pmatrix}1 & 0 \\ \phi t^{2\gamma}\e^{-2\i\varphi(t)}\sqrt{\zeta}\e^{-\zeta} & 1\end{pmatrix}
\e^{\i\varphi(t)\sigma_3}t^{-\gamma\sigma_3}(-\phi)^{-\sigma_3/2}L_-^{-1}B_u^{-1}=I.$$
Similarly, on $l^-$ the jump $J_E=I.$
Furthermore, on the parts $L_j, j=1,\ldots,4,$ of the contour within the disk $D=\left\{k: |k-k_0|<\delta\right\}$
the jump is
\[J_E = B_{PC}\cdot P_{PC,+}\cdot\(\chi(k,\xi)\e^{2\i t\xi^2}(2t)^{\frac{\i\nu}{2}}\)^{-\sigma_3}\cdot J^{(3)}\cdot
\(\chi(k,\xi)\e^{2\i t\xi^2}(2t)^{\frac{\i\nu}{2}}\)^{\sigma_3}
\cdot J_{PC}^{-1}\cdot P_{PC,+}^{-1}B_{PC}^{-1}\]
Since $r(k)$ is continuous in the neighborhood of the point $k=k_0,$
the terms
$$\(\frac{r(k)\chi(k_0)}{\chi(k)}-r(k_0)\)\lambda^{2\i\nu}\e^{2\i\lambda^2},\quad
\(\frac{r(k)\chi(k_0)}{\chi(k)\, (1+r(k)\ol r(\ol k))}-\frac{r(k_0)}{1+r(k_0)^2}\)\lambda^{2\i\nu}\e^{2\i\lambda^2},$$
which appear in the jump on the parts of the segments $L_1, L_4,$
can be estimated as
$$\mathcal{O}(|k-k_0|)\cdot\e^{-2|\lambda|^2} = \mathcal{O}(|\mu|)\cdot\e^{-4t|\mu|^2}
=
t^{-\frac12}\cdot\mathcal{O}(|\mu|\sqrt{t})\cdot\e^{-4t|\mu|^2}=\mathcal{O}(t^{-1/2}).$$
The ingredients on the parts of the segments $L_2, L_3$ admit similar estimate. Hence, the jump $J_E$ is estimated as
\[J_{E}=\mathbf{1}+\mathcal{O}(t^{-1/2}),\quad k\in(L_1\cup L_2\cup L_3\cup L_4)\cap D,\]
provided that $B_{PC}$ is bounded in $D$ uniformly w.r.t $t.$
Similarly,  it follows from \eqref{PC_asymp} that for $k\in\partial D: |k-k_0|=\delta,$
\[J_{E}=I+\mathcal{O}(t^{-1/2}),\quad k\in\partial D.\]

\noindent Furthermore, the jump $J_E$ on the disks
$$\partial C=\left\{k:\ |k-E_0|=r\right\},\quad \partial C_d=\left\{k:\ |k-\ol E_0|=r\right\}, \quad
\partial D=\left\{k: |k-k_0|=\delta\right\} $$ (we take counterclockwise orientation)
 is
\[\begin{split}&J_E=B_uL(-\phi(k,\xi))^{\sigma_3/2}t^{\gamma\sigma_3}
\e^{-\i\varphi(t)\sigma_3}
\(\frac{k-E_0}{k-\ol E_0}\)^{n\sigma_3},\ k\in\partial C,\\
&J_E=B_dL_d(-\ol\phi(\ol k,\xi))^{-\sigma_3/2}t^{-\gamma\sigma_3}\e^{-\i\varphi(t)\sigma_3}\(\frac{k-E_0}{k-\ol E_0}\)^{n\sigma_3},\ k\in\partial C_d,\\
&J_E=B_{PC}(k)P_{PC}(\lambda)\cdot\(\chi(k_0,\xi)\e^{2\i t\xi^2}(2t)^{\i\nu/2}\)^{-\sigma_3}\(\frac{k-E_0}{k-\ol E_0}\)^{n\sigma_3},\ k\in\partial D.
\end{split}\]
To have $J_E$ close to $I,$ taking into account the asymptotics (\ref{Las1}), (\ref{Lasd1}), \eqref{PC_asymp} of $L$, $L_d,$ $P_{PC}$
on the circles (we have $\zeta=z t\to\infty$ as $t\to\infty$ when $k\in\partial C$ and $\lambda=\sqrt{2t}\,\mu\to\infty$ for $k\in\partial D$), we take
\begin{equation}\label{B_Bd_finite_n}\begin{split}&B_u=\(\frac{k-E_0}{k-\ol E_0}\cdot\frac{1}{z}\)^{-n\sigma_3}(\sqrt{-\phi}t^{(\gamma-n)}\e^{-\i\varphi(t)})^{-\sigma_3},\quad
B_d=\(\frac{k-E_0}{k-\ol E_0}\cdot{z_d}\)^{-n\sigma_3}(\sqrt{-\ol\phi(\ol k)}t^{\gamma-n}\e^{\i\varphi(t)})^{\sigma_3},\\
&
B_{PC}(k) =
\(\chi(k_0,\xi)\e^{2\i t\xi^2}(2t)^{\frac{\i\nu}{2}}\)^{\sigma_3}\cdot\(\frac{k-E_0}{k-\ol E_0}\)^{-n\sigma_3}.
\end{split}\end{equation}
We see that indeed $B_u,$ $B_d$ do not have poles at $k=E_0$, $k=\ol{E_0}$, i.e. at $z=0,$ $z_d=0,$ and that $B_u, B_d, B_{PC}$ are indeed bounded inside the circles uniformly w.r.t. $t.$ Furthermore,
\begin{equation}\label{J_E_rough_pre1}
\begin{split} J_E =& \(\frac{k-\ol E_0}{k-E_0}\cdot z\)^{n\sigma_3}\(-\phi(k,\xi)\)^{\frac{-\sigma_3}{2}}t^{(n-\gamma)\sigma_3}
\e^{\i\varphi(t)\sigma_3}
\begin{pmatrix}
 1+\mathcal{O}(\frac{1}{zt}) & \mathcal{O}(\frac{1}{zt}) \\
\mathcal{O}(\frac{1}{zt}) & 1+\mathcal{O}(\frac{1}{zt})
\end{pmatrix}\cdot
\\&
(-\phi(k,\xi))^{\frac{\sigma_3}{2}}t^{(\gamma-n)\sigma_3}\e^{-\i\varphi(t)\sigma_3} \(\frac{k-E_0}{k-\ol E_0}\cdot\frac{1}{z}\)^{n\sigma_3}, k\in\partial C,
\end{split}\end{equation}
% -------------------
\begin{equation}\label{J_E_rough_pre2}
\begin{split} J_E=&\(\frac{k-\ol E_0}{k-E_0}\cdot \frac{1}{z_d}\)^{n\sigma_3}\(-\ol\phi(\ol k,\xi)\)^{\frac{\sigma_3}{2}}t^{(\gamma-n)\sigma_3}\e^{\i\varphi(t)\sigma_3}
\begin{pmatrix}
 1+\mathcal{O}(\frac{1}{z_dt}) & \mathcal{O}(\frac{1}{z_dt}) \\
\mathcal{O}(\frac{1}{z_dt}) & 1+\mathcal{O}(\frac{1}{z_dt})
\end{pmatrix}\cdot
\\
& (-\ol\phi(\ol k,\xi))^{\frac{-\sigma_3}{2}}t^{(n-\gamma)\sigma_3}\e^{-\i\varphi(t)\sigma_3}\(\frac{k-E_0}{k-\ol E_0}\cdot z_d\)^{n\sigma_3}, k\in\partial C_d,
\end{split}\end{equation}
% -------------------
\begin{equation}\label{J_E_rough_D}
\begin{split} J_E=&
\(\chi(k_0,\xi)\e^{2\i t\xi^2}(2t)^{\frac{\i\nu}{2}}\)^{\sigma_3}\cdot\(\frac{k-E_0}{k-\ol E_0}\)^{-n\sigma_3}
\begin{pmatrix}
 1+\mathcal{O}(\frac{1}{\mu^2 t}) & \mathcal{O}(\frac{1}{\mu\sqrt{t}}) \\
\mathcal{O}(\frac{1}{\mu\sqrt{t}}) & 1+\mathcal{O}(\frac{1}{\mu^2 t})
\end{pmatrix}\cdot
\\
& \(\chi(k_0,\xi)\e^{2\i t\xi^2}(2t)^{\frac{\i\nu}{2}}\)^{-\sigma_3}\cdot\(\frac{k-E_0}{k-\ol E_0}\)^{n\sigma_3}, k\in\partial D,
\end{split}\end{equation}
%---------------------
and hence
\begin{equation}\label{J_E_rough}
\begin{split}&
J_E=\begin{pmatrix}1+\mathrm{O}(t^{-1})&\mathrm{O}(t^{-2\gamma+2n-1})\\\mathrm{O}(t^{2\gamma-2n-1})&
1+\mathrm{O}(t^{-1})\end{pmatrix},k\in\partial C,\quad
J_E=\begin{pmatrix}1+\mathrm{O}(t^{-1})&\mathrm{O}(t^{2\gamma-2n-1})\\\mathrm{O}(t^{-2\gamma+2n-1})&
1+\mathrm{O}(t^{-1})\end{pmatrix},k\in\partial C_d.
\\
&
J_E=\begin{pmatrix}1+\mathrm{O}(t^{-1})&\mathrm{O}(t^{-1/2})\\\mathrm{O}(t^{-1/2})&
1+\mathrm{O}(t^{-1})\end{pmatrix},k\in\partial D.
\end{split}\end{equation}
For every value of $\gamma$ which is not half-integer we can choose $n$ such that $\gamma-\frac12<n<\gamma+\frac12,$ and
both off-diagonal terms in the r.h.s. of the first expression in (\ref{J_E_rough}) will be vanishing. However, for a half-integer
$\gamma=m+\frac12$ we cannot make both off-diagonal terms in (\ref{J_E_rough}) small: indeed, if we choose $n=m,$ then the
$(1,2)$ entry is of the order $\mathcal{O}(t^{-2})$, but the $(2,1)$ entry is just $\mathcal{O}(1);$ vice versa, for
$n=m+1,$ then $(2,1)$ entry is $\mathcal{O}(t^{-2})$, but the  $(1,2)$ entry is just $\mathcal{O}(1).$
As we shall see below, this indicates the presence of asymptotic solitons, which correspond to half-integer $\gamma.$ Away from the asymptotic solitons,
the solution of NLS equation is asymptotically vanishing.

To capture the asymptotic solitons, we are lead to make a further correction in the approximate solution; this is accomplished in the next section.
\subsubsection{Refined approximation of $M^{(3)}$}\label{sect_sub_ref1_first}
\noindent In order to have more freedom in choosing $n$ in (\ref{J_E_rough}), the idea is that of  ``removing'' the  $\frac{1}{\zeta}$ term in (2,1) entry in asymptotics (\ref{Las2}) of $L,$
and in (1,2) entry in asymptotics (\ref{Lasd2}) of $L_d$ at $\infty,$ so that they will start from $\zeta^{-2},$ $\zeta_d^{-2},$
$$\begin{pmatrix}1&0\\\frac{R_1}{\zeta}&1\end{pmatrix} L = \begin{pmatrix}1+\mathcal{O}(\frac{1}{\zeta}) & \mathcal{O}(\frac{1}{\zeta}) \\ \mathcal{O}(\frac{1}{\zeta^2}) & 1+\mathcal{O}(\frac{1}{\zeta}) \end{pmatrix},
\qquad
\begin{pmatrix}1&\frac{R_1}{\zeta_d}\\0&1\end{pmatrix} L_d = \begin{pmatrix}1+\mathcal{O}(\frac{1}{\zeta_d}) & \mathcal{O}(\frac{1}{\zeta_d^2}) \\ \mathcal{O}(\frac{1}{\zeta_d}) & 1+\mathcal{O}(\frac{1}{\zeta_d}) \end{pmatrix},
$$
where $$R_1=\frac{n!\Gamma(n+\frac32)}{2\pi\i}.$$
However, this will bring poles at $k=E_0,$ $k=\ol E_0$ of the approximate solution and to compensate for this issue we  multiply all $M_{\infty}$ by an appropriate meromorphic matrix function from the left, which will remove these poles.
 In concrete, the above idea requires to define
 \begin{equation}
\label{G}G=I+\frac{\A}{k-E_0}+\frac{\widetilde \A}{k-\ol E_0}
\end{equation}
% -----------
\begin{equation}\label{Minf1} M^{(1)}_{\infty}=\begin{cases}
G\(\frac{k-E_0}{k-\ol E_0}\)^{-n\sigma_3},\quad |k- E_0|>r, |k- \ol E_0|>r, |k- k_0|>\delta,
\\GB_u\begin{pmatrix}1&0\\\frac{R_1}{\zeta}&1\end{pmatrix} L (-\i\phi)^{\sigma_3/2}t^{\gamma\sigma_3},\quad |k-\i c|<r,
\\
GB_d
\begin{pmatrix}
1&\frac{R_1}{\zeta_d}\\0&1
\end{pmatrix}L_d(\i\phi_d)^{-\sigma_3/2}t^{-\gamma\sigma_3},\quad
|k+\i c|<r,
\\
GB_{PC}(k)P_{PC}(\lambda)\cdot \(\chi(k,\xi)\e^{2\i t\xi^2}(2t)^{\frac{\i\nu}{2}}\)^{-\sigma_3}, |k-k_0|<\delta,
\end{cases}
\end{equation}
where $B_u,$ $B_d,$ $B_{PC}$ are as in (\ref{B_Bd_finite_n}).
The new error matrix $$E^{(1)}:=M^{(3)}\(M^{(1)}_{\infty}\)^{-1}$$ has the jump
$$J_{E^{(1)}}=(E^{(1)}_+)^{-1}E^{(1)}_-=M^{(1)}_{\infty,+}J^{(3)}\(M^{(1)}_{\infty,-}\)^{-1};$$
in the intervals $l^{\pm}$ the jump  is $J_{E^{(1)}}=I,$ on $k\in(\Sigma\cap D)\cup\partial D$ the jump is $J_{E^{(1)}}=\mathbf{1}+\mathcal{O}(t^{-1/2}),$ and on the circles $k\in\partial C,$ $k\in\partial C_d$ it is of the form
\begin{equation}
 \label{J_E_refined_1}
 \begin{split}&
J_{E^{(1)}}=G\begin{pmatrix}1+\mathrm{O}(t^{-1})&\mathrm{O}(t^{-2\gamma+2n-1})\\\mathrm{O}(t^{2\gamma-2n-2})&1+\mathrm{O}(t^{-1})\end{pmatrix}G^{-1}, k\in\partial C,
\\
&
J_{E^{(1)}}=G\begin{pmatrix}1+\mathrm{O}(t^{-1})&\mathrm{O}(t^{2\gamma-2n-2})\\\mathrm{O}(t^{-2\gamma+2n-1})&1+\mathrm{O}(t^{-1})\end{pmatrix}G^{-1},
k\in\partial C_d.
\end{split}\end{equation}
% -----------
We see that (\ref{J_E_refined_1}) provides us with better estimate than (\ref{J_E_rough}) provided that
$G,$ $G^{-1}$ are uniformly bounded, as $t\to\infty$, on the circles $|k\mp\i c|=r.$ Now, for such
$\gamma$ that $$\left\{\gamma\right\}\in[0,\frac12]$$ we choose
$$n=\lfloor \gamma \rfloor,$$
where $\left\{\gamma\right\},$ $\lfloor \gamma \rfloor$ denote the
fractional part of $\gamma$ and the greatest integer not exceeding $\gamma,$ respectively.
%$$\lfloor \gamma \rfloor+\left\{\gamma\right\}=\gamma,\quad \left\{\gamma\right\}\in[0,1), \quad \lfloor \gamma \rfloor\leq\gamma<\lfloor \gamma \rfloor+1.$$
Then $J_{E^{(1)}}$ in (\ref{J_E_refined_1}) admits the estimate $$J_{E^{(1)}}=I+\mathcal{O}(t^{-1}).$$
Other values of  $\gamma$, such that $$\left\{\gamma\right\}\in(\frac12,1)$$ are considered in the next section \ref{sect_sub_ref2_second}.

We determine now the matrix  $G$ (\ref{G}) in such a way that $M^{(1)}_{\infty}$ is bounded (has no poles)
at $k=E_0, $ $k=\ol E_0.$ Expanding the product, the terms responsible for poles at $k=E_0, $ $k=\ol E_0$ in $M^{(1)}_{\infty}$ are
$$GB_u\begin{pmatrix}1&0\\\frac{R_1}{\zeta}&1\end{pmatrix}=\(I+\frac{\A}{k-E_0}+\frac{\widetilde \A}{k-\ol E_0}\)\begin{pmatrix}1&0\\\frac{R_1}{z}z^{-2n}\(\frac{k-E_0}{k-\ol E_0}\)^{2n}t^{2\gamma-2n-1}\e^{-2\i\varphi(t)}(-\phi)&1\end{pmatrix}
B_u ,$$
$$GB_d\begin{pmatrix}
1&\frac{R_1}{\zeta_d}\\0&1
\end{pmatrix}=\(I+\frac{\A}{k-E_0}+\frac{\widetilde \A}{k-\ol E_0}\)\begin{pmatrix}1&\frac{R_1}{z_d}z_d^{-2n}\(\frac{k-E_0}{k-\ol E_0}\)^{-2n}(-\ol\phi(\ol k))t^{2\gamma-2n-1}\e^{2\i\varphi(t)}
\\0&1\end{pmatrix}B_{d}.$$

\noindent
We see that at most we can have the poles of the second order at
$z=0,$ $z_d=0.$ The requirement that the singular part vanishes yields a linear system for the matrices $A, \widetilde A:$ from the vanishing of the double-pole coefficient it is  seen that they must be of the form
$$\A=\begin{pmatrix}a_1&0\\b_1&0\end{pmatrix},\qquad \widetilde \A=\begin{pmatrix}0&\widetilde b_1\\0&\widetilde a_1\end{pmatrix}.$$
% -----------------
Writing down the conditions of vanishing of the residue we get the system of equations
\begin{equation}\label{eq_abtildeab}\begin{cases}
\widetilde b_1+\(1-\frac{a_1}{2\i B}\)\frac{H_d}{z_d'}
=0,
\\
\widetilde a_1-\frac{b_1}{2\i B}\cdot\frac{H_d}{z_d'}
=0;
\end{cases}
\begin{cases}
b_1+(1+\frac{\widetilde a_1}{2\i B})\frac{H}{z'}=0,
\\
a_1+\frac{\widetilde b_1}{2\i B}\cdot\frac{H}{z'}=0,
\end{cases}\end{equation}
which decomposes into 2 linear systems: one for $a_1, \widetilde b_1$, another for $ \widetilde a_1, b_1.$
\noindent Here
\[z':=8B+\frac{2\i\rho\ln t}{B t},\qquad
z'_d:=8B-\frac{2\i\rho\ln t}{B t},\]
are the derivatives of $z,$ $z_d$ w.r.t. $y,$ $y_d$ respectively at the points $y=0,$ $y_d=0,$ and
\[\begin{split}H = & R_1 t^{2\gamma-2n-1}\e^{-2\i\varphi(t)}\lim\limits_{k\to E_0}  \(\frac{k-E_0}{z(k-\ol E_0)}\)^{2n}(-\phi(k,\xi)) =\frac{
-R_1 t^{2\gamma-2n-1}}{\e^{2\i\varphi(t)}} \(\frac{1}{2\i B z'}\)^{2n}\frac{\hat\phi(E_0)}{\delta^2(E_0,\xi)\sqrt{z'}}
=
\\
&=
\frac{(-1)^{n+1}n!\Gamma(n+\frac32)t^{2\gamma-2n-1}
\e^{-2\i\varphi(t)}\sqrt{2B}\ \hat\phi(E_0)}{2\pi\i\ \delta^2(E_0,\xi) \left[2B(8B+\frac{2\i\rho\ln t}{B t})\right]^{2n+\frac12}},
\end{split}\]
\[\begin{split}H_d =& R_1 t^{2\gamma-2n-1}\e^{2\i\varphi(t)}\lim\limits_{k\to\ol E_0}  \(\frac{z_d(k-E_0)}{k-\ol E_0}\)^{-2n}(-\ol\phi(\ol k,\xi))
=
\frac{-R_1 t^{2\gamma-2n-1}\e^{2\i\varphi(t)}}{ \(-2\i Bz_d\)^{2n}} \cdot\frac{\ol{\hat\phi(E_0)}}{\ol{\delta(E_0,\xi)}^2\sqrt{z'_d}}=
\\
&
=\frac{(-1)^{n+1}n!\Gamma(n+\frac32)t^{2\gamma-2n-1}\e^{2\i\varphi(t)}\sqrt{2B}\ \ol{\hat\phi(E_0)}}{2\pi\i\ \ol{\delta(E_0,\xi)}^2\left[2B(8B-\frac{2\i\rho\ln t}{ B t})\right]^{2n+\frac12}},
\end{split}\]
We have the symmetry $$\ol H_d=-H.$$
Solving system (\ref{eq_abtildeab}) for $a_1, b_1, \widetilde a_1, \widetilde b_1,$ we obtain
\begin{equation}\label{abtildeab_sol1}a_1=\frac{-2\i B H H_d}{4B^2z'z_d'-HH_d}
=2\i B\,\frac{\left|\frac{H}{2Bz'}\right|}{\left|\frac{2Bz'}{H}\right|+\left|\frac{H}{2Bz'}\right|},
\qquad
 b_1=\frac{-4B^2Hz_d'}{4B^2z'z_d'-HH_d}=-\ol{\widetilde b_1},
\end{equation}
% -------
\begin{equation}\label{abtildeab_sol2}\widetilde a_1= \frac{2\i BH H_d}{4B^2z'z_d'-HH_d}=-a_1,
\qquad \widetilde b_1= \frac{-4B^2H_dz'}{4B^2z'z_d'-HH_d}
=
 -2B\frac{\e^{-\i\arg\(\frac{H}{2Bz'}\)}}{|\frac{2Bz'}{H}|+|\frac{H}{2Bz'}|}.
\end{equation}
We see, that $a_1,b_1,\widetilde a_1, \widetilde b_1$ are all bounded for
$t\to\infty$, hence, $G$ does not contribute to the error estimate (\ref{J_E_refined_1})
of $J_E.$
 Hence,
$$q^{(1)}_{\infty}(x,t):=2\i\lim\limits_{k\to\infty}k(M_{\infty})_{12}=
-2\i\lim\limits_{k\to\infty}\ol{k(M_{\infty})_{21}}
=2\i
\widetilde b_1=-2\i\ol{b_1}
=
\frac{2\i B\cdot\frac{\ol H}{|H|}\cdot\frac{z'}{|z'|}}{\frac12\(\frac{2B|z'|}{|H|}+\frac{|H|}{2B|z'|}\)}.
$$
Let us notice that
\[\frac{\ol H}{|H|}=\i(-1)^{n+1}\e^{2\i\varphi(t)-\i\arg\hat\phi(E_0)+2\i\arg\delta(E_0)+\i(2n+\frac12)
\mathrm{arctg}\frac{\rho\ln t}{4B^2 t}},
\qquad
\frac{z'}{|z'|}=\e^{\i\mathrm{arctg}\frac{\rho\ln t}{4B^2 t}},\]
and
\[\frac{|H|}{2B|z'|}=\frac{n!\Gamma(n+\frac32)}{2\pi}\cdot\frac{\exp(-2B(x+4At)-(2n+\frac32)\ln t)\cdot|\hat\phi(E_0)|\sqrt{2B}}{|2B(8B+\frac{2\i\rho\ln t}{Bt})|^{2n+\frac32}\ |\delta^2(E_0,\xi)|}\]
Hence
\begin{equation}\label{q_inf_1}q_{\infty}^{(1)}=
\frac{(-1)^{n}2B\exp[2\i\varphi(t)-\i\arg\hat\phi(E_0)+2\i\arg\delta(E_0,\xi)]}{\cosh\left[2B(x+4At)+(2n+\frac32)\ln t+\ln\(\frac{2\pi}{n!\Gamma(n+\frac32)}\cdot\frac{|\delta(E_0)|^2(16B^2)^{2n+\frac32}}{|\hat\phi(E_0)|\sqrt{2B}}\)\right]}+\mathcal{O}\(\frac{\ln t}{t}\)
,\end{equation}
where
\[\varphi(t)=2t(A^2+B^2)+\frac{ A \rho\ln t}{B}=-\(Ax+2t(A^2-B^2)\) .\]
Let us observe for future reference, that formulas \eqref{abtildeab_sol1}, \eqref{abtildeab_sol2} imply
\[a_1-\i B = i B\frac{\left|\frac{H}{2Bz'}\right|-\left|\frac{2Bz'}{H}\right|}{\left|\frac{H}{2Bz'}\right|+\left|\frac{2Bz'}{H}\right|},
\quad
\tilde b_1 = -2B\frac{\e^{-\i\arg(\frac{H}{2Bz'})}}{\left|\frac{H}{2Bz'}\right|+\left|\frac{2Bz'}{H}\right|},\]
and since $\tilde b_1 = \frac{q_{\infty}^{(1)}}{2\i},$
then
\begin{equation}\label{ab_comb}
(a_1-\i B)^2 = -B^2\(1-\frac{|q^{(1)}_{\infty}|^2}{4B^2}\),
\qquad
(\tilde b_1)^2 = \frac{-\(q^{(1)}_{\infty}\)^2}{4}.
\end{equation}
Furthermore,
 the solution of the initial value problem $q(x,t)$
is equal to
$$q(x,t)=q^{(1)}_{\infty} + q^{(1)}_{err},\qquad \mbox{ where }
\quad
q^{(1)}_{err}:=2\i\lim\limits_{k\to\infty}k(E^{(1)})_{12},
$$ and it follows from $J_{E^{(1)}}=I+\mathcal{O}(t^{-1/2})$ that  $$q^{(1)}_{err}=\mathcal{O}(t^{-1/2}),$$
and hence
$$q(x,t)=q^{(1)}_{\infty}(x,t)+\mathcal{O}(t^{-1/2}).$$
However, in the next subsection we evaluate the error term $q^{(1)}_{err}$ more precisely, with accuracy up to $t^{-1}.$

\subsubsection{Calculation of $q_{err}^{(1)}.$}\label{sect_qerr}

Denote by $\Sigma_E$ the jump contour for the error matrix $E^{(1)}.$ The function $E^{(1)}$ admits the representation
\[E^{(1)}=\mathbf{1} + \mathcal{C}[E^{(1)}_+\cdot(\mathbf{1}-J_{E^{(1)}})],\]
where $\mathcal{C}$ is the Cauchy operator
\[\mathcal{C}f=\frac{1}{2\pi\i}\int\limits_{\Sigma_E}\frac{f(s)\ \d s}{s-k},\]
and $E^{(1)}_+$ is the solution of the small-norm singular integral equation
\[E^{(1)}_+=\mathbf{1} + \mathcal{C}_+[E^{(1)}_+\cdot(\mathbf{1}-J_{E^{(1)}})].\]
Developing the representation for $E^{(1)}$ into terms small w.r.t. $t,$
we obtain
\begin{equation}\label{E1repr} E^{(1)}=\mathbf{1} + \mathcal{C}[\mathbf{1}-J_{E^{(1)}}]+ \mathcal{C}[(E^{(1)}_+-\mathbf{1})\cdot(\mathbf{1}-J_{E^{(1)}})],
\end{equation}
and the last summand in the latter formula is already of the order $\mathcal{O}(t^{-1}),$ because of presence of product of two terms of the order $t^{-1/2},$ namely $$E^{(1)}_+-\mathbf{1}=\mathcal{O}(t^{-1/2})\qquad \mbox{ and }\qquad J_{E^{(1)}}-\mathbf{1}=\mathcal{O}(t^{-1/2}).$$
Furthermore, the second summand in \eqref{E1repr} involves integration over the circles $\partial C,$ $\partial C_d,$
$\partial D,$ segments $\Sigma_E\cap D,$ and other lenses, where the jump is exponentially close to $\mathbf{1}$. We argue that the only contribution of the order $t^{-1/2}$ comes from the circle $\partial D.$ Indeed, the circles $\partial C,\partial C_d$ give the contribution of the order $t^{-1},$ and the contribution from segments $\Sigma_E\cap D$ within the disk $D$
involves the integral of the order
$$\int\limits_{0}^{\delta}|\mu|\e^{-4t\mu^2}\d\mu <
t^{-1}\int\limits_{0}^{\delta}|\mu|\sqrt{t}\e^{-4t\mu^2}\d(\mu \sqrt{t})=\mathcal{O}(t^{-1}).$$
Henceforth, we have
\[\begin{split}  q^{(1)}_{err}(x,t) &= 2\i \lim\limits_{k\to\infty}k(E^{(1)}-\mathbf{1})_{12}
=\frac{1}{\pi}\int\limits_{\partial D}(J_{E^{(1)}}(s)-\mathbf{1})_{12}\ \d s+\mathcal{O}(t^{-1}),
\end{split}\]
and substituting here
$$J_{E^{(1)}}=G\cdot F^{\sigma_3}\cdot
P_{PC}\cdot F^{-\sigma_3}G^{-1},\qquad
\mbox{where} \quad F:=\(\chi(k,\xi)\e^{2\i t\xi^2}(2t)^{\frac{\i\nu}{2}}\)^{}\cdot\(\frac{k-E_0}{k-\ol E_0}\)^{-n},
$$
and hence
$$(J_{E^{(1)}}-\mathbf{1})_{12}=(P_{PC})_{12} F^2\(1+\frac{a_1}{k-E_0}\)^2-(P_{PC})_{21}\frac{\tilde b^2_1}{F^2(k-\ol E_0)^2}+\mathcal{O}(t^{-1}),
$$
we obtain
\[\begin{split}q_{err}^{(1)}=&F(k_0)^2\(1+\frac{a_1}{k_0-E_0}\)^2\frac{1}{\pi}\int\limits_{\partial D}\frac{\e^{\pi\i/4}\,\beta_1 \d s}{2\lambda}
-
\frac{\tilde b^2_1}{F(k_0)^2(k_0-\ol E_0)^2}
\frac{1}{\pi}\int\limits_{\partial D}\frac{\e^{3\pi\i/4}\,\beta_2 \d s}{2\lambda}
+
\mathcal{O}(t^{-1}),
\end{split}
\]
and substituting in the above integrals $\lambda = \sqrt{2t}\, (s-k_0)$  and computing the residues, namely
\[\begin{split}
&
\frac{1}{\pi}\int\limits_{\partial D}\frac{\e^{\pi\i/4}\,\beta_1 \d s}{2\lambda}
=
\frac{1}{\pi}\int\limits_{\partial D}\frac{\e^{\pi\i/4}\,\beta_1 \d s}{2\sqrt{2t}\,(s-s_0)}=
\frac{\e^{3\pi\i/4}\,\beta_1}{\sqrt{2}\,\sqrt{t}}
=
\frac{\e^{\pi\i/4}\,\sqrt{\pi}\,2^{2\i\nu}\,\e^{\pi\nu/2}}{r(k_0)\cdot\Gamma(\i\nu)}\cdot\frac{1}{\sqrt{t}},
\\
&
\frac{1}{\pi}\int\limits_{\partial D}\frac{\e^{3\pi\i/4}\,\beta_2 \d s}{2\lambda} =
\frac{\e^{5\pi\i/4}\,\beta_2}{\sqrt{2}\,\sqrt{t}}=
\frac{\e^{-\pi\i/4}\,\sqrt{\pi}\,2^{-2\i\nu}\,\e^{\pi\nu/2}}{\ol{r(k_0)}\cdot\Gamma(-\i\nu)}\cdot\frac{1}{\sqrt{t}},
\end{split}\]
we obtain
\[
q_{err}^{(1)}
=
\frac{\sqrt{\pi}\e^{\pi\nu/2}}{B^2\sqrt{t}}\(\frac{\chi^2(k_0)\cdot\e^{4\i t\xi^2}\,(8t)^{\i\nu}(a_1-\i B)^2}{r(k_0)\Gamma(\i\nu)\e^{3\pi\i/4}}
+
\frac{\tilde b^2_1\e^{-\pi\i/4}(8t)^{-\i\nu}}{\chi^2(k_0)\cdot \e^{4\i t\xi^2}\ol{r(k_0)}\Gamma(-\i\nu)}\)
+\mathcal{O}\(\frac{\ln t}{t}\).
\]
Combining this with \eqref{q_inf_1}, we obtain
\begin{equation}\label{q1_PC}
\begin{split}q(x,t)&=
\frac{(-1)^{n}2B\exp[2\i\varphi(t)-\i\arg\hat\phi(E_0)+2\i\arg\delta(E_0,\xi)]}{\cosh\left[2B(x+4At)+(2n+\frac32)\ln t+\ln\(\frac{2\pi}{n!\Gamma(n+\frac32)}\cdot\frac{|\delta(E_0)|^2(16B^2)^{2n+\frac32}}{|\hat\phi(E_0)|\sqrt{2B}}\)\right]}+
\\
&
+\frac{\sqrt{\pi}\e^{\pi\nu/2}}{B^2\sqrt{t}}\(\frac{\chi^2(k_0)\cdot\e^{4\i t\xi^2}\,(8t)^{\i\nu}(a_1-\i B)^2}{\e^{3\pi\i/4}\ r(k_0)\Gamma(\i\nu)}
+
\frac{\tilde b_1^2\e^{-\pi\i/4}}{\chi^2(k_0)\cdot \e^{4\i t\xi^2}(8t)^{\i\nu}\ \ol{r(k_0)}\Gamma(-\i\nu)}\)+
\mathcal{O}\(\frac{\ln t}{t}\).
\end{split}\end{equation}
The latter can be simplified in view of \eqref{ab_comb}. Namely, denote the first summand as $q_{sol}.$ Then the second summand can be written as
\begin{equation}\label{2ndSummand}
\frac{\sqrt{\pi}\e^{\pi\nu/2}}{B^2\sqrt{t}}
\left[
\frac{\chi^2(k_0)\e^{4\i t\xi^2}(8t)^{\i\nu}}{r(k_0)\Gamma(\i\nu)}\cdot\e^{\pi\i/4}\(1-\left|\frac{q_{sol}}{2B}\right|^2\)
+\frac{\e^{3\pi\i/4}\cdot\frac{q_{sol}^2}{4B^2}}{\chi^2(k_0)\e^{4\i t\xi^2}(8t)^{\i\nu}\,\ol{r(k_0)}\,\Gamma(-\i\nu)}
\right]
\end{equation}
Now we use that \[|\Gamma(\i\nu)|^2=\frac{\pi}{\nu\sinh(\pi\nu)},\quad\mbox{
and hence }\quad
\frac{\e^{\pi\nu/2}}{\Gamma(\i\nu)}=\e^{-\i\arg\Gamma(\i\nu)}|r(k_0)|\sqrt{\frac{\nu}{2\pi}}.\]
Denote
\[\psi:=4t\xi^2+\nu\ln(8t)+2\arg\chi(k_0)-\arg r(k_0)-\arg\Gamma(\i\nu)+\frac{\pi}{4},\]
then \eqref{2ndSummand} becomes
\[
\sqrt{\frac{\nu}{2t}}\(\e^{\i\psi}\(1-\left|\frac{q_{sol}}{2B}\right|^2\)-\(\frac{q_{sol}^2}{4B^2}\)\e^{-\i\psi}\)
\]
and hence \eqref{q1_PC} becomes
\[
\begin{split}&q(x,t) = q_{sol} + \sqrt{\frac{\nu}{2t}}\(\e^{\i\psi}\(1-\left|\frac{q_{sol}}{2B}\right|^2\)-\e^{-\i\psi}\(\frac{q_{sol}^2}{4B^2}\)\)+\mathcal{O}\(\frac{\ln t}{t}\),
\end{split}
\]
where
\[\begin{split}
&q_{sol}=\frac{(-1)^{n}2B\exp[2\i\varphi(t)-\i\arg\hat\phi(E_0)+2\i\arg\delta(E_0,\xi)]}{\cosh\left[2B(x+4At)+(2n+\frac32)\ln t+\ln\(\frac{2\pi}{n!\Gamma(n+\frac32)}\cdot\frac{|\delta(E_0)|^2(16B^2)^{2n+\frac32}}{|\hat\phi(E_0)|\sqrt{2B}}\)\right]},
\\
&\psi:=4t\xi^2+\nu\ln(8t)+2\arg\chi(k_0)-\arg r(k_0)-\arg\Gamma(\i\nu)+\frac{\pi}{4}.
\end{split}
\]

\subsubsection{Second refined approximation of $M^{(3)}$}
\label{sect_sub_ref2_second}

To deal with those $\gamma$ such that $\left\{\gamma\right\}\in(\frac12,1),$ we introduce another refined approximation of $M^{(3)}.$ Namely, by following a similar strategy as in the previous section we  now "remove" the $\frac{1}{\zeta}$ term in the $(1,2)$ entry in the asymptotics
(\ref{Las2}) of $L,$ and in the $(2,1)$ entry in asymptotics (\ref{Lasd2}) of $L_d$ at $\infty,$
$$\begin{pmatrix}
   1 & \frac{R_2}{\zeta} \\ 0 & 1
  \end{pmatrix}L =
\begin{pmatrix}
 1+\mathcal{O}(\frac{1}{\zeta}) & \mathcal{O}(\frac{1}{\zeta^2}) \\
\mathcal{O}(\frac{1}{\zeta}) & 1 + \mathcal{O}(\frac{1}{\zeta})
\end{pmatrix},
\qquad
\begin{pmatrix}
   1 & 0 \\ \frac{R_2}{\zeta_d} & 1
  \end{pmatrix}L_d =
\begin{pmatrix}
 1+\mathcal{O}(\frac{1}{\zeta_d}) & \mathcal{O}(\frac{1}{\zeta_d}) \\
\mathcal{O}(\frac{1}{\zeta_d}) & 1 + \mathcal{O}(\frac{1}{\zeta_d})
\end{pmatrix},
$$
where $$R_2=\frac{2\pi\i\ n}{n!\,\Gamma(n+\frac12)},\ n\geq0,\qquad\qquad \mbox{in particular, } \   R_2=0\quad \textrm{ for }n=0.
$$
In keeping with the previous strategy, we introduce
\begin{equation}
\label{G2}G_2=I+\frac{\A_2}{k-E_0}+\frac{\widetilde \A_2}{k-\ol E_0}
\end{equation}
 and define the refined approximation of $M^{(3)}$ by
\begin{equation}\label{Minf2}  M^{(2)}_{\infty}=\begin{cases}
G_2\(\frac{k-E}{k-\ol E}\)^{-n\sigma_3},\quad |k-E_0|>r, |k-\ol E_0|>r, |k-k_0|>\delta,
\\G_2B_u\begin{pmatrix}1&\frac{R_2}{\zeta}\\0&1\end{pmatrix} L (-\phi(k,\xi))^{\sigma_3/2}t^{\gamma\sigma_3}\e^{-\i\varphi(t)\sigma_3},\quad |k-E_0|<r,
\\
G_2B_d
\begin{pmatrix}
1&0\\\frac{R_2}{\zeta_d}&1
\end{pmatrix}L_d(-\ol\phi(\ol k,\xi))^{-\sigma_3/2}t^{-\gamma\sigma_3}\e^{-\i\varphi(t)\sigma_3},\quad
|k-\ol E_0|<r,
\\
G_2B_{PC}(k)P_{PC}(\lambda)\cdot \(\chi(k,\xi)\e^{2\i t\xi^2}(2t)^{\frac{\i\nu}{2}}\)^{-\sigma_3}, \ |k-k_0|<\delta,
\end{cases}
\end{equation}
with $B_u, B_d, B_{PC}$ as in (\ref{B_Bd_finite_n}). The error matrix $$E^{(2)}:=M^{(3)}\(M^{(2)}_{\infty}\)^{-1}$$ has no discontinuity in the disks $|k- E_0|<r,$ $|k-\ol E_0|<r,$
and on the circles $\partial C,$ $\partial C_d$ the jump is
\begin{equation}
 \label{J_E_refined_2}\begin{split}
 &J_{E^{(2)}}=G_2\begin{pmatrix}1+\mathrm{O}(t^{-1})&\mathrm{O}(t^{-2\gamma+2n-2})\\\mathrm{O}(t^{2\gamma-2n-1})&1+\mathrm{O}(t^{-1})\end{pmatrix}G_2^{-1}, k\in\partial C,
\\
&J_{E^{(2)}}=G_2\begin{pmatrix}1+\mathrm{O}(t^{-1})&\mathrm{O}(t^{2\gamma-2n-1})\\\mathrm{O}(t^{-2\gamma+2n-2})&1+\mathrm{O}(t^{-1})\end{pmatrix}G_2^{-1},
k\in\partial C_d.
\end{split}\end{equation}
We see that estimates in (\ref{J_E_refined_2}) are shifted with respect to estimates in (\ref{J_E_refined_1}), which allows to use both of them for different ranges of $\gamma.$
For $
 \left\{\gamma\right\}\in(\frac12,1)$  we take $$n=\lfloor \gamma\rfloor+1,
$$
 then $J_{E^{(2)}}$ in (\ref{J_E_refined_2}) is of the order
$$J_{E^{(2)}} = I+\mathcal{O}(t^{-1}).$$
The minimal possible value of $\gamma$ according to (\ref{gamma}), is $\frac{-1}{4},$ and in this case we take $n=0.$ Hence, we are able to handle all the cases $\gamma\geq\frac{-1}{4}$ with constructions in terms of Laguerre polynomials
with nonnegative index $n.$
The matrix  $G_2$ (\ref{G2}) is determined by the requirement that  $M^{(2)}$ is  regular at the points $k=E_0,$ $k=\ol E_0$: similar arguments lead to
$$\A_2=\begin{pmatrix}
       0 &  b_2 \\ 0 & a_2
      \end{pmatrix},\quad \widetilde \A_2=
\begin{pmatrix}
 \widetilde a_2 & 0\\ \widetilde b_2 & 0
\end{pmatrix},
$$
where $a_2,b_2,\widetilde a_2,\widetilde b_2$ satisfy the system (\ref{eq_abtildeab}) with $H$, $H_{d}$ replaced by $H_2$, $H_{2,d}$, respectively, and
$a,b,\widetilde a,\widetilde b$ replaced with $a_2,b_2,\widetilde a_2,\widetilde b_2,$ respectively. Hence, $a_2,b_2,\widetilde a_2,\widetilde b_2$ are defined by (\ref{abtildeab_sol1}), (\ref{abtildeab_sol2}), where we replace
$H$, $H_{d}$ with $H_2$, $H_{2,d}.$ Here
$$H_2=R_2t^{2n-2\gamma-1}\lim\limits_{k\to E_0}\(\frac{k-E_0}{(k-\ol E_0)z}\)^{-2n}(-\phi)^{-1}\e^{2\i\varphi(t)}=
\frac{R_2(-1)^{n+1}t^{2n-2\gamma-1}(2Bz')^{2n+\frac12}\e^{2\i\varphi(t)}}{\hat\phi(E_0)\ \delta(E_0,\xi)^{-2}\sqrt{2B}},
$$
$$H_{2,d}=R_2t^{2n-2\gamma-1}\lim\limits_{k\to\ol E_0}\(\frac{z_d(k-E_0)}{k-\ol E_0}\)^{2n}(-\ol\phi_d(\ol k))^{-1}\e^{-2\i\varphi(t)}
=\frac{R_2(-1)^{n+1}t^{2n-2\gamma-1}(2Bz_d')^{2n+\frac12}}{\ol{\hat\phi(E_0)}\ [\ol{\delta(E_0,\xi)}]^{-2}\sqrt{2B}\ \e^{2\i\varphi(t)}},
$$
$$\ol H_{2,d}=-H_2.$$
Let us observe for future reference that
$
\frac{H_2}{2Bz'} $ equals $\frac{2Bz'}{H},$
if we change in the latter $n\to n-1.$
As a consequence, we have that
$b_2$ coincides with $\tilde b_1$ and $\tilde a_2+\i B$ coincides with $a_1-\i B$, if we change in the latter $n\to n-1.$
Coming back to the computation of $q_{\infty}^{(2)},$ we see that again $G_2$ does not contribute into asymptotics (\ref{J_E_refined_2}), and \begin{equation}\label{q_inf_2}
\begin{split}q^{(2)}_{\infty} =& 2i\lim\limits_{k\to\infty}kM^{(2)}_{\infty, 12}=-2i\lim\limits_{k\to\infty}\ol{kM^{(2)}_{\infty, 21}}=2ib_2=-2i\ol{\widetilde b_2}=
\frac{-4\i B\cdot\frac{H_2}{|H_2|}\cdot\frac{\ol z'}{|z'|}}
{\frac{2B|z'|}{|H|} + \frac{|H|}{2B|z'|}}
\\
&=
\frac{2B(-1)^{n-1}\e^{2\i\varphi(t)}\e^{-\i\arg\hat\phi(E_0)}
\e^{2\i\arg\delta(E_0,\xi)}
\e^{\i(2n-\frac12)\arg(z')}}
{\cosh\left[2B(x+4At)+(2n-\frac12)\ln t+\ln\frac{2\pi n\ |\delta(E_0,\xi)|^2}{n!\Gamma(n+\frac12)|\hat\phi(E_0)|\sqrt{2B}}+(2n-\frac12)\ln(2B|8B+\frac{2\i\rho\ln t}{Bt}|)\right]}
\\
&
=\frac{2B(-1)^{n-1}\e^{2\i\varphi(t)}\e^{-\i\arg\hat\phi(E_0)}
\e^{2\i\arg\delta(E_0,\xi)}
}
{\cosh\left[2B(x+4At)+(2n-\frac12)\ln t+\ln\frac{2\pi n (16B^2)^{2n-\frac12}\ |\delta(E_0,\xi)|^2}{n!\Gamma(n+\frac12)|\hat\phi(E_0)|\sqrt{2B}}\right]}+\mathcal{O}\(\frac{\ln t}{t}\)
\end{split}\end{equation}
for $n\geq 1,$ and $q_{\infty}^{(2)}=0$ for $n=0.$
We see that (\ref{q_inf_2}) coincides with (\ref{q_inf_1}), if we replace in the latest $n$ with $n-1.$
Furthermore,
$$q(x,t)=q^{(2)}_{\infty} + q^{(2)}_{err},\qquad \mbox{ where }
\quad
q^{(2)}_{err}:=2\i\lim\limits_{k\to\infty}k(E^{(2)})_{12},
$$ and it follows from $J_{E^{(2)}}=I+\mathcal{O}(t^{-1/2})$ that  $$q^{(2)}_{err}=\mathcal{O}(t^{-1/2}).$$
Computing $q^{(2)}_{err}$ in the same way as in Subsection \ref{sect_qerr},
we find
\[\begin{split}  q_{err}^{(2)}(&x,t) = 2\i \lim\limits_{k\to\infty}k(E^{(2)}-\mathbf{1})_{12}
=\frac{1}{\pi}\int\limits_{\partial D}(J_{E^{(2)}}(s)-\mathbf{1})_{12}\ \d s+\mathcal{O}(t^{-1})=
\\
=&
\frac{\sqrt{\pi}\e^{\pi\nu/2}}{B^2\sqrt{t}}
\(
\frac{\e^{-3\pi\i/4}\ \chi^2(k_0)\e^{4\i t\xi^2}(8t)^{\i\nu}(\tilde a_2+\i B)^2}{r_0\cdot\Gamma(\i\nu)}
+
\frac{\e^{-\pi\i/4}\ b_2^2}{\chi^2(k_0)\ \e^{4\i t\xi^2}\ (8t)^{\i\nu}\ \ol{r_0}\ \Gamma(-\i\nu)}\)+\mathcal{O}\(\frac{\ln t}{t}\).
\end{split}\]
and due to the previous observation the latter equals \eqref{q1_PC}, where we substitute $n$ with $n-1.$
Let us notice, that all the estimates in subsections \ref{sect_sub_ref1_first}, \ref{sect_sub_ref2_second} are uniform with respect to finite shifts of
the parameter $\rho.$
We thus obtain the statements of Theorems \ref{teor_main}, \ref{teor_refined}.
%-------------------------------------------------------------%
%-------------------------------------------------------------%


\begin{thebibliography}{99}

\bibitem{AS}
M.J.~Ablowitz and H.~Segur, \emph{Solitons and the Inverse
Scattering Transform} (SIAM, Philadelphia, 1981).

\bibitem{BBEIM} E. D. Belokolos, A. I. Bobenko, V. Z. Enol'skii, A. R. Its, and  V.~B.~Matveev,
  \emph{Algebro-geometric approach to nonlinear integrable equations}.	Springer Series
  in Nonlinear Dynamics, Springer-Verlag, Berlin (1994).

\bibitem{BM18} Bertola, M., Minakov, A. Laguerre polynomials and transitional asymptotics of the modified Korteweg--de Vries equation for step-like initial data. Anal.Math.Phys. (2018). https://doi.org/10.1007/s13324-018-0273-1

\bibitem{Bik95} R.F.Bikbaev, Complex Whitham deformations in the
problems with ``integrable instability'', {\it Theor. Math.
Phys.},104, no.3 (1995), 1078--1097.

\bibitem{Bik97} R.F.Bikbaev, Saturation of modulational instability
via complex Whitham deformations: Schr\"odinger equation,
{\it J. Math. Sciences} 85 (1997), no.1, 1596--1604.

\bibitem{BIOMAT}  Gino Biondini and Dionyssios Mantzavinos, Long-Time Asymptotics for the Focusing Nonlinear Schr\"odinger Equation with Nonzero Boundary Conditions at Infinity and Asymptotic Stage of Modulational Instability,
    {\it Communications on Pure and Applied Mathematics} Volume70, Issue12 (2017),  2300-2365

\bibitem{BK07} A. Boutet de Monvel and V.P. Kotlyarov,
The focusing nonlinear Schr\"odinger equation on the
quarter plane with time-periodic boundary condition: a
Riemann--Hilbert approach, {\it J. Inst. Math. Jussieu} 6
(2007), no. 4, 579--611.

\bibitem{BIK09}A. Boutet de Monvel, A.R. Its, and V.P. Kotlyarov,
Long-time asymptotics for the focusing NLS equation with
time-periodic boundary condition on the half-line, {\it
Comm. Math. Phys.} 290 (2009), no. 2, 479--522.

\bibitem{BKS09}A. Boutet de Monvel, V.P. Kotlyarov, and D. Shepelsky,
Decaying long-time asymptotics for the focusing NLS
equation with periodic boundary condition, {\it  Int. Math.
Res. Notices} 2009, no. 3, 547--577.

\bibitem{BKS11} A. Boutet de Monvel, V.P. Kotlyarov, and D. Shepelsky, Focusing NLS equation: Long-time Dynamics of the
Step-like Initial Data// International Mathematics Research Notices, Vol. 2011 (2011), no.7, 1613-1653.

\bibitem{BV} R. Buckingham and S. Venakides,
Long-time asymptotics of the nonlinear Schr\"odinger
equation shock problem, {\it Comm. Pure Appl. Math.} 60
(2007), no. 9, 1349--1414.

\bibitem{DZ92}  P.Deift and X.Zhou
A steepest descent method for oscillatory
Riemann\textendash Hilbert problems, {\it Bull. Amer. Math.
Soc. (N.S.)} 26 (1992), no. 1, 119--123.

\bibitem{DIZ93}  P.Deift,  A.Its, and  X.Zhou,
Long-time asymptotics for integrable nonlinear wave
equations, in: Important developments in soliton theory,
Springer Ser. Nonlinear Dynam., 1993, 181--204.

\bibitem{DZ93}  P.Deift and X.Zhou,
A steepest descent method for oscillatory
Riemann\textendash Hilbert problems. Asymptotics for the
MKdV equation, {\it Ann. of Math.} 137 (1993) no. 2,
295--368.

\bibitem{DVZ94}
P. Deift, S. Venakides,  and X. Zhou, The collisionless
shock region for the long-time behavior of solutions of the
KdV equation, {\it Commun. Pure Appl. Math.}, {\bf 47}
(1994), 199--206.

\bibitem{DIZ97}
P.A. Deift, A.R. Its, and X. Zhou, A Riemann-Hilbert
approach to asymptotic problems arising in the theory of
random matrix models, and also in the theory of integrable
statistical mechanics, {\it Ann.\ of Math.}, {\bf 146}
(1997), 149--235.

\bibitem{DVZ97}
P. Deift, S. Venakides,  and X. Zhou, New results in small
dispersion KdV by an extension of the steepest-descent
method for Riemann--Hilbert problems, {\it Int. Math. Res.
Notices}, {\bf 6} (1997), 285--299.

\bibitem{DKMVZ99}
P. Deift, T. Kricherbauer, K. T.-R. McLaughlin, S.
Venakides, and X. Zhou, Uniform asymptotics for polynomials
orthogonal with respect to varying exponential weights and
applications to universality questions in random matrix
theory, {\it Commun. Pure Appl. Math.}, {\bf 52} (1999),
1335--1425.

\bibitem{Deift99} Deift~P. Orthogonal polynomials and random
matrices: a Rie\-mann --- Hilbert approach / P.~Deift. --- NY.:
CIMS, 1999.

\bibitem{DM11} P Dubard and V B Matveev, Multi-rogue waves solutions to the focusing NLS equation and the KP-I equation
\emph{Natural Hazards and Earth System Sciences} 11 (3), (2011), 667-672

\bibitem{DM13} P Dubard and V B Matveev, Multi-rogue waves solutions: from the NLS to the KP-I equation,
\emph{Nonlinearity}, 26 (2013),  R93--R125;

\bibitem{EGKT13}
I. Egorova, Z. Gladka, V. P. Kotlyarov,  and G. Teschl,
Long-time asymptotics for the Korteweg-de Vries
equation with step-like initial data.
	\textit{Nonlinearity} \textbf{26} (2013), 1839--1864.
%	\MR{3071444}

\bibitem{EH16} G.A. El and M.A.Hoefer,  Dispersive shock waves and modulation theory, Physica D: Nonlinear Phenomena, 333, (2016)  11--65. doi:10.1016/j.physd.2016.04.006

\bibitem{FT86}
 L. D. Faddeev  and L. A. Takhtajan,
\textit{Hamiltonian methods in the theory of solitons}.
Springer Series in Soviet Mathematics. Springer-Verlag, Berlin (1987).


\bibitem{GP}
A. V. Gurevich and L. P. Pitaevskii, {\it  Zh. Eksp. Teor.
Fiz.} , {\bf 65}  (1973), 590--604.

\bibitem{I81}
A. R. Its, Asymptotic behavior of the solutions to the
nonlinear Schr\"odinger equation, and isomonodromic
deformations of systems of linear differential equations,
{\it Soviet. Math. Dokl.} 24 (1981), 452--456.

\bibitem{KVSPT} M Kamalian, A Vasylchenkova, D Shepelsky, JE Prilepsky, SK Turitsyn, Signal Modulation and Processing in Nonlinear Fibre Channels by Employing the Riemann--Hilbert Problem,
    \emph{Journal of Lightwave Technology} 36 (24), (2018), 5714-5727

\bibitem{Kh76}
E.Ya. Khruslov, Asymptotic behavior of the solution of the
Cauchy problem for the Korteweg--de Vries equation with
steplike initial data, {\it Math. USSR-Sb.}  28 (1976), no.
2, 229--248.

\bibitem{KK86}
V.P. Kotlyarov and  E.Ya. Khruslov,  Solitons of the
nonlinear Schr\"odinger equation, which are generated by
the continuous spectrum, {\it  Teoret. Mat. Fiz.}  68
(1986), no. 2, 172--186.

\bibitem{K91}
V.P. Kotlyarov, Influence of double continuous spectrum of the Dirac operator on the nonlinear Schr\"odinger equation asymptotic solitons. {\it Matem. zametki}, v.49, N 2, 1991, pp.84-94;  Math. Notes, v.49, N 1-2, 1991, pp.172-180.

\bibitem{KM10}
V. P. Kotlyarov and A. Minakov,
Riemann--Hilbert problem to the modified Korteveg-de
Vries equation: Long-time dynamics of the steplike initial
data. \textit{J. Math. Phys.} \textbf{51} (2010), 093506.

\bibitem {KM12} V.~P.~Kotlyarov and A.~A.~Minakov  Step-initial func\-tion to the mkdv equation: hyper-elliptic
long-time asymptotics of the solution. \emph{ Journal of mathematical physics, analysis,
geometry}   Vol.~8, 1,  (2012), 38--62.

\bibitem{KS17} V. Kotlyarov, and D. Shepelsky, Planar unimodular Baker-Akhiezer functionfor the nonlinear Schr\"odinger equation// {\it Annals of Mathematical Sciences and Applications}, 2017, Volume 2, Number 2, 343-384

\bibitem{LitvinchukSpitkovskii} G.~Litvinchuk and I.~Spitkovskii, Factorization of
measurable matrix func\-tions / --- Basel: Birkhauser Verlag, 1987. --- 51 p.

\bibitem{MS16} V B Matveev, A O Smirnov, Solutions of the Ablowitz--Kaup--Newell--Segur hierarchy equations of the rogue wave type: A unified approach, \emph{Theoretical and Mathematical Physics} 186 (2), (2016), 156-182

\bibitem{MS18} V B Matveev, A O Smirnov, AKNS and NLS hierarchies, MRW solutions, Pn breathers, and beyond,
\emph{Journal of Mathematical Physics} 59 (9), (2018), 091419

\bibitem{N05}V.Yu.Novokshenov Temporal asymptotics for soliton
equations in problems with step initial conditions. {\it
J. Math. Sciences}, 125  (2005),  no.5, 717--747.

\bibitem{VPSC} A Vasylchenkova, JE Prilepsky, D Shepelsky, A Chattopadhyay, Direct nonlinear Fourier transform algorithms for the computation of solitonic spectra in focusing nonlinear Schr\v{o}dinger equation, \emph{Communications in Nonlinear Science and Numerical Simulation} 68, (2019), 347-371.

\bibitem{ZS}
V.E.~Zakharov and A.B.~Shabat, An exact theory of
two-dimensional self-focusing and one-dimensional
automodulation of waves in a nonlinear medium, \emph{Soviet
Phys.\ JETP} \textbf{34} (1972), 62--78.

\bibitem{Zhou2} Zhou~X. The Riemann-Hilbert prob\-lem and inverse scattering // SIAM J. Math. Anal.
--- 1989. --- Vol.~20. --- P.~966--986.


\end{thebibliography}
\end{document}